\pdfoutput=1
\documentclass[acmsmall,10pt,screen]{acmart}
\settopmatter{printfolios=true,printccs=true,printacmref=true}



\acmJournal{PACMPL}
\acmNumber{POPL} 
\acmYear{2024}



\bibliographystyle{ACM-Reference-Format}

\citestyle{acmauthoryear}   

\usepackage[utf8]{inputenc} 
\usepackage{amsmath, amsthm}
\usepackage{thmtools}
\usepackage{thm-restate}
\usepackage[english]{babel}
\usepackage{pdfpages}
\usepackage{stmaryrd}
\usepackage{graphicx}
\usepackage{subcaption}
\usepackage{caption}
\usepackage{longtable}
\usepackage{booktabs}
\usepackage{tabu}
\usepackage[referable]{threeparttablex}
\usepackage{varwidth}
\usepackage{multirow}
\usepackage{wrapfig}
\usepackage{listings}
\usepackage{microtype}
\usepackage{stackengine}
\usepackage{csquotes}
\usepackage{circuitikz}
\tikzset{>=stealth}
\ctikzset{tripoles/american and port/width=0.8, tripoles/american and port/height=.65}
\ctikzset{tripoles/american or port/width=0.8, tripoles/american or port/height=.65}

\usepackage{enumitem} 

\usepackage{pifont}

\usepackage[ruled, linesnumbered, noend]{algorithm2e}
\usepackage{parskip}
\usepackage{multicol}

\usepackage{xparse}

\usepackage{thm-restate}
\usepackage[capitalise]{cleveref}

\crefname{figure}{Figure}{Figure}
\crefformat{footnote}{#2\footnotemark[#1]#3}
\usepackage{tikz}
\usepackage{comment}
\usepackage[normalem]{ulem}
\usepackage{todonotes}
\usepackage{tikz}
\usetikzlibrary{positioning,arrows,automata,shapes,decorations,decorations.markings,calc, matrix,decorations.pathmorphing, patterns,backgrounds}

\usepackage{bm}
\usepackage{pgfplots}
\usepackage{pbox}

\usepackage{appendix}
\usepackage{calc}
\usepackage{fp}
\usepackage{multirow}
\usepackage{pbox}

\usepackage{etoolbox}

\theoremstyle{plain}

\DeclareMathAlphabet{\mathpzc}{OT1}{pzc}{m}{it}

\newcommand{\Nats}{\mathbb{N}}

\newcommand{\ov}{\overline}

\newcommand{\Path}{\rightsquigarrow}
\newcommand{\DPath}[1]{\mathrel{\stackon[1pt]{$\rightsquigarrow$}{$\scriptscriptstyle#1$}}}
\newcommand{\DTo}[1]{\xrightarrow{#1}}

\newcommand{\Queue}{\mathcal{Q}}
\newcommand{\Stack}{\mathcal{S}}

\newcommand{\Label}{\lambda}
\newcommand{\Paragraph}[1]{\smallskip\noindent{\bf #1}}
\newcommand{\SubParagraph}[1]{\smallskip\noindent{\em #1}}
\newcommand{\Alphabet}{\Sigma}
\newcommand{\OpenParenthesis}{\alpha}
\newcommand{\CloseParenthesis}{\ov{\alpha}}

\newcommand{\CloseParenthesisBeta}{\ov{\beta}}

\newcommand{\CloseParenthesisGamma}{\ov{\gamma}}
\newcommand{\Open}{\operatorname{Open}}
\newcommand{\Close}{\operatorname{Close}}

\newcommand{\StartNonTerminal}{\mathcal{I}}

 \newcommand{\Dyck}{\mathcal{D}}

\newcommand{\DSCC}{\operatorname{DSCC}}
\newcommand{\PDSCC}{\operatorname{PDSCC}}
\newcommand{\DSCCRep}{\operatorname{DSCCRepr}}
\newcommand{\PDSCCRep}{\operatorname{PDSCCRepr}}
\newcommand{\Sequence}{\mathcal{S}}
\newcommand{\InsertEdge}{\operatorname{insert}}
\newcommand{\DeleteEdge}{\operatorname{delete}}
\newcommand{\OutEdges}{\textcolor{teal}{\operatorname{OutEdges}}}
\newcommand{\InEdges}{\textcolor{teal}{\operatorname{InPrimary}}}

\newcommand{\DisjointSets}{\textcolor{teal}{\operatorname{DisjointSets}}}
\newcommand{\Union}{\operatorname{Union}}
\newcommand{\Find}{\operatorname{Find}}
\newcommand{\MakeSet}{\operatorname{MakeSet}}

\newcommand{\MakePrimary}{\operatorname{MakePrimary}}

\newcommand{{\Count}}{\textcolor{teal}{\operatorname{Count}}}
\newcommand{\PrimCompDS}{\textcolor{teal}{\operatorname{PrimCompDS}}}

\newcommand{\SetClosedParenthesis}{\Sigma^C}
\newcommand{\Edges}{\textcolor{teal}{\operatorname{Edges}}}
\newcommand{\OfflineAlgo}{\operatorname{OfflineAlgo}}
\newcommand{\DynamicAlgo}{\operatorname{DynamicAlgo}}
\newcommand{\Initialization}{\operatorname{Initialization}}
\newcommand{\Fixpoint}{\operatorname{Fixpoint}}

\newcommand{\AffectedDSCCs}{\mathcal{Z}}
\newcommand{\NewRoots}{\mathcal{R}}
\newcommand{\NewIncomingRoots}{\mathcal{L}}

\SetKwProg{MyFunction}{function}{}{}
\SetKwFunction{FunctionInitialization}{{Initialization}}
\SetKwFunction{FunctionFixpoint}{{Fixpoint}}

\SetFuncSty{textnormal}
\SetNlSty{bfseries}{\color{black}}{}
\SetInd{0.4em}{0.4em}

\usetikzlibrary{arrows.meta,calc,decorations.markings,math,arrows.meta,shapes.gates.logic.US,fit}

\preto\tabular{\setcounter{magicrownumbers}{0}}
\newcounter{magicrownumbers}

\newlist{compactenum}{enumerate}{3} 
\setlist[compactenum]{label=(\arabic*), nosep,leftmargin=*}
\crefname{compactenumi}{Item}{Items}
\crefname{compactenumii}{Item}{Items}

\newlist{compactitem}{itemize}{3} 
\setlist[compactitem]{label=\textbullet, nosep,leftmargin=*}
\crefname{compactitemi}{Item}{Items}

\newlist{compactdesc}{description}{3} 
\setlist[compactdesc]{nosep,leftmargin=1em}
\crefname{compactdesci}{Item}{Items}

\setlength{\tabcolsep}{2.4pt}

\pgfdeclarelayer{bg}    
\pgfsetlayers{bg,main}  

\definecolor{mybluecolor}{rgb}{0.2549019607843137, 0.2117647058823529, 0.9823529411764706}
\def \darkred {black!20!red}
\def \darkgreen {black!40!green}

\def \darkorange{black!10!orange}
\SetKw{Continue}{continue}

\def\ColorAlpha{\darkorange}
\def\ColorBeta{\darkgreen}
\def\ColorGamma{olive}

\SetCommentSty{mycommfont}

\makeatletter
\newcommand*{\centerfloat}{%
\parindent \z@
\leftskip \z@ \@plus 1fil \@minus \textwidth
\rightskip\leftskip
\parfillskip \z@skip}
\makeatother

\definecolor{mGreen}{rgb}{0,0.6,0}
\definecolor{mGray}{rgb}{0.5,0.5,0.5}
\definecolor{mPurple}{rgb}{0.58,0,0.82}
\definecolor{backgroundColour}{rgb}{0.95,0.95,0.92}

\lstdefinestyle{CStyle}{
tabsize = 2, 
showstringspaces = false, 
backgroundcolor=\color{backgroundColour},   
numbers = left, 
commentstyle = \color{green}, 
keywordstyle = \color{blue}, 
stringstyle = \color{red}, 
rulecolor = \color{black}, 
basicstyle = \small \ttfamily , 
breaklines = true, 
numberstyle = \tiny,
language=Java
}

\newcounter{listtotal}\newcounter{listcntr}%
\NewDocumentCommand{\names}{o}{%
  \setcounter{listtotal}{0}\setcounter{listcntr}{0}%
  \renewcommand*{\do}[1]{\stepcounter{listtotal}}%
  \expandafter\docsvlist\expandafter{\namesarray}%
  \IfNoValueTF{#1}
    {\namesarray}
    {
     \renewcommand*{\do}[1]{\stepcounter{listcntr}\ifnum\value{listcntr}=#1\relax##1\fi}%
     \expandafter\docsvlist\expandafter{\namesarray}}%
}


\makeatletter
\g@addto@macro\bfseries{\boldmath}
\makeatother
                
\setlength{\parskip}{0.5em}
\sloppy
\begin{document}

\title{On-The-Fly Static Analysis via Dynamic Bidirected Dyck Reachability}

\author{Shankaranarayanan Krishna}
\orcid{0000-0003-0925-398X}
\affiliation{
  \institution{IIT Bombay}            
  \country{India}                    
}
\email{krishnas@cse.iitb.ac.in} 

\author{Aniket Lal}
\orcid{}
\affiliation{
  \institution{IIT Bombay}            
  \country{India}                    
}
\email{aniketlal@cse.iitb.ac.in} 

\author{Andreas Pavlogiannis}
\orcid{0000-0002-8943-0722}             
\affiliation{
\institution{Aarhus University}            
\streetaddress{Aabogade 34}
\city{Aarhus}
\postcode{8200}
\country{Denmark}                    
}
\email{pavlogiannis@cs.au.dk}          

\author{Omkar Tuppe}
\orcid{}
\affiliation{
  \institution{IIT Bombay}            
  \country{India}                    
}
\email{omkarvtuppe@cse.iitb.ac.in}

\begin{abstract}
Dyck reachability is a principled, graph-based formulation of a plethora of static analyses.
Bidirected graphs are used for capturing dataflow through mutable heap data, and are usual formalisms of demand-driven points-to and alias analyses.
The best (offline) algorithm runs in $O(m+n\cdot \alpha(n))$ time, where $n$ is the number of nodes and $m$ is the number of edges in the flow graph, which becomes $O(n^2)$ in the worst case.

In the everyday practice of program analysis, the analyzed code is subject to continuous change, with source code being added and removed.
On-the-fly static analysis under such continuous updates  gives rise to \emph{dynamic Dyck reachability}, where reachability queries run on a dynamically changing graph, following program updates.
Naturally, executing the \emph{offline} algorithm in this \emph{online} setting is inadequate,
as the time required to process a single update is prohibitively large.

In this work we develop a novel dynamic algorithm for bidirected Dyck reachability that has $O(n\cdot \alpha(n))$ worst-case performance per update, thus beating the $O(n^2)$ bound, and is also optimal in certain settings.
We also implement our algorithm and evaluate its performance on on-the-fly data-dependence and alias analyses, and compare it with two best known alternatives, namely
(i)~the optimal offline algorithm, and
(ii)~a fully dynamic Datalog solver.
Our experiments show that our dynamic algorithm is consistently, and by far, the top performing algorithm,
exhibiting speedups in the order of 1000X.
The running time of each update is almost always unnoticeable to the human eye, making it ideal for the on-the-fly analysis setting.
\end{abstract}

\begin{CCSXML}
<ccs2012>
<concept>
<concept_id>10011007.10011074.10011099</concept_id>
<concept_desc>Software and its engineering~Software verification and validation</concept_desc>
<concept_significance>500</concept_significance>
</concept>
<concept>
<concept_id>10003752.10010070</concept_id>
<concept_desc>Theory of computation~Theory and algorithms for application domains</concept_desc>
<concept_significance>300</concept_significance>
</concept>
<concept>
<concept_id>10003752.10010124.10010138.10010143</concept_id>
<concept_desc>Theory of computation~Program analysis</concept_desc>
<concept_significance>300</concept_significance>
</concept>
</ccs2012>
\end{CCSXML}

\ccsdesc[500]{Software and its engineering~Software verification and validation}
\ccsdesc[300]{Theory of computation~Theory and algorithms for application domains}
\ccsdesc[300]{Theory of computation~Program analysis}

\keywords{CFL reachability, static analysis, dynamic algorithms} 

\maketitle

\section{Introduction}\label{sec:intro}

\Paragraph{Dyck reachability in static analysis.}
Dyck reachability is an elegant and widespread graph-based formulation of a plethora of static analyses.
The reachability problem is phrased on labeled directed graphs $G=(V,E)$, where $V$ is a set of nodes and $E$ is a set of edges, labeled with opening and closing parenthesis symbols.
Given two nodes $u$ and $v$, the task is to determine whether $v$ is reachable from $u$ by means of a path $P$, such that the sequence of labels of the edges of $P$ produce a parenthesis string that is properly balanced~\cite{Yannakakis90}.
In the static analysis domain, $G$ serves as the program model, where nodes represent basic program constructs such as program variables, pointers, or statements, and edges capture data flow between these constructs.
Effectively, program executions carrying data flow between distant program points are captured by paths in $G$.
Naturally, as $G$ is an approximate model of the program, it may have paths that do not correspond to any valid program execution, leading to spurious data flow and thus to false positive warnings.
In order to increase the precision of the analysis, parenthesis symbols are used to model certain restrictions in the data flow, such as sensitivity to calling contexts and field accesses~\cite{Reps97,Reps00,Spath2019}.
Focusing on Dyck reachability on $G$ (as opposed to plain reachability) thus filters out paths that are guaranteed to not correspond to valid program executions, thereby increasing the precision of the analysis.

As a modeling formalism, Dyck reachability strikes a remarkable balance between simplicity and expressiveness, and has been used to drive a wide range of static analyses, such as interprocedural data-flow analysis~\cite{Reps95}, slicing~\cite{Reps94}, 
shape analysis~\cite{Reps1995b}, impact analysis~\cite{Arnold96}, type-based flow analysis~\cite{Rehof01}, taint analysis~\cite{Huang2015},
data-dependence analysis~\cite{Tang2017},
alias/points-to analysis~\cite{Lhotak06,Zheng2008,Xu09}, to only name a few.
In practice, widely-used analysis tools, such as Wala~\cite{Wala} and Soot~\cite{Bodden12} equip Dyck-reachability techniques to perform the analysis.
From a complexity perspective, answering a single Dyck-reachability query ``is $v$ reachable from $u$?'' takes $O(n^3)$ time, where $n$ is the number of nodes of $G$, and although some slight improvements are possible~\cite{Chaudhuri2008}, this cubic dependency is often considered prohibitive~\cite{Heintze97}.

\Paragraph{Bidirectedness.}
One important and practically motivated variant of Dyck reachability is that of \emph{bidirectedness}. 
Intuitively, a bidirected graph $G$ has the property that every (directed) edge is present in both directions with complementary labels:~an edge $x\DTo{(}y$ is present in $G$  if and only if $y\DTo{)}x$ is also present in $G$.
Bidirectedness makes reachability an \emph{equivalence relation}:~if $v$ is reachable from $u$ via a properly balanced path, the same path backwards (i.e., from $v$ to $u$) is also properly balanced, thereby witnessing the reachability of $u$ from $v$. Thus, the nodes of $G$ can be partitioned into equivalence classes of inter-reachable nodes, called \emph{Dyck Strongly Connected Components} (or \emph{DSCCs}, for short). 
For example, in \cref{fig:illustrative_example} the bidirected graph $G$ (top right) corresponds to the input program,
and nodes $c$ and $d$ are inter-reachable via the paths $c\DTo{L}f\DTo{\ov{L}} d$ and $d\DTo{L}f\DTo{\ov{L}} c$, and thus they belong to the same DSCC.

From a \emph{semantics} perspective, bidirectedness has been a standard approach to handle mutable heap data~\cite{Sridharan2006,Xu09,Lu2019,Zhang2017}
-- though it can sometimes be relaxed for read-only accesses~\cite{Milanova2020},
and the de-facto formulation of demand-driven points-to analyses~\cite{Sridharan2005,Zheng2008,Shang2012,Yan11,Vedurada2019}.
From an \emph{algorithmic} perspective, bidirectedness allows Dyck reachability to be computed more efficiently~\cite{Yuan09,Zhang13}, with the fastest algorithm running in $O(m+n\cdot \alpha(n))$ time~\cite{Chatterjee18}, where $n$ is the number of nodes, $m$ is the number of edges, and $\alpha(n)$ is the inverse Ackermann function\footnote{For all practical purposes, $\alpha(n)\leq 5$, i.e., the function behaves as a constant.}.
Compared to the cubic bound of directed Dyck reachability, bidirectedness thus allows for a large speedup in the order of $n^2$.
Due to this algorithmic benefit, bidirectedness also serves as an overapproximation to directed reachability, and has been suggested as a mechanism to speed up challenging static analysis and verification problems~\cite{Li2020,Ganardi2022,Ganardi2022a}.

\input{figures/illustrative-example}
\Paragraph{On-the-fly static analysis.}
In the everyday practice of program analysis, the analyzed code base is not static but rather subject to perpetual change; source-code lines are added and removed following patches, new modules and libraries, and bug fixes.
In an even more demanding setting, lightweight static analyzers are embedded inside the IDE and run on-the-fly, so as to enable faster, more robust and more productive development.
As has been observed in the literature, a major computational challenge in static analyses is in their apparent inability to adapt to these continuous changes efficiently~\cite{Zadeck1984,Arzt2014}.
Existing works approach this challenge mostly by tweaking the offline analyses and adapting them to the dynamic setting.
Typically, such adaptations only focus on incremental updates (i.e., only the addition of code lines) and are  based on some form of caching~\cite{Burke1990,Arzt2014,Szabo2016,Liu2019,Pacak2020}.
Decremental updates (i.e., the deletion of code lines) is much more difficult, intuitively, due to the fixpoint nature of static analyses.
As such, analyses that fully follow all code changes (both incremental and decremental) thus far offer no efficiency guarantees.
This paper tackles this challenge of on-the-fly, full dynamic analyses formulated as bidirected Dyck reachability in a rigorous and provably efficient way.
\cref{fig:illustrative_example} illustrates the use of dynamic bidirected Dyck reachability for on-the-fly field-sensitive alias analysis on a small example.

\Paragraph{Our contributions.}
We consider the problem of maintaining the DSCCs of a dynamic graph $G$, representing the on-the-fly static analysis of a program, where source code changes are both incremental and decremental.
In particular, an incremental update inserts an edge in $G$, while a decremental update deletes edges from $G$, and the task is to restore the DSCCs of $G$ after each such update.
To this end, we develop an algorithm $\DynamicAlgo$, with guarantees as stated in the following theorem.

\begin{restatable}{theorem}{thmmaintheorem}\label{thm:main_theorem}
Given a bidirected graph $G$ of $n$ nodes, $\DynamicAlgo$ correctly maintains the DSCCs of $G$ across edge insertions and edge deletions, and uses at most $O(n\cdot \alpha(n))$ time for each update.
\end{restatable}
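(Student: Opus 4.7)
The plan is to establish the theorem through a data-structural approach centered on maintaining a sparse certificate of bidirected Dyck reachability. The core structure is a union-find $\DisjointSets$ whose equivalence classes coincide with the current DSCCs of $G$, augmented by a \emph{primary subgraph} $T \subseteq E$ that spans each DSCC with $O(n)$ edges in total, plus auxiliary incoming/outgoing edge lists indexed by DSCC representatives. The key invariant I would maintain is that every pair of nodes in a DSCC is connected by a balanced path consisting only of edges of $T$, so that $\Find$ answers DSCC-membership queries correctly without scanning the full graph.

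For an insertion of $u \DTo{a} v$, if $\Find(u) = \Find(v)$ already, the DSCC structure is unchanged and only the auxiliary lists need refreshing. Otherwise, the new edge may trigger a cascade of merges: any two DSCCs that become linked via complementary-labeled edges in both directions must be unified, and each unification can expose further merging opportunities on adjacent DSCCs. I would process these via a worklist issuing $\Union$ operations and extending $T$ on the fly. Because each merge strictly decreases the number of DSCCs, the total number of $\Union$ calls per insertion is $O(n)$, and each costs $O(\alpha(n))$ amortized, yielding the desired bound for this case.

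For a deletion of edge $e$, two subcases arise. If $e \notin T$, then $T$ still certifies all current DSCCs and no structural change is required beyond updating auxiliary lists. If $e \in T$, the DSCC containing $e$'s endpoints may split, and I would restructure the region locally by invoking $\BreakToPrimary$ to detach the affected parts of $T$, then re-apply the offline bidirected Dyck-reachability fixpoint of~\cite{Chatterjee18} restricted to the affected DSCC using only its primary edges as input, re-establishing the primary subgraph via $\MakePrimary$ as new membership is certified. Since the affected DSCC has at most $n$ nodes and its portion of $T$ contains $O(n)$ edges, the restricted recomputation touches $O(n)$ nodes and performs $O(n)$ union-find operations, matching the claimed bound.

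The principal obstacle, as I anticipate it, lies in the deletion case and has two intertwined components. First, I must justify that running the offline algorithm on the primary subgraph alone (rather than on the full induced edge set of the affected DSCC) still recovers the correct DSCC partition after the split; this reduces to arguing that $T$ contains a spanning balanced-path certificate even after any single edge removal, provided restructuring via $\BreakToPrimary$ and $\MakePrimary$ is coordinated carefully with the split. Second, I must ensure that the size invariant $|T| = O(n)$ is preserved across arbitrary interleavings of insertions and deletions, so that every deletion incurs only $O(n \cdot \alpha(n))$ work; this requires an amortized analysis charging each promotion or demotion of edges into or out of $T$ against earlier operations, in the spirit of the amortization underlying the optimal offline bound, but adapted to accommodate the non-monotone nature of the dynamic setting.
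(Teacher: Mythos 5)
There are genuine gaps here, in both the insertion and the deletion case.

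For insertions, your shortcut ``if $\Find(u)=\Find(v)$ already, the DSCC structure is unchanged'' is incorrect. Membership of $u$ and $v$ in the same DSCC is not the right trigger condition: what matters is whether the component of $u$ now has two distinct $\CloseParenthesis$-labeled outgoing edges. For example, if $\{u,v\}$ is already a DSCC (say via a common $\OpenParenthesisBeta$-predecessor) and there is an existing edge $u\DTo{\CloseParenthesis}w$ with $w\notin\DSCC(v)$, then inserting $u\DTo{\CloseParenthesis}v$ creates the balanced path $w\Path u\DTo{\CloseParenthesis}v$ and forces $\DSCC(w)$ and $\DSCC(v)$ to merge, even though $\Find(u)=\Find(v)$ held before the insertion. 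The correct test (used by the paper) is $|\Edges[\Find(u)][\CloseParenthesis]|\geq 2$, followed by a restart of the fixpoint on the component graph.

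For deletions, your plan rests on a sparse spanning certificate $T$ with two properties: (i) deleting $e\notin T$ requires no structural work, and (ii) deleting $e\in T$ can be repaired by re-running the offline fixpoint ``using only its primary edges as input.'' Property (ii) is unsound as stated: edges outside $T$ may be exactly the witnesses needed to re-certify connectivity after $e$ is removed, so restricting the recomputation to $T$ under-computes the DSCCs, while admitting all induced edges of the affected component costs $\Theta(n^2)$. Worse, repairing $T$ by searching for an ``alternative balanced path'' runs into a circularity: in bidirected Dyck reachability the witnessing path for $x_1,x_2$ may pass through a component whose very existence depends on $x_1$ and $x_2$ already being connected (this is precisely the correctness failure of the earlier dynamic algorithm the paper discusses, cf.\ the counterexample with $S_x=\{x_1,x_2\}$ and $S_y=\{y_1,y_2\}$). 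The paper avoids certificates entirely: it (a) overapproximates the affected DSCCs by a forward search along pairs of same-label edges, (b) splits them not into singletons but into \emph{primary} DSCCs, defined locally as connected components of the primal graph (two-edge balanced paths) and maintained exactly by a dynamic undirected-connectivity structure with a sparsification trick, and (c) bounds the restarted fixpoint to $O(n)$ edge traversals via the $\InEdges$ invariant that each source node is recorded at most once per label per PDSCC. None of these mechanisms appears in your proposal, and without something playing their role the claimed $O(n\cdot\alpha(n))$ worst-case bound for deletions (your amortized charging argument also does not deliver a worst-case per-update bound) and the correctness of the recomputation both remain unestablished.
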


Observe that in general this is much faster than the optimal offline algorithm of~\cite{Chatterjee18}, as the latter spends time that is at least proportional to the number of edges $m$ (which can be up to $m= \Theta(n^2)$).
On the other hand, it can be easily shown that a single update (either edge insertion or deletion) can affect $\Theta(n)$ DSCCs.
Thus $\DynamicAlgo$ is effectively optimal, at least in the natural setting where  DSCCs have to be explicitly maintained at all times.

We have implemented our dynamic algorithm and evaluated its performance on Dyck graphs that arise on data-dependence analysis and alias analysis, on standard benchmarks.
Our experiments show that our dynamic algorithm is consistently, and by far, the top performing algorithm compared to 
(i)~the optimal offline algorithm for the problem, and
(ii)~a fully dynamic Datalog solver,
arguably the two most relevant approaches to our problem setting.
In particular, our dynamic algorithm exhibits speedups that are between 100X-1000X for data-dependence analysis and 1000X for alias analysis (which is also the more demanding of the two), compared to the best alternative.
Although its theoretical worst-case complexity is linear in $n$, its average running time is barely (if at all) noticeable for all practical purposes, making it suitable for the on-the-fly analysis setting that runs continuously during development.

\Paragraph{A note on earlier approaches.}
Dynamic bidirected Dyck reachability was studied recently in~\cite{Li2022}.
To this end, that work claims a dynamic algorithm with $O(n\cdot \alpha(n))$ running time per update operation, similarly to our \cref{thm:main_theorem}.
Unfortunately, both the complexity and the correctness of~\cite{Li2022} overlook certain cases.
In particular, that algorithm exhibits $\Omega(n^2)$ running time, as opposed to the claimed (nearly) linear bound.
This quadratic behavior arises even on sparse graphs (i.e., with $m=O(n)$ edges), for which the vanilla offline algorithm runs in $O(n\cdot \alpha(n))$ time.
Hence on such graphs, processing a single line deletion in the source code is $n$ times slower than performing the whole analysis from scratch.
Moreover, that algorithm also suffers from correctness issues, as it fails to detect that certain nodes become unreachable after updates.
In such cases, the on-the-fly static analysis returns wrong results.
We provide further details in \cref{sec:mistakes}.

\Paragraph{Intuition behind our approach.}
In high-level, our approach works as follows.
First, we observe that the component graph (i.e., the graph with a single node representing each DSCC) is sparse.
Each edge insertion can then be handled by executing the offline algorithm of~\cite{Chatterjee18} not from scratch, but directly on the already constructed component graph, yielding $O(n\cdot \alpha (n))$ running time.
The main difficulty arises in edge deletions, as each deletion seemingly requires repeating all the computation from scratch, i.e., operating on the initial graph, which might be dense and thus incur $O(n^2)$ running time.
We leverage techniques from dynamic undirected connectivity, together with some sparsification ideas that are specific to our richer bidirected setting, to start the recomputation not from scratch, but from a suitable preliminary component graph, called  the \emph{primary component graph}, that is already sparse, thereby recovering the $O(n\cdot \alpha(n))$ running time.
Maintaining the primary component graph efficiently is the main technical challenge.

\section{Preliminaries}\label{sec:preliminaries}

In this section we develop general notation on labeled graphs, Dyck reachability, and fully-dynamic reachability queries.
As this is somewhat standard material, our exposition follows closely that of related works (e.g.,~\cite{Zhang13,Chatterjee18,Kjelstrom2022}).

\subsection{Dyck Reachability on Bidirected Graphs}\label{subsec:dyck_reachability}

\Paragraph{Dyck Languages.}
Hereinafter, we fix a natural number $k\in \Nats$. Let $[k]$ denote $\{1,\dots,k\}$. 
A Dyck alphabet is a set $\Alphabet=\{\alpha_i, \ov{\alpha}_i\}_{i\in [k]}$ of $k$ matched symbols, usually referred to as parentheses, where
$\Open(\Alphabet)=\{ \alpha_i \}_{i\in k}$ and $\Close(\Alphabet)=\{ \ov{\alpha}_i \}_{i\in k}$
are the sets of opening parenthesis symbols and closing parenthesis symbols of $\Alphabet$, respectively\footnote{We use Greek letters such as $\alpha, \beta$ to represent opening parenthesis symbols, and $\ov{\alpha}, \ov{\beta}$ for their matching closing parentheses.}.
The Dyck language $\Dyck$ is the set of strings over $\Alphabet^*$ produced by the following grammar with initial non-terminal $\StartNonTerminal$:
\[
\StartNonTerminal \to \StartNonTerminal ~ \StartNonTerminal ~ | ~ \OpenParenthesis_1 \StartNonTerminal\CloseParenthesis_1 ~ | ~ \cdots ~ | ~\OpenParenthesis_k \StartNonTerminal\CloseParenthesis_k ~ | ~ \epsilon
\]
For example, $\alpha_1\alpha_2\ov{\alpha}_2 \alpha_3\ov{\alpha}_3\ov{\alpha}_1\in \Dyck$, but
$\alpha_1\alpha_2 \alpha_3\ov{\alpha}_3\ov{\alpha}_1 \ov{\alpha}_2 \not \in \Dyck$.
Finally, given a string $\sigma=\gamma_1\dots \gamma_m\in \Alphabet^*$, we let 
$\ov{\sigma}=\ov{\gamma}_1\dots \ov{\gamma}_m\in \Alphabet^*$,  
where $\ov{\gamma}_i$ is a closing parenthesis symbol if $\gamma_i$ is an opening parenthesis symbol, and vice versa.
For example, if $\sigma=\alpha_1\alpha_2\ov{\alpha}_2 \alpha_3\ov{\alpha}_3\ov{\alpha}_1$ then
$\ov{\sigma}=\ov{\alpha}_1\ov{\alpha}_2\alpha_2 \ov{\alpha}_3\alpha_3\alpha_1$.

\Paragraph{Graphs and Dyck reachability.}
We consider labeled directed graphs $G=(V,E)$ where $V$ is a set of nodes and $E$ is a set of labeled edges $(u,v, l)$, where $u,v\in V$ and $l\in \Alphabet\cup\{\epsilon\}$.
For notational convenience, given an edge $e$, we will refer to its label as $\Label(e)$.
We occasionally are not interested in the label of an edge, in which case we will simply denote it with its endpoints, e.g., $e=(u,v)$.
We also adopt the pictorial notation $u\DTo{l}v$ to denote the existence of an edge $(u,v,l)$, and write $u\DTo{l}\cdot$ to denote the existence of an edge outgoing $u$ and labeled with $l$.
Given a set $X$, we sometimes write $u\DTo{l}X$ to denote that there exists some $v\in X$ such that $u\DTo{l}v$.
We also use similar notation for incoming, instead of outgoing edges (e.g., $X\DTo{l}v$).
A path is a sequence of edges $P=(u_1, v_1, l_1), \dots, (u_r, v_r, l_r)$ such that, for each $i\in[r-1]$, we have $u_{i+1}=v_i$.
The label of $P$ is $\Label(P)=l_1\dots l_r$, that is, it is the concatenation  of the labels of its edges.
A path can also be an empty sequence of edges, in which case its label is $\epsilon$.
We often write $P\colon u\Path v$ to denote a path from node $u$ to node $v$.
Naturally, we say that $v$ is reachable from $u$ if such a path exists.
Moreover, we say that $v$ is Dyck-reachable from $u$ if there exists a path $P\colon u\Path v$ such that $\Label(P)\in \Dyck$,
in which case we call $P$ a Dyck path.

In alignment to existing literature (\cite{Zhang13,Chatterjee18}), we consider that the number of parenthesis types  is constant compared to the size of $G$ (i.e., $k=O(1)$).
From a theoretical standpoint this is not an oversimplification, as any Dyck graph with arbitrary $k$ can be transformed to one with $k=O(1)$, while preserving the reachability relationships (see e.g., ~\cite{Chistikov2022}).

\Paragraph{Bidirected Graphs and Dyck SCCs.}
We call a graph $G=(V,E)$ bidirected if every edge in $E$ appears in both ways with complementary labels.
Formally, let $\ov{\epsilon}=\epsilon$, and we have:
\[
(u,v,s)\in E \qquad  \iff \qquad (v,u,\ov{s})\in E
\]
Bidirectedness turns reachability into an equivalence relation, similarly to plain undirected graphs.
To realize this, notice that every path $P\colon u\Path v$ has a reverse version $\ov{P}\colon v\Path u$ with $\Label(P)=\Label(\ov{P})$.
Hence, if $P$ witnesses the Dyck reachability of $v$ from $u$, then $\ov{P}$ witnesses the Dyck reachability of $u$ from $v$.
Naturally, the nodes of $G$ are partitioned into strongly-connected components (SCCs), which are maximal sets of pairwise Dyck-inter-reachable nodes.
We will call such sets Dyck strongly connected components, or DSCCs, for short.
Note that, in contrast to standard SCCs, paths witnessing the Dyck reachability of two nodes in the same DSCC might have to traverse nodes outside the DSCC.
For example,  in Figure  \ref{fig:illustrative_example} (top right), the paths witnessing the Dyck reachability of nodes $c$ to $d$  and $d$ to $c$ go via node $f$, which is outside the DSCC $\{c,d\}$.
Given a node $u$, we let $\DSCC(u)$ be the DSCC in which $u$ appears.

For notational convenience, hereinafter we represent bidirected graphs by only explicitly denoting their edges labeled with closing parentheses, with the understanding that the reverse edges (labeled with an opening parentheses) are also present.
In our figures, when a graph contains edges of different labels, we color-code the edges for easier visual representation.

\Paragraph{The efficiency of DSCC computation.}
In Dyck-reachability formulations of static analyses, the analysis queries are phrased as Dyck-reachability queries in the underlying graph.
As we have seen above, the nodes of bidirected graphs are partitioned into DSCCs.
Thus, reachability queries can be handled by first computing these DSCCs.
Then, given a reachability query on two nodes $u,v$, we answer that they are inter-reachable iff they are found in the same DSCC, which costs a simple lookup of constant time.
Computing Dyck reachability on general (non-bidirected) graphs takes cubic time $O(n^3)$ even for a single pair. 
On the other hand, it is known that the DSCCs of bidirected graphs can be computed efficiently.

\begin{theorem}[\cite{Chatterjee18}]\label{thm:offline}
Given a graph $G$ of $n$ nodes and $m$ edges, the DSCCs of $G$ can be computed in $O(m+n\cdot \alpha(n))$ time,
where $\alpha(n)$ is the inverse Ackermann function.
\end{theorem}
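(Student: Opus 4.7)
The plan is to reduce DSCC computation to a union-find fixpoint driven by a closure rule that is specific to the bidirected setting. The starting observation is that if an edge $w \DTo{\alpha} u$ is present, then bidirectedness gives the reverse edge $u \DTo{\overline{\alpha}} w$. Consequently, if $w_1 \DTo{\alpha} u$ and $w_2 \DTo{\alpha} v$ are edges with $w_1, w_2$ lying in the same DSCC, then $u \DTo{\overline{\alpha}} w_1 \Path w_2 \DTo{\alpha} v$ is a Dyck path, and so $u$ and $v$ lie in the same DSCC. I would first prove a saturation characterization: a partition $\mathcal{P}$ of $V$ equals the DSCC partition iff (i) it refines the identity, and (ii) for every opening symbol $\alpha$ and every class $C \in \mathcal{P}$, the set $\{u : \exists w \in C,\ w \DTo{\alpha} u\}$ is contained in a single class of $\mathcal{P}$. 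The forward direction follows from the observation above; the reverse direction follows by induction on the number of matched parenthesis pairs along a witnessing Dyck path, using bidirectedness to split the path at its outermost matched pair.

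The algorithm maintains the candidate partition in a union-find structure and enforces (ii) via a worklist. For each class representative $r$ and each $\alpha \in \Open(\Alphabet)$, it stores at most one \emph{witness} successor $\mathrm{succ}(r, \alpha)$ chosen from the $\alpha$-out-neighbors of the class. Processing an edge $w \DTo{\alpha} u$ amounts to: compute $r = \mathrm{Find}(w)$; if $\mathrm{succ}(r, \alpha)$ is undefined, set it to $u$; otherwise call $\mathrm{Union}(u, \mathrm{succ}(r, \alpha))$ and enqueue this union. Whenever two classes are unioned, for every $\alpha$ we merge their witnesses (unioning the two witness nodes if both exist, or transferring the single one otherwise), which may cascade further unions. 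The loop terminates when the worklist is empty, at which point condition (ii) of the saturation characterization holds, and correctness follows.

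For complexity, the key invariants are: every $\mathrm{Union}$ strictly decreases the number of classes, so at most $n-1$ unions occur in total; and, since $k = O(1)$, each union spawns $O(1)$ further union requests. Hence the total number of union-find operations induced by the worklist is $O(n)$. Edge processing contributes $O(m)$ $\mathrm{Find}$ calls, one per edge. Applying Tarjan's bound for a sequence of $O(n)$ unions and $O(n + m)$ finds over $n$ elements yields $O(m + n \cdot \alpha(n))$ total time, as claimed. Initialization (one $\mathrm{MakeSet}$ per node and one scan of the edges) is $O(n + m)$, absorbed into the same bound.

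The main obstacle will be the amortized accounting. A naive count of witness-propagation could charge $O(m)$ work per union in the worst case, because many edges may share the same source class. The right argument charges each edge a constant cost at the moment its witness is first installed or routed into an existing witness, and charges each cascading witness-merge to the union on which it strictly depends, so the total cascaded cost is bounded by the $O(n)$ unions rather than by $m$. A secondary subtlety is verifying that, at termination, the witness-based invariant suffices for condition (ii): if some class $C$ had two $\alpha$-successors in distinct classes, the worklist would have seen the edge that installed the second witness and would have performed the missing union, contradicting termination.
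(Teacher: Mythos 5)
Your correctness skeleton is essentially the one behind the paper's $\OfflineAlgo$ (\cref{algo:offlinealgo}, \cref{subsec:offline}) and \cite{Chatterjee18}: the bidirected merge rule, a union-find partition, and a worklist run to saturation. The single-witness-per-(class,label) bookkeeping is a legitimate variant of the paper's per-component edge lists, and your two directions (soundness of each forced union; induction on the outermost matched pair of a witnessing Dyck path) are the right ingredients. One slip: your ``saturation characterization'' is false as stated, since the one-block partition satisfies condition (ii) vacuously; what your two directions actually prove is that the DSCC partition is the \emph{finest} partition satisfying (ii), which suffices because your algorithm starts from singletons and performs only forced unions.

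The genuine gap is in the complexity argument. Tarjan's bound for a sequence of $O(n)$ unions and $O(m+n)$ finds is $O\bigl((m+n)\cdot\alpha(m+n,n)\bigr)$, not $O(m+n\cdot\alpha(n))$; the inverse-Ackermann factor on the $m$ finds does not disappear until $m/n$ is iterated-logarithmically large, so in the regime $n\leq m=O(n\cdot\alpha(n))$ your argument yields only $\Theta(m\cdot\alpha(n))$, which is asymptotically worse than the claimed $O(m+n\cdot\alpha(n))$ --- and eliminating exactly this factor is the whole point of the theorem. Nor can the accounting you sketch be repaired while the algorithm issues one $\Find$ per edge against the \emph{evolving} structure: ``charging each edge $O(1)$ when its witness is installed or routed'' is precisely the unjustified step, because installing or routing a witness requires a $\Find$, and no union-find variant provides amortized constant-time finds once finds are interleaved with unions (this is what the union-find lower bounds preclude). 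The missing idea is to do all $m$-dependent work \emph{outside} union-find: the unions forced at the node level (all $\CloseParenthesisBeta$-targets of a single node must merge) are exactly the connected components of the undirected \emph{primal} graph, which after the consecutive-neighbor sparsification has $O(m)$ edges and whose components are computable by plain DFS in $O(m+n)$ time with no $\Find$ calls; contracting them leaves at most $k=O(1)$ distinct out-edges per original node, hence $O(n)$ edge entries, and only then does the worklist/union-find fixpoint run, at cost $O(n\cdot\alpha(n))$. This is the same sparsification the present paper reuses dynamically (PDSCCs and $\MakePrimary()$), and its own accounting --- $\Fixpoint()$ costs $O(\ell\cdot\alpha(n))$ where $\ell$ is the number of stored edge entries, so one must ensure $\ell=O(n)$ before invoking it --- shows exactly why feeding $\ell=\Theta(m)$ entries and finds into the union-find, as your one-pass algorithm does, cannot deliver the stated bound.
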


\subsection{Dynamic Reachability}\label{subsec:dynamic_reachability}

\Paragraph{Dynamic reachability on bidirected graphs.}
As the program source is being developed, the underlying graph model changes shape due to the addition and removal of nodes and edges.
The goal of the fully dynamic setting is to maintain Dyck-reachability information on the fly under such changes, so that analysis queries can be performed fast.
More formally, we are processing a sequence of operations $\Sequence=(o_1,o_2,\dots, o_r)$, where each operation $o_i$ has one of the following types.
\begin{compactenum}
\item $\InsertEdge(u,v,\CloseParenthesis)$, which inserts an edge $u\DTo{\CloseParenthesis} v$ (as well as the implicit inverse edge $v\DTo{\OpenParenthesis} u$).
\item $\DeleteEdge(u,v,\CloseParenthesis)$, which deletes the edge $u\DTo{\CloseParenthesis} v$ (as well as the implicit inverse edge $v\DTo{\OpenParenthesis} u$).
\item $\DSCCRep(u)$, which returns a representative node of the DSCC of $u$ that is the same for all nodes $v\in \DSCC(u)$.
\end{compactenum}
This sequence of operations creates a sequence of graphs $(G_i=(V, E_i))_{i\in[r]}$, where $G_1$ is the initial graph, and $G_{i+1}$ is obtained from $G_i$ by performing the operation $o_i$ (naturally, if $o_i=\DSCCRep(u)$ is a query operation, $G_{i+1}=G_i$).
Note that we have kept the node set identical across all $G_i$.
Indeed, this is a standard approach for studying such problems:~we may simply assume that $V$ contains all nodes that are ever added to the graph, and instead of deleting a node, we can delete all its adjacent edges.
Finally, the reachability of $v$ from $u$ can be checked with two successive queries, i.e., testing whether $\DSCCRep(u)=\DSCCRep(v)$.

Although in general there might be a spectrum of trade offs in the time taken between operations modifying the graph  ($\InsertEdge(u,v,\CloseParenthesis)$, $\DeleteEdge(u,v,\CloseParenthesis)$) and reachability queries ($\DSCCRep(u)$), the linear upper bounds we establish here for updates means that query operations can be done in constant time $O(1)$.
This holds because, in theory, after every update operation we can easily, in $O(n)$ time (which is within the upper bound we establish for processing the update operations), retrieve the DSCC of each node to a simple lookup table that will lead to subsequent queries taking $O(1)$ time each.
In practice, of course, this approach is to be avoided, as dynamic updates cost much less than $O(n)$ time, hence building the full lookup table causes an unnecessary cost.
Moreover, even without a lookup table, queries cost $O(\alpha(n))$ time, which is constant for all practical purposes.

\Paragraph{Dynamic reachability on undirected graphs.}
Problems of dynamic reachability have been studied extensively on plain graphs (i.e., without the restriction on Dyck paths), both for directed and undirected variants.
However, the richer semantics of Dyck reachability makes the Dyck setting quite more intricate.
Our solution to the problem uses as a black box a data structure for dynamic reachability on undirected graphs.
Although there exist several results in this direction,  \cref{tab:connectivity_data_structures} shows two standard data structures that are relevant to our work.
The one developed in~\cite{Eppstein1997} offers standard worst-case guarantees for the time to perform each update (i.e., inserting/deleting an undirected edge) and query (i.e., obtaining the component of a node $u$).
The one developed in~\cite{Holm2001} offers amortized guarantees for the respective tasks.
That is, some operations might exceed the stated time bounds, but this excess is balanced in later operations, so that the average time per operation is the one stated.
As we will see later, we will be using the former technique to establish our bound in \cref{thm:main_theorem}, but the latter in our experiments.
\begin{table}
\caption{
\label{tab:connectivity_data_structures}
Classic results on dynamic undirected reachability.
}
\setlength\tabcolsep{5pt}
\begin{tabular}{cccc}
Reference & Update time & Query time & Guarantee\\
\hline
\cite{Eppstein1997} & $O(\sqrt{n})$ & $O(1)$ & Worst-case\\
\cite{Holm2001} & $O(\log^2 n)$ & $O(\log n)$ & Amortized
\end{tabular}
\end{table}

\section{Efficient Dynamic Bidirected Dyck Reachability}\label{sec:data_structure}

In this section we present our dynamic algorithm for bidirected Dyck reachability.
As this is the main technical section of our paper, we provide here an outline of its structure to guide the reader.
\begin{compactenum}
\item In \cref{subsec:offline} we present the key algorithmic intuition for solving Dyck reachability on bidirected graphs, and revisit the offline algorithm for the problem, as presented in~\cite{Chatterjee18}.
Some aspects of the offline algorithm (in particular, its fixpoint computation) will also be used later in our dynamic algorithm.
\item In \cref{subsec:intuition} we present the main technical challenges for handling dynamic updates efficiently, and the key concepts our algorithm uses for handling each challenge.
This section is filled with examples that illustrate how each concept is used, and what data structure is used to support it.
\item In \cref{subsec:main_data_structure} we present $\DynamicAlgo$ in detail, with pseudocode followed by a high-level description of each of its blocks.
The pseudocode is also heavy in comments to guide the reader.
\item In \cref{subsec:example} we give a step-by-step execution of the algorithm on the running example of  \cref{fig:illustrative_example}.
\item Finally, in \cref{subsec:analysis}, we establish the correctness and complexity of $\DynamicAlgo$.
We state its main invariants, together with some intuition around them and how they are used towards the correctness and complexity lemmas, while we delegate the proofs to \cref{sec:app_proofs}.
\end{compactenum}

\subsection{Offline Reachability}\label{subsec:offline}

We start with the optimal offline algorithm for computing bidirected Dyck reachability~\cite{Chatterjee18}, i.e., for obtaining \cref{thm:offline}.
This will allow us to build some intuition about the problem.
Moreover, our data structure for handling graph updates later will use some components of the offline algorithm.

\Paragraph{Intuitive description.}
The principle of operation of the algorithm behind \cref{thm:offline} (initially observed in~\cite{Zhang13}) is as follows.
Assume that we have computed a DSCC $S_1$ that has two nodes $u, v\in S_1$, and further, $G$ has two edges $u\DTo{\CloseParenthesis} x$ and $v\DTo{\CloseParenthesis} y$.
Here $u$ and $v$ are not necessarily distinct, meaning that possibly $u=v$.
Then we can conclude that $x$ and $y$ also belong to the same DSCC, witnessed by the path
\[
x \DTo{\OpenParenthesis} u \Path v\DTo{\CloseParenthesis} y
\]
where the intermediate path $u \Path v$ is one witnessing the reachability of $v$ from $u$ -- such a path exists since  $u$ and $v$ belong to the same DSCC.
In fact, any node $x'$ in the DSCC of $x$ can reach any node $y'$ in the DSCC of $y$, via the path
\[
x'\Path x \DTo{\OpenParenthesis} u \Path v\DTo{\CloseParenthesis} y \Path y'
\]
Thus the DSCCs of $x$ and $y$ can be merged into one DSCC $S_2$.
We can now repeat the above process  to discover Dyck paths that go through $S_2$, and so on, up to a fixpoint.
\cref{fig:principle_of_operation} illustrates this process on a small example.

\begin{figure}
\scalebox{0.85}{%
\begin{tikzpicture}[thick, >=latex, node distance=0.3cm and 1cm,
pre/.style={<-,shorten >= 1pt, shorten <=1pt,},
post/.style={->,shorten >= 1pt, shorten <=1pt,},
und/.style={very thick, draw=gray},
node/.style={circle, minimum size=4.5mm, draw=black!100, fill=white!100, thick, inner sep=0},
virt/.style={circle,draw=black!50,fill=black!20, opacity=0}]

\newcommand{\xdisposition}{4.18}
\newcommand{\ydisposition}{0}
\newcommand{\xstep}{0.65}
\newcommand{\ystep}{1.2}
\def\bend{20}

\begin{scope}[shift={(0*\xdisposition,0)}]
\node	[node]		(u)	at (0*\xstep,0*\ystep) {$u$};
\node	[node]		(z)	at (2*\xstep,0*\ystep) {$z$};
\node	[node]		(w)	at (3.25*\xstep,0*\ystep) {$w$};
\node	[node]		(x)	at (-1*\xstep,-1*\ystep) {$x$};
\node	[node]		(y)	at (1*\xstep,-1*\ystep) {$y$};
\node	[node]		(v)	at (3*\xstep,-1*\ystep) {$v$};

\draw[post, color=\ColorAlpha, ultra thick] (u) to node[left, pos=0.1] {$\CloseParenthesis$} (x);
\draw[post, color=\ColorAlpha, ultra thick] (u) to node[right, pos=0.1] {$\CloseParenthesis$} (y);
\draw[post, color=\ColorBeta] (z) to node[left, pos=0.1] {$\CloseParenthesisBeta$} (y);
\draw[post, color=\ColorBeta] (z) to node[right, pos=0.1] {$\CloseParenthesisBeta$} (v);

\draw[post, color=\ColorGamma, out=120, in=120, looseness=1.1] (x) to node[above left] {$\CloseParenthesisGamma$} (z);
\draw[post, color=\ColorGamma] (v) to node[right, pos=0.1] {$\CloseParenthesisGamma$} (w);
\end{scope}

\begin{scope}[shift={(1*\xdisposition,0)}]
\node	[node]		(u)	at (0*\xstep,0*\ystep) {$u$};
\node	[node]		(z)	at (2*\xstep,0*\ystep) {$z$};
\node	[node]		(w)	at (3.25*\xstep,0*\ystep) {$w$};
\node	[node]		(x)	at (-1*\xstep,-1*\ystep) {$x$};
\node	[node]		(y)	at (1*\xstep,-1*\ystep) {$y$};
\node	[node]		(v)	at (3*\xstep,-1*\ystep) {$v$};

\draw[post, color=\ColorAlpha] (u) to node[left, pos=0.1] {$\CloseParenthesis$} (x);
\draw[post, color=\ColorAlpha] (u) to node[right, pos=0.1] {$\CloseParenthesis$} (y);
\draw[post, color=\ColorBeta, ultra thick] (z) to node[left, pos=0.1] {$\CloseParenthesisBeta$} (y);
\draw[post, color=\ColorBeta, ultra thick] (z) to node[right, pos=0.1] {$\CloseParenthesisBeta$} (v);

\draw[post, color=\ColorGamma, out=120, in=120, looseness=1.1] (x) to node[above left] {$\CloseParenthesisGamma$} (z);
\draw[post, color=\ColorGamma] (v) to node[right, pos=0.1] {$\CloseParenthesisGamma$} (w);

\begin{pgfonlayer}{bg}
\node[box, ultra thick, rounded corners, draw=mybluecolor, dotted, fill=mybluecolor!10, minimum height=0.9cm, fit=(x)(y)] (S1) {};
\end{pgfonlayer}

\end{scope}

\begin{scope}[shift={(2*\xdisposition,0)}]
\node	[node]		(u)	at (0*\xstep,0*\ystep) {$u$};
\node	[node]		(z)	at (2*\xstep,0*\ystep) {$z$};
\node	[node]		(w)	at (3.25*\xstep,0*\ystep) {$w$};
\node	[node]		(x)	at (-1*\xstep,-1*\ystep) {$x$};
\node	[node]		(y)	at (1*\xstep,-1*\ystep) {$y$};
\node	[node]		(v)	at (3*\xstep,-1*\ystep) {$v$};

\draw[post, color=\ColorAlpha] (u) to node[left, pos=0.1] {$\CloseParenthesis$} (x);
\draw[post, color=\ColorAlpha] (u) to node[right, pos=0.1] {$\CloseParenthesis$} (y);
\draw[post, color=\ColorBeta] (z) to node[left, pos=0.1] {$\CloseParenthesisBeta$} (y);
\draw[post, color=\ColorBeta] (z) to node[right, pos=0.1] {$\CloseParenthesisBeta$} (v);

\draw[post, color=\ColorGamma, ultra thick,  out=120, in=120, looseness=1.1] (x) to node[above left] {$\CloseParenthesisGamma$} (z);
\draw[post, color=\ColorGamma, ultra thick] (v) to node[right, pos=0.1] {$\CloseParenthesisGamma$} (w);

\begin{pgfonlayer}{bg}
\node[box, ultra thick, rounded corners, draw=mybluecolor, dotted, fill=mybluecolor!10, minimum height=0.9cm, fit=(x)(y)(v)] (S1) {};
\end{pgfonlayer}

\end{scope}

\begin{scope}[shift={(3*\xdisposition,0)}]
\node	[node]		(u)	at (0*\xstep,0*\ystep) {$u$};
\node	[node]		(z)	at (2*\xstep,0*\ystep) {$z$};
\node	[node]		(w)	at (3.25*\xstep,0*\ystep) {$w$};
\node	[node]		(x)	at (-1*\xstep,-1*\ystep) {$x$};
\node	[node]		(y)	at (1*\xstep,-1*\ystep) {$y$};
\node	[node]		(v)	at (3*\xstep,-1*\ystep) {$v$};

\draw[post, color=\ColorAlpha] (u) to node[left, pos=0.1] {$\CloseParenthesis$} (x);
\draw[post, color=\ColorAlpha] (u) to node[right, pos=0.1] {$\CloseParenthesis$} (y);
\draw[post, color=\ColorBeta] (z) to node[left, pos=0.1] {$\CloseParenthesisBeta$} (y);
\draw[post, color=\ColorBeta] (z) to node[right, pos=0.1] {$\CloseParenthesisBeta$} (v);

\draw[post, color=\ColorGamma, out=120, in=120, looseness=1.1] (x) to node[above left] {$\CloseParenthesisGamma$} (z);
\draw[post, color=\ColorGamma] (v) to node[right, pos=0.1] {$\CloseParenthesisGamma$} (w);

\begin{pgfonlayer}{bg}
\node[box, ultra thick, rounded corners, draw=mybluecolor, dotted, fill=mybluecolor!10, minimum height=0.9cm, fit=(x)(y)(v)] (S1) {};
\node[box, ultra thick, rounded corners, draw=mybluecolor, dotted, fill=mybluecolor!10, minimum height=0.9cm, fit=(z)(w)] (S2) {};
\end{pgfonlayer}

\end{scope}

\end{tikzpicture}
}
\caption{
Illustration of the fixpoint computation of DSCCs.
Initially, every node forms its own DSCC.
We have $u\DTo{\CloseParenthesis}x$ and $u\DTo{\CloseParenthesis}y$, forming a DSCC $S_1=\{x,y\}$,
witnessed by the paths $x\DTo{\OpenParenthesis}u\DTo{\CloseParenthesis}y$ and $y\DTo{\OpenParenthesis}u\DTo{\CloseParenthesis}x$.
In turn, the two edges $z\DTo{\CloseParenthesisBeta}y$ and $z\DTo{\CloseParenthesisBeta}v$ enlarge this DSCC to $S_1=\{x,y, v\}$.
Finally, the two edges $x\DTo{\CloseParenthesisGamma}z$ and $v\DTo{\CloseParenthesisGamma}w$ form another DSCC $S_2=\{z,w\}$.
\label{fig:principle_of_operation}
}
\end{figure}

\Paragraph{Algorithm $\OfflineAlgo$.}
The above insights are made formal in \cref{algo:offlinealgo}, which is an adaptation of the algorithm as appeared initially in~\cite{Chatterjee18}.
The first insight that $\OfflineAlgo$ rests on is that, since DSCCs are equivalence classes that only merge together during the fixpoint computation, they can be maintained efficiently using a disjoint-sets data structure that supports $\Union/\Find$ operations.
The second insight is to use linked lists for storing the edges outgoing each DSCC and labeled with a given symbol $\CloseParenthesis$.
When two DSCCs are merged into one, the algorithm also merges the corresponding linked lists, which can be done very efficiently, in $O(1)$ time (using simple pointers).
In particular, $\OfflineAlgo$ performs two big steps.
First, it initializes a disjoint-sets data structure $\DisjointSets$ and a worklist $\Queue$ (function $\FunctionInitialization{}$ in \cref{algo:offline_initialize}).
Then, it uses $\DisjointSets$ and $\Queue$ to compute the fixpoint (function $\FunctionFixpoint{}$ in \cref{algo:offline_fixpoint}), achieving the running time of $O(m+n\cdot \alpha(n))$.
As the correctness and complexity are established in~\cite{Chatterjee18}, we do not re-establish them here.

\begin{algorithm}
\small
\DontPrintSemicolon
\caption{$\OfflineAlgo$}\label{algo:offlinealgo}
\KwIn{A bidirected graph $G=(V,E)$}
\KwOut{A $\DisjointSets$ map of the DSCCs of $G$}
\BlankLine
$\Initialization()$\tcp*[f]{Initialize the fixpoint computation}\label{algo:offline_initialize}\\
$\Fixpoint()$\tcp*[f]{Compute the fixpoint}\label{algo:offline_fixpoint}\\
\BlankLine
\MyFunction{\FunctionInitialization{}}{
$\Queue\gets $ an empty queue over $V^*$\label{line:init_begin}\tcp*[f]{A worklist over edges for computing the fixpoint}\\
$\Edges\gets$ a map $V\times \Close(\Alphabet)\to V^*$ as a linked list\tcp*[f]{The outgoing edges of each node}\\
$\DisjointSets\gets$ a disjoint-sets data structure over $V$\\
\ForEach{$u\in V$}{
$\DisjointSets.\MakeSet(u)$\tcp*[f]{constructs the singleton set $\{u\}$}\\
\ForEach{label $\CloseParenthesis\in \Close(\Alphabet)$}{
$\Edges[u][\CloseParenthesis]\gets (v:~(u,v,\CloseParenthesis)\in E)$\\
\lIf(\tcp*[f]{The two edges must merge their endpoints}){$|\Edges[u][\CloseParenthesis]|\geq 2$}{
Insert $(u, \CloseParenthesis)$ in $\Queue$\label{line:queue_insert1}
}
}
}
}
\BlankLine
\MyFunction{\FunctionFixpoint{}}{
\While(\tcp*[f]{Repeat until fixpoint}){$\Queue$ is not empty}{\label{line:loop_outer}
Extract $(u, \CloseParenthesis)$ from $\Queue$\label{line:extract}\tcp*[f]{We merge components connected via $u$'s component}\\
\uIf(\tcp*[f]{$u$ is the root of its component}){$u=\DisjointSets.\Find(u)$}{\label{line:if_parent}
$S\gets \{\DisjointSets.\Find(w):w\in \Edges[u][\CloseParenthesis]\}$\label{line:construct_s}\tcp*[f]{The components to be merged}\\
\eIf{$|S|\geq 2$}{\label{line:if_two}
$x\gets $ some arbitrary element of $S\setminus\{u\}$\label{line:choose_x}\tcp*[f]{$x$ will be the root of the merged component}\\
Make $\DisjointSets.\Union(S,x)$\label{line:update_scc}\tcp*[f]{All components merge to $x$'s component}\\
\ForEach(\tcp*[f]{Process each label separately}){label $\CloseParenthesisBeta\in \Close(\Alphabet)$}{\label{line:loop_append}
\ForEach{$v\in S\setminus\{x\}$}{\label{line:loop_inner}
\eIf(\tcp*[f]{$(v, \CloseParenthesisBeta)$ is not the $(u, \CloseParenthesis)$ we extracted from $\Queue$}){$u\neq v$ or $\CloseParenthesis\neq \CloseParenthesisBeta$}{\label{line:if_move}
Move $\Edges[v][\CloseParenthesisBeta]$ to $\Edges[x][\CloseParenthesisBeta]$\label{line:append}\tcp*[f]{In $O(1)$ time, using pointers}\\
}
{
Append $(x)$ to $\Edges[x][\CloseParenthesisBeta]$\label{line:append2}\tcp*[f]{All $\Edges[u][\CloseParenthesis]$ merged to a self-loop $x\DTo{\CloseParenthesis}x$}
}
}
\lIf(\tcp*[f]{More components to be merged}){$|\Edges[x][\CloseParenthesisBeta]|\geq 2$}{
Insert $(x, \CloseParenthesisBeta)$ in $\Queue$\label{line:queue_insert2}
}
}
}
{
Let $x\gets$ the single node in $S$
}
\lIf{$u\not \in S$ or $|S|=1$}{\label{line:if_update}
$\Edges[u][\CloseParenthesis]\gets (x)$\label{line:new_edges}\tcp*[f]{$u$'s component now points to $x$'s component}
}
}
}
}
\end{algorithm}

\subsection{Main Concepts and Intuition}\label{subsec:intuition}
Having outlined the basic algorithm for offline reachability, we now turn our attention to the data structure for dynamic reachability.
In high level, our data structure will use the same components as $\OfflineAlgo$, in particular, the representation of DSCCs using the $\DisjointSets$ data structure, the $\Edges$ linked lists for storing the outgoing edges of each component, and the $\Fixpoint()$ function for computing the DSCCs as a fixpoint after every graph update.
However, instead of starting the computation from scratch, $\DisjointSets$ and $\Edges$ (as well as the worklist $\Queue$ of $\Fixpoint()$) will be at a state that represent some DSCCs that are guaranteed to exist in the updated graph, and thus need not be recomputed.
The key technical challenge we have to solve is in computing a non-trivial such state, and doing so efficiently.

In this section we introduce the main concepts of our dynamic algorithm and provide the necessary intuition around them.
The precise algorithm is presented in the next section.

\Paragraph{High-level description of the dynamic algorithm.}
The main components that are new to our dynamic algorithm compared to the offline algorithm of \cref{subsec:offline} are geared towards handling edge deletions.
Assume that we have computed the DSCCs of a graph $G$, and we now have to process an operation $\DeleteEdge(u,v,\CloseParenthesis)$.
This will, in general, result in splitting some DSCCs into smaller ones.
In high level, we handle such edge deletion in three steps.
\begin{compactenum}
\item\label{item:intuition_step1} We compute a sound overapproximation of the DSCCs that are affected by the edge deletion, i.e., those that might have to be split into smaller components.
We compute this overapproximation by effectively performing a forward search starting from $\DSCC(v)$ and repeatedly proceeding to neighboring DSCCs of $G$.
In particular given a current DSCC $S$, we proceed to those DSCCs $S'$ that have at least two incoming edges of the form $x\DTo{\CloseParenthesisBeta}y$,
with $x\in S$ and $y\in S'$, and some label $\CloseParenthesisBeta$.
This is because $S'$ might have been formed through $S$ and the corresponding edges $x\DTo{\CloseParenthesisBeta}y$.
Although this traversal might end up touching $\Theta(n)$ DSCCs, we expect that in practice it will perform much better,
as the effect of an average edge deletion is usually very local.
\item\label{item:intuition_step2} At this point, it would suffice to split all nodes $t$ in the previously discovered DSCCs $S'$ into singleton components,
gather in the worklist $\Queue$ (see \cref{algo:offlinealgo}) 
all pairs $(s,\CloseParenthesisBeta)$ corresponding to 
incoming edges $s\DTo{\CloseParenthesisBeta}t$ where $t$ is the unique node of such a singleton component, 
and restart the fixpoint computation using  $\Fixpoint()$. 
Unfortunately, this would require $\Theta(n^2)$ time, which is beyond our target bound, as there can be $\Theta(n^2)$ such edges $s\DTo{\CloseParenthesisBeta}t$, and \cref{line:construct_s} of \cref{algo:offlinealgo} would iterate over all of them (where $u=s$ and $w=t$ in the algorithm).  
To circumvent this difficulty, we introduce the novel notion of \emph{primary DSCCs} (or \emph{PDSCCs}),
which, intuitively, are connected components of small size.
We keep track of the formation of PDSCCs dynamically by leveraging techniques from undirected dynamic reachability (see e.g., \cref{tab:connectivity_data_structures}).
Now, instead of splitting each previously discovered DSCC $S'$ into \emph{singleton components}, we split it into its \emph{PDSCCs}.
This allows us to re-initiate the fixpoint computation of the function $\Fixpoint()$ after only traversing $O(n)$ edges of the form $s\DTo{\CloseParenthesisBeta}t$ (where $t$ now is a node in a PDSCC).
In practice and due to the previous step, we typically traverse much fewer edges than $n$.
In turn, this implies that $\Fixpoint()$ will converge after $O(n)$ iterations, leading to the desired time bound.
\item We execute the fixpoint computation, similarly to \cref{algo:offlinealgo}.
\end{compactenum}

We now describe the above concepts in detail.

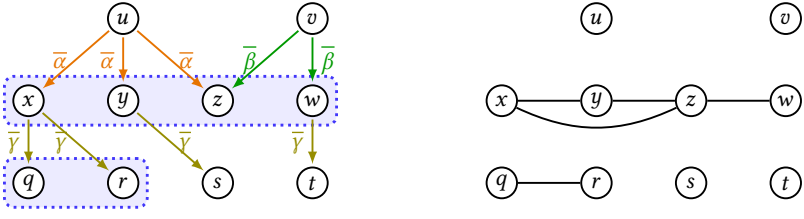
\begin{figure}
\scalebox{0.9}{%
\begin{tikzpicture}[thick, >=latex, node distance=0.3cm and 1cm,
pre/.style={<-,shorten >= 1pt, shorten <=1pt,},
post/.style={->,shorten >= 1pt, shorten <=1pt,},
und/.style={very thick, draw=gray},
node/.style={circle, minimum size=4.5mm, draw=black!100, fill=white!100, thick, inner sep=0},
virt/.style={circle,draw=black!50,fill=black!20, opacity=0}]

\newcommand{\xdisposition}{7}
\newcommand{\ydisposition}{0}
\newcommand{\xstep}{1.4}
\newcommand{\ystep}{1.2}
\def\bend{20}

\begin{scope}[shift={(0*\xdisposition,0)}]
\node	[node]		(u)	at (0*\xstep,0*\ystep) {$u$};
\node	[node]		(v)	at (2*\xstep,0*\ystep) {$v$};

\node	[node]		(x)	at (-1*\xstep,-1*\ystep) {$x$};
\node	[node]		(y)	at (0*\xstep,-1*\ystep) {$y$};
\node	[node]		(z)	at (1*\xstep,-1*\ystep) {$z$};
\node	[node]		(w)	at (2*\xstep,-1*\ystep) {$w$};

\node	[node]		(q)	at (-1*\xstep,-2*\ystep) {$q$};
\node	[node]		(r)	at (0*\xstep,-2*\ystep) {$r$};
\node	[node]		(s)	at (1*\xstep,-2*\ystep) {$s$};
\node	[node]		(t)	at (2*\xstep,-2*\ystep) {$t$};

\draw[post, color=\ColorAlpha] (u) to node[left, ] {$\CloseParenthesis$} (x);
\draw[post, color=\ColorAlpha] (u) to node[left, ] {$\CloseParenthesis$} (y);
\draw[post, color=\ColorAlpha] (u) to node[right, ] {$\CloseParenthesis$} (z);
\draw[post, color=\ColorBeta] (v) to node[left, ] {$\CloseParenthesisBeta$} (z);
\draw[post, color=\ColorBeta] (v) to node[right, ] {$\CloseParenthesisBeta$} (w);

\draw[post, color=\ColorGamma] (x) to node[left, ] {$\CloseParenthesisGamma$} (q);
\draw[post, color=\ColorGamma] (x) to node[left, ] {$\CloseParenthesisGamma$} (r);
\draw[post, color=\ColorGamma] (y) to node[right, ] {$\CloseParenthesisGamma$} (s);
\draw[post, color=\ColorGamma] (w) to node[left, ] {$\CloseParenthesisGamma$} (t);

\begin{pgfonlayer}{bg}
\node[box, very thick, rounded corners, draw=mybluecolor, dotted, fill=mybluecolor!10, fit=(x)(y)(z)(w)] (S1) {};
\node[box, very thick, rounded corners, draw=mybluecolor, dotted, fill=mybluecolor!10, fit=(q)(r)] (S2) {};
\end{pgfonlayer}

\end{scope}

\begin{scope}[shift={(1*\xdisposition,0)}]

\begin{scope}[shift={(0*\xdisposition,0)}]
\node	[node]		(u)	at (0*\xstep,0*\ystep) {$u$};
\node	[node]		(v)	at (2*\xstep,0*\ystep) {$v$};

\node	[node]		(x)	at (-1*\xstep,-1*\ystep) {$x$};
\node	[node]		(y)	at (0*\xstep,-1*\ystep) {$y$};
\node	[node]		(z)	at (1*\xstep,-1*\ystep) {$z$};
\node	[node]		(w)	at (2*\xstep,-1*\ystep) {$w$};

\node	[node]		(q)	at (-1*\xstep,-2*\ystep) {$q$};
\node	[node]		(r)	at (0*\xstep,-2*\ystep) {$r$};
\node	[node]		(s)	at (1*\xstep,-2*\ystep) {$s$};
\node	[node]		(t)	at (2*\xstep,-2*\ystep) {$t$};

\draw[-, thick] (x) to (y);
\draw[-, thick] (y) to (z);
\draw[-, thick, bend right=25] (x) to (z);
\draw[-, thick] (z) to (w);

\draw[-, thick] (q) to (r);

\end{scope}

\end{scope}

\end{tikzpicture}
}
\caption{
A bidirected graph $G$ (left) and the corresponding primal graph $H$ (right).
There are two non-singleton PDSCCs in $G$ (marked), corresponding to non-singleton connected subgraphs of $H$.
\label{fig:primal_graph}
}
\end{figure}
\Paragraph{Primal graphs and primary DSCCs.}
Consider a bidirected graph $G=(V,E)$.
The \emph{primal} graph $H=(V,L)$ is an unlabeled, undirected graph with the same node set, and edge set defined as:
\[
L=\{(x,y)\colon \exists u\in V.~\exists \CloseParenthesis\in\SetClosedParenthesis.~u\DTo{\CloseParenthesis}x, u\DTo{\CloseParenthesis}y\in E \}
\]
In words, $x$ and $y$ share an edge in $H$ if they are connected in $G$ via a Dyck path of length $2$.
A \emph{primary DSCC (PDSCC)} of $G$ is a (maximal) connected component of the primal graph $H$.
It is not hard to see that the PDSCC partitioning of $G$ is a refinement of its DSCC partitioning, i.e., each DSCC contains one or more PDSCCs.
See \cref{fig:primal_graph} for an illustration.
Similarly to the case of DSCCs, given a node $u$, we let $\PDSCC(u)$ be the PDSCC in which $u$ belongs,
while $\PDSCCRep(u)$ returns a representative of $\PDSCC(u)$ that is common for all nodes $v\in \PDSCC(u)$.
Hence, two nodes $u$, $w$ are in the same PDSCC iff $\PDSCCRep(u)=\PDSCCRep(v)$.

Since the PDSCCs of $G$ have a direct representation as connected components in the undirected graph $H$, we can leverage existing techniques from dynamic undirected connectivity (see, e.g., \cref{tab:connectivity_data_structures}) to maintain them efficiently.
PDSCCs serve the following function.
When an edge is deleted, some DSCCs $S$ of $G$ have to be split to smaller DSCCs.
This can lead to effectively repeating the fixpoint computation from scratch in $G$.
However, this approach would require $\Theta(n^2)$ running time, which is beyond our complexity bound.
Instead, maintaining the PDSCCs allows us to split $S$ to its PDSCCs (as opposed to individual nodes), and thus avoid recomputing the PDSCCs from scratch.
In turn, this allows us to achieve the desired $O(n\cdot \alpha(n))$ bound for edge deletions.

\begin{figure}
\scalebox{0.9}{%
\begin{tikzpicture}[thick, >=latex, node distance=0.3cm and 1cm,
pre/.style={<-,shorten >= 1pt, shorten <=1pt,},
post/.style={->,shorten >= 1pt, shorten <=1pt,},
und/.style={very thick, draw=gray},
node/.style={circle, minimum size=4.5mm, draw=black!100, fill=white!100, thick, inner sep=0},
virt/.style={circle,draw=black!50,fill=black!20, opacity=0}]

\newcommand{\xdisposition}{6}
\newcommand{\ydisposition}{2.5}
\newcommand{\xstep}{1}
\newcommand{\ystep}{1.1}
\def\bend{20}
\def\pad{0.19}

\begin{scope}[shift={(0*\xdisposition,0*\ydisposition)}]

\draw[very thick, ->] (0.3*\xdisposition,-0.5*\ystep) to node[above] {$\InsertEdge(u,x,\CloseParenthesis)$} (0.7*\xdisposition,-0.5*\ystep);

\node[] at (-1.2*\xstep,0.1*\ystep) {$G_1$};

\node	[node]		(u)	at (0*\xstep,0*\ystep) {$u$};

\node	[node]		(x)	at (-1*\xstep,-1*\ystep) {$x$};
\node	[node]		(y)	at (0*\xstep,-1*\ystep) {$y$};
\node	[node]		(z)	at (1*\xstep,-1*\ystep) {$z$};

\draw[post] (u) to node[left, ] {$\CloseParenthesis$} (y);
\draw[post] (u) to node[right, ] {$\CloseParenthesis$} (z);

\begin{pgfonlayer}{bg}
\node[box, very thick, rounded corners, draw=mybluecolor, dotted, fill=mybluecolor!10, fit=(y)(z)] (S1) {};
\end{pgfonlayer}

\end{scope}

\begin{scope}[shift={(0*\xdisposition,-1*\ydisposition)}]

\node[] at (-1.2*\xstep,0.1*\ystep) {$H_1$};

\node	[node]		(u)	at (0*\xstep,0*\ystep) {$u$};

\node	[node]		(x)	at (-1*\xstep,-1*\ystep) {$x$};
\node	[node]		(y)	at (0*\xstep,-1*\ystep) {$y$};
\node	[node]		(z)	at (1*\xstep,-1*\ystep) {$z$};

\draw[-, thick] (y) to (z);

\end{scope}

\begin{scope}[shift={(1*\xdisposition,0*\ydisposition)}]

\draw[very thick, ->] (0.3*\xdisposition,-0.5*\ystep) to node[above] {$\DeleteEdge(u,y,\CloseParenthesis)$} (0.7*\xdisposition,-0.5*\ystep);

\node[] at (-1.2*\xstep,0.1*\ystep) {$G_2$};

\node	[node]		(u)	at (0*\xstep,0*\ystep) {$u$};

\node	[node]		(x)	at (-1*\xstep,-1*\ystep) {$x$};
\node	[node]		(y)	at (0*\xstep,-1*\ystep) {$y$};
\node	[node]		(z)	at (1*\xstep,-1*\ystep) {$z$};

\draw[post] (u) to node[left, ] {$\CloseParenthesis$} (x);
\draw[post] (u) to node[left, ] {$\CloseParenthesis$} (y);
\draw[post] (u) to node[right, ] {$\CloseParenthesis$} (z);

\begin{pgfonlayer}{bg}
\node[box, very thick, rounded corners, draw=mybluecolor, dotted, fill=mybluecolor!10, fit=(x)(y)(z)] (S1) {};
\end{pgfonlayer}

\end{scope}

\begin{scope}[shift={(1*\xdisposition,-1*\ydisposition)}]

\node[] at (-1.2*\xstep,0.1*\ystep) {$H_2$};

\node	[node]		(u)	at (0*\xstep,0*\ystep) {$u$};

\node	[node]		(x)	at (-1*\xstep,-1*\ystep) {$x$};
\node	[node]		(y)	at (0*\xstep,-1*\ystep) {$y$};
\node	[node]		(z)	at (1*\xstep,-1*\ystep) {$z$};

\draw[-, thick] (x) to (y);
\draw[-, thick] (y) to (z);
\draw[-, thick, bend right=25, dashed, draw=gray] (x) to (z);

\end{scope}

\begin{scope}[shift={(2*\xdisposition,0*\ydisposition)}]

\node[] at (-1.2*\xstep,0.1*\ystep) {$G_3$};

\node	[node]		(u)	at (0*\xstep,0*\ystep) {$u$};

\node	[node]		(x)	at (-1*\xstep,-1*\ystep) {$x$};
\node	[node]		(y)	at (0*\xstep,-1*\ystep) {$y$};
\node	[node]		(z)	at (1*\xstep,-1*\ystep) {$z$};

\draw[post] (u) to node[left, ] {$\CloseParenthesis$} (x);
\draw[post] (u) to node[right, ] {$\CloseParenthesis$} (z);

\begin{pgfonlayer}{bg}
\draw[-, very thick, rounded corners, draw=mybluecolor, dotted, fill=mybluecolor!10,] ($ (x) + (-2*\pad, 2*\pad) $) to ($ (x) + (-2*\pad, -2*\pad) $) to ($ (z) + (2*\pad, -2*\pad)$) to ($ (z) + (2*\pad, 2*\pad)$) to ($ (z) + (-2*\pad, 2*\pad)$) to ($ (z) + (-2*\pad, -1.3*\pad)$) to ($ (x) + (2*\pad, -1.3*\pad)$) to ($ (x) + (2*\pad, 2*\pad)$) --cycle;
\end{pgfonlayer}

\end{scope}

\begin{scope}[shift={(2*\xdisposition,-1*\ydisposition)}]

\node[] at (-1.2*\xstep,0.1*\ystep) {$H_3$};

\node	[node]		(u)	at (0*\xstep,0*\ystep) {$u$};

\node	[node]		(x)	at (-1*\xstep,-1*\ystep) {$x$};
\node	[node]		(y)	at (0*\xstep,-1*\ystep) {$y$};
\node	[node]		(z)	at (1*\xstep,-1*\ystep) {$z$};

\draw[-, thick, bend right=25] (x) to (z);

\end{scope}

\end{tikzpicture}
}
\caption{
Sparsification for the maintenance of the PDSCCs of $G_i$ across edge insertions and deletions (top),
by maintaining connected components in the corresponding primal graphs $H_i$ (bottom).
\label{fig:pdsccs}
}
\end{figure}
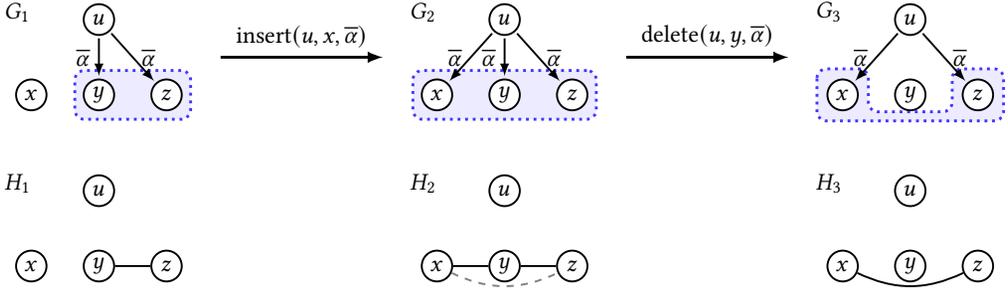
\Paragraph{A sparsification approach for maintaining the PDSCCs.}
Since each PDSCC corresponds to a connected component of the undirected primal graph $H$, we will maintain PDSCCs dynamically, by using any data structure for dynamic reachability on undirected graphs (see \cref{tab:connectivity_data_structures}).
We call this data structure for undirected reachability $\PrimCompDS$.

An operation $o=\InsertEdge(u,v,\ov{\alpha})$ inserts between $0$ and $n-1$ undirected edges in the primal graph $H$.
Indeed, we have an edge $(x,v)$ for every node $x\neq v$ for which already $u\DTo{\ov{\alpha}}x$,
and there may be $n-1$ such nodes $x$.
Note, however, that if there already exist two such nodes, $x,y$, with $u\DTo{\ov{\alpha}}x$ and $u\DTo{\ov{\alpha}}y$, then $x$ and $y$ already appear in the same PDSCC of $G$.
Hence, we can faithfully maintain the PDSCCs of $G$ by only adding \emph{one} of the two primal edges $(v,x)$ and $(v,y)$ in $\PrimCompDS$.
This has the desirable effect of maintaining PDSCCs after edge insertions by inserting $O(1)$ undirected edges in $\PrimCompDS$, as opposed to $O(n)$.
This kind of technique is known as \emph{sparsification}~\cite{Eppstein1997}, in that we manage to represent the full connectivity of the primal graph $H$ while only storing a subset of its edges.
In particular, we achieve this effect by maintaining the outgoing edges of $u$ labeled with $\ov{\alpha}$ in a linked list $\OutEdges[u][\CloseParenthesis]=(x_1,\dots, x_r)$, and only inserting in $\PrimCompDS$ edges $(x_{i}, x_{i+1})$, i.e., between consecutive nodes in $\OutEdges[u][\CloseParenthesis]$.
Distant nodes become connected transitively in $\PrimCompDS$.

Conversely, an operation $o=\DeleteEdge(u,v,\ov{\alpha})$ deletes between $0$ and $n-1$ undirected edges in the primal graph $H$, with reasoning similar to the one above.
However, because of the above invariant, we can restore the PDSCCs of $G$ by removing the edges $(a,v)$ and $(v,b)$ in $\PrimCompDS$, where $a$ and $b$ are the neighbors of $v$ in $\OutEdges[u][\CloseParenthesis]$. 
Moreover, as $a$ and $b$ are no longer connected in $\PrimCompDS$, we restore their connectivity by inserting the edge $(a,b)$ in $\PrimCompDS$.
\cref{fig:pdsccs} illustrates this sparsification technique on a small example. The update $\InsertEdge(u,x,\CloseParenthesis)$ creates two edges in the primal graph $H_2$, namely $(x,y)$ and $(x,z)$.
However, the dashed edge $(x,z)$ is not stored explicitly in $\PrimCompDS$, as $x$ and $y$ are already connected via $z$.
The update $\DeleteEdge(u,y,\CloseParenthesis)$ removes the edges $(x,y)$ and $(y,z)$ from the primal graph $H_3$.  
To preserve the connectivity of $x$ and $z$, the dashed edge $(x,z)$ is restored in $\PrimCompDS$.

\Paragraph{Maintaining the neighbors of PDSCCs.}
The maintenance of the PDSCCs allows us to efficiently identify the incoming neighbors of a PDSCC, in $O(n)$ worst-case time, although the graph might have $\Theta(n^2)$ edges.
This fact stems from the following observation.
For a given node $u$ and label $\CloseParenthesis$, if there are two edges $u\DTo{\CloseParenthesis}x$ and $u\DTo{\CloseParenthesis}y$, then $x$ and $y$ are in the same PDSCC.
In other words, for every such $u$ and $\CloseParenthesis$, there is at most one PDSCC that $u$ is a neighbor of via a $\CloseParenthesis$-labeled edge.
Thus, given a set of PDSCCs (as described in \cref{item:intuition_step2} above), we can obtain all edges incoming to them in $O(n)$ time, by iterating over all nodes $u$ and labels $\CloseParenthesis$.
Though this is sufficient towards our linear time bound, it can become unnecessarily slow when the number of such PDSCCs is small.
Instead, we follow a different approach, which retains the $O(n)$ worst-case behavior but is faster in practice.

For every node $x$ and label $\CloseParenthesis$, we maintain a  set $\InEdges[x][\CloseParenthesis]$, which stores nodes $u$ that have an edge $u\DTo{\CloseParenthesis}x$.
We maintain the invariant that $x$ will be the \emph{only} node in $\PDSCC(x)$ with $u\in \InEdges[x][\CloseParenthesis]$, even though there might be other nodes $y\in\PDSCC(x)$ also having an edge $u\DTo{\CloseParenthesis}y$.
To identify the neighbors of each PDSCC, it now suffices to iterate over its nodes $x$, and for each label $\CloseParenthesis$, collect the nodes from $\InEdges[x][\CloseParenthesis]$.
The invariant guarantees that every neighbor $u$ will be accessed exactly once per label $\CloseParenthesis$, retaining the $O(n)$ worst-case running time.

The sets $\InEdges[x][\CloseParenthesis]$, together with their invariant, can be maintained as follows.
Upon inserting an edge $\InsertEdge(u,x,\CloseParenthesis)$, if $u$ does not have any other outgoing edges labeled with $\CloseParenthesis$,
we insert $u$ in $\InEdges[x][\CloseParenthesis]$.
Otherwise, there exists another edge $u\DTo{\CloseParenthesis}y$, with already $u\in \InEdges[y][\CloseParenthesis]$.
The existence of the two edges $u\DTo{\CloseParenthesis}x$ and $u\DTo{\CloseParenthesis}y$ implies that $\PDSCC(x)=\PDSCC(y)$, 
hence it is sound to not insert $u$ in $\InEdges[x][\CloseParenthesis]$, as $u$ is retrievable via $y$.
Similarly, upon deleting an edge $\DeleteEdge(u,x,\CloseParenthesis)$, if $u\in \InEdges[x][\CloseParenthesis]$,
we move $u$ to another set $\InEdges[y][\CloseParenthesis]$ for which there exists an edge $u\DTo{\CloseParenthesis}y$.  \cref{fig:pdsccs_neighbors} illustrates the maintenance of $\InEdges$ on a small example.

\begin{figure}
\scalebox{0.93}{%
\begin{tikzpicture}[thick, >=latex, node distance=0.3cm and 1cm,
pre/.style={<-,shorten >= 1pt, shorten <=1pt,},
post/.style={->,shorten >= 1pt, shorten <=1pt,},
und/.style={very thick, draw=gray},
node/.style={circle, minimum size=4.5mm, draw=black!100, fill=white!100, thick, inner sep=0},
virt/.style={circle,draw=black!50,fill=black!20, opacity=0}]

\newcommand{\xdisposition}{4.15}
\newcommand{\ydisposition}{3}
\newcommand{\xstep}{0.75}
\newcommand{\ystep}{1.3}
\def\bend{50}
\def\pad{0.19}

\begin{scope}[shift={(0*\xdisposition,0*\ydisposition)}]

\draw[very thick, ->] (0.325*\xdisposition,-0.5*\ystep) to node[above] {\small $\InsertEdge(u,x,\CloseParenthesis)$} (0.675*\xdisposition,-0.5*\ystep);

\node[] at (-1.2*\xstep,0.1*\ystep) {$G_1$};

\node	[node]		(u)	at (0*\xstep,0*\ystep) {$u$};

\node	[node]		(x)	at (-1*\xstep,-1*\ystep) {$x$};
\node	[node]		(y)	at (0*\xstep,-1*\ystep) {$y$};
\node	[node]		(z)	at (1*\xstep,-1*\ystep) {$z$};

\begin{pgfonlayer}{bg}
\end{pgfonlayer}

\end{scope}

\begin{scope}[shift={(1*\xdisposition,0*\ydisposition)}]

\draw[very thick, ->] (0.325*\xdisposition,-0.5*\ystep) to node[above] {\small $\InsertEdge(u,y,\CloseParenthesis)$} node[below] {\small $\InsertEdge(u,z,\CloseParenthesis)$} (0.675*\xdisposition,-0.5*\ystep);

\node[] at (-1.2*\xstep,0.1*\ystep) {$G_2$};

\node	[node]		(u)	at (0*\xstep,0*\ystep) {$u$};

\node	[node]		(x)	at (-1*\xstep,-1*\ystep) {$x$};
\node	[node]		(y)	at (0*\xstep,-1*\ystep) {$y$};
\node	[node]		(z)	at (1*\xstep,-1*\ystep) {$z$};

\draw[post] (u) to node[left]{$\CloseParenthesis$}(x);
\draw[post, dashed, color=\darkorange, bend left=\bend] (x) to node[left] {$\CloseParenthesis$} (u);

\begin{pgfonlayer}{bg}
\end{pgfonlayer}

\end{scope}

\begin{scope}[shift={(2*\xdisposition,0*\ydisposition)}]

\draw[very thick, ->] (0.325*\xdisposition,-0.5*\ystep) to node[above] {\small $\DeleteEdge(u,x,\CloseParenthesis)$}  (0.675*\xdisposition,-0.5*\ystep);

\node[] at (-1.2*\xstep,0.1*\ystep) {$G_3$};

\node	[node]		(u)	at (0*\xstep,0*\ystep) {$u$};

\node	[node]		(x)	at (-1*\xstep,-1*\ystep) {$x$};
\node	[node]		(y)	at (0*\xstep,-1*\ystep) {$y$};
\node	[node]		(z)	at (1*\xstep,-1*\ystep) {$z$};

\draw[post] (u) to node[left]{$\CloseParenthesis$}(x);
\draw[post] (u) to node[left]{$\CloseParenthesis$}(y);
\draw[post] (u) to node[left]{$\CloseParenthesis$}(z);

\draw[post, dashed, color=\darkorange, bend left=\bend] (x) to node[left] {$\CloseParenthesis$} (u);

\begin{pgfonlayer}{bg}
\node[box, very thick, rounded corners, draw=mybluecolor, dotted, fill=mybluecolor!10, fit=(x)(y)(z)] (S1) {};
\end{pgfonlayer}

\end{scope}

\begin{scope}[shift={(3*\xdisposition,0*\ydisposition)}]

\node[] at (-1.2*\xstep,0.1*\ystep) {$G_4$};

\node	[node]		(u)	at (0*\xstep,0*\ystep) {$u$};

\node	[node]		(x)	at (-1*\xstep,-1*\ystep) {$x$};
\node	[node]		(y)	at (0*\xstep,-1*\ystep) {$y$};
\node	[node]		(z)	at (1*\xstep,-1*\ystep) {$z$};

\draw[post] (u) to node[left]{$\CloseParenthesis$}(y);
\draw[post] (u) to node[left]{$\CloseParenthesis$}(z);

\draw[post, dashed, color=\darkorange, bend left=\bend] (y) to node[left] {$\CloseParenthesis$} (u);

\begin{pgfonlayer}{bg}
\node[box, very thick, rounded corners, draw=mybluecolor, dotted, fill=mybluecolor!10, fit=(y)(z)] (S1) {};
\end{pgfonlayer}

\end{scope}

\end{tikzpicture}
}
\caption{
Maintenance of the sets $\InEdges$ (shown in orange dashed) along edge insertions and deletions.
The first edge insertion $u\DTo{\CloseParenthesis}x$ 
leads to $u\in \InEdges[x][\CloseParenthesis]$.
The following two edge insertions $u\DTo{\CloseParenthesis}y$ and $u\DTo{\CloseParenthesis}z$ do not modify $\InEdges$, as $x$, $y$ and $z$ belong to the same PDSCC,
thus $u$ can be retrieved as a $\CloseParenthesis$-neighbor via the set $\InEdges[x][\CloseParenthesis]$.
When processing deleting the edge $u\DTo{\CloseParenthesis}x$, we move $u$ to $\InEdges[y][\CloseParenthesis]$, thus $u$ can still be retrieved as a $\CloseParenthesis$-neighbor of the PDSCC $\{y,z\}$.
\label{fig:pdsccs_neighbors}
}
\end{figure}

\Paragraph{Processing a delete operation.}
We are now in position to outline our approach for processing a delete operation $\DeleteEdge(u,v,\CloseParenthesis)$.
As the edge $u\DTo{\CloseParenthesis}v$ might be used to connect $v$ to other nodes in $\DSCC(v)$, this operation might split $\DSCC(v)$ into smaller DSCCs. 
As nodes become disconnected in $\DSCC(v)$, other DSCCs might also be split, if the paths connecting their nodes passed through $\DSCC(v)$, resulting in a cascading effect. 
For instance, consider \cref{fig:illustrative_example} (middle), having two $\DSCC$s $\{c,d,e,f\}$ and $\{g,h\}$. 
On deleting the edge $f\DTo{\overline{L}}d$, the node $d$ splits from the $\DSCC$ $\{c,d,e,f\}$. 
In turn, the $\DSCC$ $\{g,h\}$ splits into $\{g\},\{h\}$ since they were held together by $c,d$ being in the same $\DSCC$. 
The split of $\{g,h\}$ results in pulling out $f$ from $\DSCC(e)$. Thus, deleting the edge $f\DTo{\overline{L}}d$ results in splitting $\DSCC$s $\{c,d,e,f\}$ and $\{g,h\}$ resulting 
in $\DSCC$s $\{c,e\}$, $\{d\}, \{f\}, \{g\}, \{h\}$.

Thus, a single edge deletion may have an effect which propagates to the whole graph; however, we can obtain a sound overapproximation of the DSCCs affected by the initial edge deletion using the following observation.
If the split of a DSCC $S$ leads to the immediate split of another DSCC $S'$, then there exist two distinct edges $x_i\DTo{\CloseParenthesisBeta}y_i$, for $i\in [2]$, such that $x_i\in S$ and $y_i\in S'$.
Thus, for $\DeleteEdge(u,v,\CloseParenthesis)$, we can obtain our overapproximation by starting a forward search from $\DSCC(v)$:~given a current DSCC $S$, we iterate over all labels $\CloseParenthesisBeta$ such that $S$ has at least two $\CloseParenthesisBeta$-labeled outgoing edges, and proceed to the resulting DSCC $S'$ (note that there can be at most one such $S'$).

After the overapproximation of the set of potentially splitting DSCCs has been performed, each such DSCC $S$ is split to its constituent PDSCCs, using the undirected connectivity data structure $\PrimCompDS$.
Note that this splitting is an underapproximation of the set of the final DSCCs, as some PDSCCs have to be merged again.
We will discover this by running the fixpoint computation again, up to convergence.
For this, we re-insert in the worklist $\Queue$ (see function $\Fixpoint()$ in \cref{algo:offlinealgo}) node-label pairs $(\DSCCRep(y), \CloseParenthesisBeta)$ that represent edges $y\DTo{\CloseParenthesisBeta}\cdot$ that might lead to further component merging. Due to the efficient maintenance of neighbors of PDSCCs (see the previous paragraph), this can be achieved by iterating over all nodes $x$ of the PDSCCs, and for each label $\CloseParenthesisBeta\in\SetClosedParenthesis$, insert in $\Queue$, the pair 
$(\DSCCRep(w), \CloseParenthesisBeta)$
corresponding to $w\in \InEdges[x][\CloseParenthesisBeta]$  if 
$|\Edges[\DSCCRep(w)][\CloseParenthesisBeta]| \geq 2$. 
Finally we continue with the fixpoint computation as in $\OfflineAlgo$ with the guarantee that upon convergence, $\DisjointSets$  will contain the DSCCs of the new graph after deleting the edge $u\DTo{\CloseParenthesis}v$.

\input{figures/delete_overview}
\SubParagraph{Example.}
\cref{fig:delete_overview} illustrates the above process on a small example, processing an update $\DeleteEdge(c, d ,\CloseParenthesisGamma)$.
Before the deletion (top), the graph has four DSCCs, namely $S_1, S_2, S_3, S_4$ (marked).
In the first step, we perform a forward search from $S_2$, following pairs of same-label edges, which discovers $S_2$, $S_3$ and $S_4$ as the set of DSCCs that are potentially affected by the delete operation.
Observe that $S_1$ does not have to be split, and the algorithm avoids recomputation on $S_1$.
Then, $S_2$, $S_3$ and $S_4$ are split to their constituent PDSCCs (middle).
Observe that $S_4$ is only partly split, as its subset $\{u,v,y\}$ is a PDSCC (with $y=\PDSCCRep(u)$) in the new graph (after deletion). 
Hence the algorithm avoids recomputing $S_4$ from scratch.
Afterwards, the algorithm uses the $\InEdges$ data structure to collect  
the set of pairs 
$
\NewIncomingRoots = \{(g,\CloseParenthesisGamma), 
(h,\CloseParenthesisGamma), 
(c,\CloseParenthesisGamma),
(d,\CloseParenthesis), 
(e,\CloseParenthesis), 
(d,\CloseParenthesisBeta), 
(y,\CloseParenthesis), 
(y,\CloseParenthesisBeta), 
(y,\CloseParenthesisGamma),
(f,\CloseParenthesisBeta)\}
$,
i.e., the pairs $(\DSCCRep(x), \CloseParenthesisBeta)$ such that $\DSCC(x) \DTo{\CloseParenthesisBeta} \PDSCC(r)$, where  
 $r$ is a node in the newly formed $\PDSCC$s.  
Observe that the pair $(f,\CloseParenthesisGamma)$ is not  collected in $\NewIncomingRoots$ even though $f\DTo{\CloseParenthesisGamma}c$, as this edge is incoming to DSCC $S_1$ which was deemed as not affected by the edge deletion.
Note that some of these $\PDSCC$s can be merged during the $\Fixpoint$ computation to obtain 
 the new set of $\DSCC$s post deletion. For this, a pair  $(x,\CloseParenthesis) \in \NewIncomingRoots$ is added to 
$\Queue$,
if there are  two outgoing edges on $\CloseParenthesis$ from  $x$'s component. 
In this example, only $(y,\CloseParenthesisGamma)$ is added in $\Queue$ since there are 
two edges on $\CloseParenthesisGamma$ from $y$'s component, namely $u\DTo{\CloseParenthesisGamma}e$ and $v\DTo{\CloseParenthesisGamma}f$. $\Fixpoint$ takes this $\Queue$ 
and converges to the final DSCCs (bottom), merging the $\PDSCC$s $\{e\}$ and $\{f\}$, thereby partially restoring $S_2$.

\subsection{The Dynamic Algorithm}\label{subsec:main_data_structure}

We now make the concepts of the previous section formal, by developing precise algorithms for handling each insert and delete operation. 
In a static analysis context, the edges of the underlying graph correspond to statements in the analyzed source code.
This means that a certain edge might be inserted several times in the graph, if, for example, it comes from a repeating program statement. 
Note, however, that only the presence of the edge impacts the reachability analysis and not its multiplicity.
To handle multiple edges, we maintain simple counters in a list data structure $\Count$, so that $\Count[u][v][\CloseParenthesis]$ holds the number of $u\DTo{\CloseParenthesis}v$ edges currently in the graph.

\smallskip
\begin{algorithm}
\small
\DontPrintSemicolon
\SetInd{0.5em}{0.5em}
\caption{$\InsertEdge(u,v,\CloseParenthesis)$}\label{algo:insert_edge}
\BlankLine
$\Count[u][v][\CloseParenthesis]\gets \Count[u][v][\CloseParenthesis] + 1$\tcp*[f]{Update the edge counter}\\
\lIf(\tcp*[f]{Edge exists already, no change in reachability}){$\Count[u][v][\CloseParenthesis]\geq 2$}
{
$\Return$
}
\eIf(\tcp*[f]{$v$ is the first $\CloseParenthesis$-neighbor of $u$}){$|\OutEdges[u][\CloseParenthesis]|= 0$}{\label{line:algo_insert_pdscc}
Insert $u$ in $\InEdges[v][\CloseParenthesis]$ \tcp*[f]{The fact that $u\DTo{\CloseParenthesis}\PDSCC(v)$ is retrievable via $\InEdges[v][\CloseParenthesis]$}\label{line:algo_insert_inedges}\\
}{
Let $y\gets$ the head of $\OutEdges[u][\CloseParenthesis]$\tcp*[f]{The most recent $\CloseParenthesis$-neighbor of $u$}\label{line:algo_insert_gethead}\\
$\PrimCompDS.\InsertEdge(v,y)$ \tcp*[f]{Update the PDSCCs with the edge $(v,y)$}\label{line:algo_insert_insert_primal_edge}\\
}
Insert $v$ as the head of $\OutEdges[u][\CloseParenthesis]$\tcp*[f]{$v$ is a node in the PDSCC formed by $u\DTo{\CloseParenthesis}\cdot$ edges}\label{line:algo_insert_outedges}\\
$x\gets\DisjointSets.\Find(u)$\tcp*[f]{$x$ is the root of the component of $u$}\label{line:algo_insert_root_of_u}\\ 
Insert $v$ in $\Edges[x][\CloseParenthesis]$\tcp*[f]{The component of $u$ has an extra outgoing edge}\label{line:algo_insert_insert_edges}\\
\uIf(\tcp*[f]{Potentially some component merging occurs, compute new fixpoint}){$|\Edges[x][\CloseParenthesis]|\geq 2$ }{
Insert $(x, \CloseParenthesis)$ in $\Queue$\tcp*[f]{Prepare the new fixpoint computation}\\
$\Fixpoint()$\tcp*[f]{Compute the new fixpoint due to the new edge}\label{line:algo_insert_fixpoint}\\
}
\end{algorithm}

\Paragraph{Operation $\InsertEdge(u,v,\ov{\alpha})$.}
\cref{algo:insert_edge} handles each $\InsertEdge(u,v,\ov{\alpha})$ operation.
In words, the algorithm first implements the logic for sparsely maintaining the PDSCCs (\cref{line:algo_insert_pdscc} to \cref{line:algo_insert_outedges}).
In particular, if $v$ is the first $\CloseParenthesis$-neighbor of $u$, then $u$ is marked as a $\CloseParenthesis$-neighbor of $\PDSCC(v)$ via $v$, by inserting $u$ in $\InEdges[v][\CloseParenthesis]$ (\cref{line:algo_insert_inedges}).
Otherwise, $u$ is already marked via some existing $\CloseParenthesis$-neighbor $y$ of $u$, which is also in $\PDSCC(v)$.
Then the algorithm inserts an edge $(v,y)$ in the $\PrimCompDS$ data structure (which maintains undirected connectivity) to reflect the change in $v$'s PDSCC.

The second step of \cref{algo:insert_edge} (\cref{line:algo_insert_root_of_u} to \cref{line:algo_insert_fixpoint}) prepares the fixpoint computation  that might be triggered due to the new edge.
For this, it marks that $v$ is a new $\CloseParenthesis$-neighbor of $\DSCC(u)$ (\cref{line:algo_insert_root_of_u}, as the DSCC is represented by its root node $x$).
If there already exists another $\CloseParenthesis$-labeled edge out of $\DSCC(u)$, then the pair $(x,\CloseParenthesis)$ is inserted in $\Queue$, and  the $\Fixpoint()$ is called to continue the fixpoint computation 
for merging the DSCCs (in $\DisjointSets$)
(note that $\Fixpoint()$ is also used by $\OfflineAlgo$ and defined in \cref{algo:offlinealgo}).

\smallskip
\begin{algorithm}
\small
\DontPrintSemicolon
\caption{$\DeleteEdge(u,v,\ov{\alpha})$}\label{algo:delete_edge}
\BlankLine
$\Count[u][v]\CloseParenthesis]\gets \Count[u][v][\CloseParenthesis] - 1$\tcp*[f]{Update the edge counter}\\
\lIf(\tcp*[f]{Edge still exists, no change in reachability}){$\Count[u][v][\CloseParenthesis]\geq 1$}{ $\Return$
}
\uIf(\tcp*[f]{$v$ was the earliest node to have an edge $u\DTo{\CloseParenthesis}\cdot$}){$v$ is the tail of $\OutEdges[u][\CloseParenthesis]$}{\label{line:algo_delete_pdscc}
Remove $u$ from $\InEdges[v][\CloseParenthesis]$\\
\uIf(\tcp*[f]{There is another $u\DTo{\CloseParenthesis}\cdot$ edge}){$|\OutEdges[u][\CloseParenthesis]|\geq 2$}{
Let $w\gets$ the penultimate node in $\OutEdges[u][\CloseParenthesis]$\label{line:penultimate}\\
Insert $u$ in $\InEdges[w][\CloseParenthesis]$\tcp*[f]{$u\DTo{\CloseParenthesis}\PDSCC(w)$ is now retrievable via $\InEdges[w][\CloseParenthesis]$}\label{line:algo_delete_inedges}
}
}
Let $a, b\gets \bot, \bot$\\
\uIf{$v$ is neither the head nor the tail of $\OutEdges[u][\CloseParenthesis]$}{
Let $a, b\gets$ the two neighbors of $v$ in $\OutEdges[u][\CloseParenthesis]$\tcp*[f]{We will have to reconnect $a$ and $b$}
}
\ForEach(\tcp*[f]{At most two neighbors}){neighbor $x$ of $v$ in $\OutEdges[u][\CloseParenthesis]$}{\label{line:algo_delete_delete_primal_edge}
$\PrimCompDS.\DeleteEdge(v,x)$\tcp*[f]{The primal graph loses the edge $(v,x)$}
}
Remove $v$ from $\OutEdges[u][\CloseParenthesis]$\tcp*[f]{$v$ no longer has undirected edges due to $\CloseParenthesis$ via $u$}\label{line:algo_delete_remove_from_outedges}\\
\uIf(\tcp*[f]{$v$ had two neighbors in $\OutEdges[u][\CloseParenthesis]$}){$a\neq \bot$ and $b\neq \bot$}{
$\PrimCompDS.\InsertEdge(a,b)$\tcp*[f]{$a$ and $b$ no longer connected via $v$, connect them directly}\label{line:algo_delete_reinsert_pdscc}
}
$\MakePrimary()$\tcp*[f]{Prepare the state for the dynamic fixpoint}\label{line:algo_delete_makeprimary}\\
$\Fixpoint()$\tcp*[f]{Compute the fixpoint from the current state}\label{line:algo_delete_fixpoint}
\end{algorithm}
\Paragraph{Operation $\DeleteEdge(u,v,\ov{\alpha})$.}
\cref{algo:delete_edge} handles each $\DeleteEdge(u,v,\ov{\alpha})$ operation.
Similarly to \cref{algo:insert_edge}, \cref{algo:delete_edge} first implements the logic for sparsely maintaining the PDSCCs (\cref{line:algo_delete_pdscc} to \cref{line:algo_delete_reinsert_pdscc}). 
In particular, if $v$ is the tail of $\OutEdges[u][\CloseParenthesis]$ then $u$ is currently retrievable as a $\CloseParenthesis$-neighbor of $\PDSCC(v)$ via $v$.
Since $u$ will no longer be a $\CloseParenthesis$-neighbor of $v$, the algorithm removes $u$ from $\InEdges[v][\CloseParenthesis]$,
and makes $u$ retrievable via $w$ (\cref{line:algo_delete_inedges}), which is the new last $\CloseParenthesis$-neighbor of $u$ (after removing $v$ in \cref{line:penultimate}). 
Then, the algorithm deletes the edge $(a,v)$ and $(b,v)$ in the $\PrimCompDS$ data structure, where $a$ and $b$ are the neighbors of $v$ in $\OutEdges[u][\CloseParenthesis]$ to reflect the change in $\PDSCC(v)$ (\cref{line:algo_delete_delete_primal_edge}).
After removing $v$ from $\OutEdges[u][\CloseParenthesis]$ (\cref{line:algo_delete_remove_from_outedges}), $a$ and $b$ become neighboring nodes in $\OutEdges[u][\CloseParenthesis]$, and thus the algorithm inserts a new edge $(a,b)$ in $\PrimCompDS$, to reflect the fact that $a$ and $b$ are now directly connected in the sparse representation of PDSCCs (\cref{line:algo_delete_reinsert_pdscc}).
As the edge deletion might lead to the splitting of some DSCCs, the second step of \cref{algo:delete_edge} splits some of the previously computed DSCCs and re-initiates the fixpoint computation (\cref{line:algo_delete_makeprimary} and \cref{line:algo_delete_fixpoint}).

\smallskip
\begin{algorithm}
\small
\DontPrintSemicolon
\caption{$\MakePrimary()$}\label{algo:make_primary}
\BlankLine
\tcp{1. A forward search from $\DSCC(u)$ to gather the DSCCs potentially affected by $\DeleteEdge(u,v,\ov{\alpha})$}
$\Stack\gets$ an empty queue over $V$\tcp*[f]{A simple queue to perform the forward search}\label{line:algo_makeprimary_emptyqueue}\\
$\AffectedDSCCs\gets$ an empty set over $V$\tcp*[f]{Stores the roots of DSCCs potentially affected by $\DeleteEdge(u,v,\ov{\alpha})$}\\
Insert $\DisjointSets.\Find(v)$ in $\AffectedDSCCs$ and $\Stack$\tcp*[f]{Mark $\DSCC(v)$ as potentially affected by the delete}\label{line:algo_makeprimary_mark_dscc_v_affected}\\
\While(\tcp*[f]{Start the forward search from $\DSCC(v)$}){$\Stack$ is not empty}{\label{line:algo_makeprimary_fwdsearchloop}
Extract $x$ from $\Stack$\tcp*[f]{Proceed the search to $\DSCC(x)$}\\
\ForEach{label $\CloseParenthesisBeta\in\SetClosedParenthesis$}{
\uIf(\tcp*[f]{Note that $\DSCC(s)=\DSCC(t)$}){exist nodes $s \neq t$ with $\DSCC(x)\DTo{\CloseParenthesisBeta}s$ and $\DSCC(x)\DTo{\CloseParenthesisBeta}t$}{\label{line:algo_makeprimary_iftwoedgestot}
$z\gets \DisjointSets.\Find(t)$\tcp*[f]{The root representative of $t$'s component}\\
\uIf(\tcp*[f]{This is the first time we encounter $\DSCC(z)$}){$z\not \in \AffectedDSCCs$}{
Insert $z$ in $\AffectedDSCCs$ and $\Stack$\tcp*[f]{Mark $\DSCC(z)$ as potentially affected by the delete}\\
}
}
}
}\label{line:algo_makeprimary_firststepend}
\BlankLine
\tcp{2. Every DSCC potentially affected is split to its PDSCCs}
$\NewRoots\gets$ an empty set over $V$\tcp*[f]{Gathers the roots of the PDSCCs that DSCCs are split to}\label{line:algo_makeprimary_newroots}\\
\ForEach{$z\in \AffectedDSCCs$}{
\ForEach(\tcp*[f]{Iterate over the nodes of the affected DSCCs}){node $t$ in the component of $z$}{\label{line:algo_makeprimary_iterate_nodes_of_breaking_dsccs}
\ForEach{label $\CloseParenthesisBeta\in\SetClosedParenthesis$}{\label{line:algo_makeprimary_iterate_labels_of_breaking_dsccs}
\ForEach(\tcp*[f]{Incoming neighbor to the affected $\DSCC(z)$}){$s\in \InEdges[t][\CloseParenthesisBeta]$}{\label{line:algo_makeprimaryiterateoverpdsccinedgesofunaffected}
$x\gets\DisjointSets.\Find(s)$\tcp*[f]{The root representative of $\DSCC(s)$}\\
\uIf(\tcp*[f]{$\DSCC(x)$ is unaffected by the delete}){$x\not \in \AffectedDSCCs$}{
Reinitialize $\Edges[x][\CloseParenthesisBeta]$ to an empty list\tcp*[f]{Will be re-populated later in \cref{line:algo_makeprimary_insert_t_in_edges_x}}\label{line:algo_makeprimary_unaffected}
}
}
}
}
Split the nodes of $z$'s component in $\DisjointSets$ to its PDSCCs, using $\PrimCompDS$\label{line:algo_makeprimary_split_to_pdsccs}\\
Insert the roots of the new components in $\NewRoots$\tcp*[f]{Some of the PDSCCs might merge again}\label{line:newroots}\\
\ForEach{label $\CloseParenthesisBeta\in\SetClosedParenthesis$}{
Reinitialize $\Edges[z][\CloseParenthesisBeta]$ to an empty list \tcp*[f]{$z$ no longer represents a component in $\DisjointSets$}\label{line:makeprimary_reinitialize_edges_of_affected_nodes}
}
}\label{line:algo_makeprimary_secondstepend}
\BlankLine
\tcp{3. Gather edges incoming to the newly initialized components for new fixpoint computation}
$\NewIncomingRoots\gets$ an empty set over $V\times \SetClosedParenthesis$ \tcp*[f]{Gathers pairs $(x,\CloseParenthesisBeta)$ for which $\DSCC(x)\DTo{\CloseParenthesisBeta}\PDSCC(r)$, for $r\in \NewRoots$} \label{line:algo_makeprimary_newincoming}\\
\ForEach(\tcp*[f]{Iterate over the roots of the newly split components}){$r\in \NewRoots$}{\label{line:algo_makeprimary_iterateoverpdsccs}
\ForEach(\tcp*[f]{Iterate over the nodes of each newly split component}){node $t$ in the component of $r$}{\label{line:algo_makeprimary_iterateoverpdsccnodes}
\ForEach{label $\CloseParenthesisBeta\in\SetClosedParenthesis$}{\label{line:algo_makeprimary_iterateoverpdsccnodeslabels}
\ForEach(\tcp*[f]{Retrieve that $s\DTo{\CloseParenthesisBeta}\PDSCC(t)$}){$s\in \InEdges[t][\CloseParenthesisBeta]$}{\label{line:algo_makeprimary_iterateoverpdsccinedges}
$x\gets \DisjointSets.\Find(s)$\tcp*[f]{The root representative of $s$'s component}\label{line:algo_makeprimary_rootoft}\\
Insert $t$ in $\Edges[x][\CloseParenthesisBeta]$\tcp*[f]{The component of $s$ has an outgoing edge to $t$}\label{line:algo_makeprimary_insert_t_in_edges_x}\\
Insert $(x, \CloseParenthesisBeta)$ in $\NewIncomingRoots$\tcp*[f]{Mark that we have inserted edges in $\Edges[x][\CloseParenthesisBeta]$}\label{line:populate_L}\\
}
\uIf(\tcp*[f]{$t$ has a $\CloseParenthesisBeta$ out-neighbor}){$|\OutEdges[t][\CloseParenthesisBeta]|\geq 1$}{\label{line:algo_makeprimary_edgelist_of_pdsccs}
$y\gets$ the first node in $\OutEdges[t][\CloseParenthesisBeta]$\tcp*[f]{An arbitrary $\CloseParenthesisBeta$ out-neighbor of $t$}\label{line:first}\\
\uIf(\tcp*[f]{The component of $y$ is not splitting}){$\DisjointSets.\Find(y)\not \in \NewRoots$}{
Insert $y$ in $\Edges[r][\CloseParenthesisBeta]$\tcp*[f]{The component of $t$ has a $\CloseParenthesisBeta$ edge to $y$}\label{line:algo_makeprimary_insert_y_in_edges_r}\\
}
}
}
}
}
\ForEach(\tcp*[f]{Re-initialize the queue $\Queue$ for the new fixpoint}){$(x,\CloseParenthesisBeta)\in \NewIncomingRoots$}{\label{line:algo_makeprimary_populateQstart}
\uIf(\tcp*[f]{There are two outgoing  $\CloseParenthesisBeta$-labeled edges from $x$'s component}){$|\Edges[x][\CloseParenthesisBeta]|\geq 2$}{
Insert $(x, \CloseParenthesisBeta)$ in $\Queue$\label{line:algo_makeprimary_inserttoqueue}\\
}
}\label{line:algo_makeprimary_thirdstepend}
\end{algorithm}
\Paragraph{Function $\MakePrimary()$.}
\cref{algo:make_primary} prepares the new fixpoint computation by partially splitting some DSCCs to their constituent PDSCCs.
In words, the algorithm first computes an overapproximation of the splitting DSCCs by running a forward graph search from $\DSCC(v)$ (\cref{line:algo_makeprimary_emptyqueue} to \cref{line:algo_makeprimary_firststepend}), every time transitioning to new DSCCs by following pairs of same-label edges outgoing the current DSCC.
This transitioning reflects the intuition that such pairs of edges might have used to form the neighboring DSCC -- by splitting the current one, we might have to split the neighbor as well. It begins the forward search by  populating 
$\DSCCRep(v)$ in a set $\AffectedDSCCs$  and a queue $\Stack$.  Eventually,  after \cref{line:algo_makeprimary_emptyqueue} to \cref{line:algo_makeprimary_firststepend},  
$\AffectedDSCCs$ will contain the roots of all affected $\DSCC$s (including $\DSCC(v)$) while $\Stack$ helps with the forward search of affected $\DSCC$s.

The second step (\cref{line:algo_makeprimary_newroots} to \cref{line:algo_makeprimary_secondstepend}) of \cref{algo:make_primary} splits these potentially affected DSCCs to their constituent PDSCCs.
In \cref{line:algo_makeprimary_split_to_pdsccs}, the algorithm performs this split in $\DisjointSets$ by iterating over the nodes of the DSCCs stored in $\AffectedDSCCs$.
$\NewRoots$ holds the roots of all the newly formed $\PDSCC$s after splitting (\cref{line:newroots}).
Then  
$\Edges[z][\cdot]$ is reset, where $z$ is either the root node of an affected DSCC (\cref{line:algo_makeprimary_secondstepend})  
or is the root node of an unaffected  DSCC 
which has an outgoing edge entering one of the affected $\DSCC$s (\cref{line:algo_makeprimary_unaffected}). 
The re-initialization is done keeping in mind the  split of the affected  $\DSCC$s. 
These will be re-populated in the third step with edges outgoing or entering the nodes of the newly formed $\PDSCC$s.  
Although we are guaranteed that these $\DSCC$s do not have to be split below their $\PDSCC$s, it may be the case that certain $\PDSCC$s  have to merge again.

In its third step, \cref{algo:make_primary} identifies the edges incoming to these $\PDSCC$s, and inserts them in 
the set $\NewIncomingRoots$ (\cref{line:algo_makeprimary_newincoming} to \cref{line:populate_L}).  
$\NewIncomingRoots$ is later used to 
decide whether two $\PDSCC$s must be merged or not. 
It also populates $\Edges$ with (i) edges incoming to the $\PDSCC$s (lines  
\cref{line:algo_makeprimary_iterateoverpdsccinedges} to \cref{line:algo_makeprimary_insert_t_in_edges_x}) 
as well as with (ii) edges (\cref{line:algo_makeprimary_edgelist_of_pdsccs} to \cref{line:algo_makeprimary_insert_y_in_edges_r}) 
outgoing from the $\PDSCC$s, and entering an unaffected $\DSCC$ (those whose roots are not in $\NewRoots$).  Note that edges coming out of a $\PDSCC$ and entering an affected $\DSCC$ (those whose roots are in $\NewRoots$) 
are covered by \cref{line:algo_makeprimary_iterateoverpdsccinedges} to \cref{line:algo_makeprimary_insert_t_in_edges_x}.

To merge PDSCCs, the worklist $\Queue$ is populated 
using relevant entries  of $\NewIncomingRoots$  
to drive the new fixpoint computation (\cref{line:algo_makeprimary_newincoming} to \cref{line:algo_makeprimary_thirdstepend}).
In more detail, the algorithm iterates over all PDSCCs (\cref{line:algo_makeprimary_iterateoverpdsccs}), and for each node $t$ in each PDSCC (\cref{line:algo_makeprimary_iterateoverpdsccnodes}) and label $\CloseParenthesisBeta$ (\cref{line:algo_makeprimary_iterateoverpdsccnodeslabels}), it identifies the edges $s\DTo{\CloseParenthesisBeta} t$ by iterating over the nodes $s\in \InEdges[t][\CloseParenthesisBeta]$ (\cref{line:algo_makeprimary_iterateoverpdsccinedges}).
It then identifies the root node $x$ of $s$'s component (\cref{line:algo_makeprimary_rootoft}), and inserts $t$ in $\Edges[x][\CloseParenthesisBeta]$ as well as the pair $(x,\CloseParenthesisBeta)$ in $\NewIncomingRoots$, to record that we have an incoming edge $\cdot \DTo{\CloseParenthesisBeta} t$ to a PDSCC that the fixpoint computation needs to process.
Then, \cref{line:algo_makeprimary_insert_y_in_edges_r} inserts in $\Edges[r][\CloseParenthesisBeta]$ the edges $t\DTo{\CloseParenthesisBeta}y$ that outgo $t$'s component (rooted at $r$) and for which $y$ is not in a breaking component.
Finally, the algorithm identifies the pairs $(x,\CloseParenthesisBeta)$ in $\NewIncomingRoots$ for which it has recorded at least two $\CloseParenthesisBeta$ neighbors out of $x$'s component, and inserts them in $\Queue$ to be processed by the fixpoint algorithm later (\cref{line:algo_makeprimary_populateQstart}).

By structuring $\MakePrimary$ this way we have the following benefits:
(i)~we spend no time in the non-affected DSCCs, and
(ii)~because we maintain the PDSCCs, the time cost is only linear in the number of \emph{nodes} of these affected DSCCs (and hence $O(n)$ in the worst case), as opposed to linear in the number of the \emph{edges} incoming to these affected DSCCs (which would be $O(n^2)$). This ensures that 
$\ell=\sum_{x,\CloseParenthesis} |\Edges[x][\CloseParenthesis]|=O(n)$ at the end of $\MakePrimary$ before calling $\Fixpoint$; 
as shown in \cite[Lemma~3.4, Lemma~3.5]{Chatterjee18}, a call to $\Fixpoint()$ takes time $O(\ell \cdot \alpha(n))$.

\subsection{Example}\label{subsec:example}

Here we give a step-by-step illustration of $\DynamicAlgo$ on the example of \cref{fig:illustrative_example}.
We start with a description of the contents of the various data structures before handling the edge insertion.

Let us fix some notations. Let $(v_1,v_2,\ov{R})\colon k$ denote that 
there are $k$ edges from $v_1$ to $v_2$ labeled $\ov{R}$. The elements 
of the set $\Count$ will contain entries of this kind.  Likewise, 
let $(v,\ov{L})\colon [v_1,v_2, \dots, v_k]$ represent that 
there are out edges labeled $\ov{L}$ from $v$ to $v_1, \dots, v_k$. 
 $\OutEdges$  contains entries of this kind. We let $\InEdges$ to contain 
entries of the kind $(v, \ov{L}) : [v_1, \dots, v_k]$ to denote that there are edges 
labeled $\ov{L}$ from each of $v_1, \dots, v_k$ to $v$. The components in the $\DisjointSets$ and $\PrimCompDS$ data structures are listed as sequences of nodes $\{(a),b,c,\dots\}$, where the node in the parenthesis denotes the component representative.

\begin{compactenum}
\item $\Count : \{(c,g,\ov{R})\colon 1 ,\, (f,c,\ov{L})\colon 1 ,\, (f,e,\ov{L})\colon 1 ,\, (f,d,\ov{L})\colon 1 ,\, (g,e,\ov{L})\colon1 ,\, (h,f,\ov{L})\colon 1 \}$
\item $\OutEdges: \{(f,\ov{L})\colon [e,c,d] ,\, (g,\ov{L})\colon [e] ,\, (h,\ov{L})\colon [f],\, (c,\ov{R})\colon [g]\}$
\item $\InEdges: \{(d,\ov{L})\colon [f],\, (e,\ov{L})\colon [g],\, (f,\ov{L})\colon [h],\, (g,\ov{R})\colon [c] \}$
\item $\Edges: \{(f,\ov{L}): [c] ,\, (c,\ov{R}): [g],\, (g,\ov{L}): [e],\, (h,\ov{L}): [f]\}$
\item $\DisjointSets: \{(c),d,e\},\, \{(g)\},\, \{(f)\},\, \{(h)\}$
\item $\PrimCompDS-\PDSCC: \{(c),d,e\}, \{(g)\}, \{(f)\},\, \{(h)\}$ 
\item  $\Queue : \emptyset$
\end{compactenum}
We now describe the development of the data structures during the edge insertion $(d,h,\ov{R})$.
\begin{compactenum}
\item $\Count[d][h][\ov{R}] = 1$.
\item Since $|\OutEdges[d][\ov{R}]|==0$, the algorithm adds $d$ to the 
$\InEdges[h][\ov{R}]$.
Then, the  algorithm inserts $h$ as the head of $\OutEdges[d][\ov{R}]$. 
\item 
Get the root of $\DSCC$ containing $d$ , i.e. $c = \DisjointSets.\Find(d)$. 
Then add $h$ to $\Edges[c][\ov{R}]$. 
So, updated $\Edges[c][\ov{R}] = (g,h)$. 
\item Since, $|\Edges[c][\ov{R}]| \ge 2$. 
The algorithm inserts $(c,\ov{R})$ in $\Queue$ and invokes $\Fixpoint()$.
\item Inside $\Fixpoint()$: First, we extract $(c,\ov{R})$ from $\Queue$. 
Then we get the set $S = \{g,h\}$. 
Since $|S| \ge 2$, the algorithm merges  $\DSCC$s $[(g)], [(h)]$ into single $\DSCC [(h),g]$. 
Then, moves $\Edges[g][\ov{L}]$ to $\Edges[h][\ov{L}]$.
Now, since $|\Edges[h][\ov{L}]| \ge 2$, insert $(h,\ov{L})$ to $\Queue$.
This will lead to merging of $\DSCC$s $[(c),d,e]$ and $[(f)]$ into $\DSCC$ $[(f),c,d,e]$.
This finishes the fixpoint.
\end{compactenum}
After the insertion, the data structures are in the following state.
\begin{compactenum}
\item $\Count$: new entry $(d,h,\ov{R})\colon 1$;
$\OutEdges$: new entry $(d,\ov{R})\colon [h]$; 
$\InEdges$: new entry $(h,\ov{R})\colon [d]$;
$\Queue : \emptyset$
\item $\Edges: \{(f,\ov{L}): [c] ,\, (f,\ov{R}): [h], \, (h,\ov{L}): [f]$
\item $\DisjointSets: \{(f),c,d,e\},\, \{(h) , g\}$
\item $\PrimCompDS-\PDSCC: \{(c),d,e\},\, \{(g)\},\, \{(f)\},\, \{(h)\}$
\end{compactenum}
We now describe the development of the data structures during the edge deletion $(f,d,\ov{L})$.
\begin{compactenum}
\item $\Count[f][d][\ov{L}] = 0$.
Since $d$ is at tail of $\OutEdges[f][\ov{L}]$, remove $f$ from 
$\InEdges[d][\ov{L}]$.
\item In $\OutEdges[f][\ov{L}]$, $c$ is penultimate element. 
Hence, the algorithm inserts $f$ in $\InEdges[c][\ov{L}]$.
\item Remove edge $(c,d)$ from $\PrimCompDS$. 
\item Remove $d$ from $\OutEdges[f][\ov{L}]$.
\item Invoke $\MakePrimary()$:
\begin{compactenum}
    \item 
    Insert $f=\DSCCRep(d)$ to $\Stack$ and $\AffectedDSCCs$.
\item In the loop from \cref{line:algo_makeprimary_fwdsearchloop} to \cref{line:algo_makeprimary_firststepend} 
the  algorithm inserts $h=\DSCCRep(g)$ to $\Stack$ and $\AffectedDSCCs$. 
\item $\AffectedDSCCs$ contains $f, h$ and by following $\InEdges$, algorithm re-initializes $\Edges[f][\ov{L}]$, $\Edges[f][\ov{R}]$ and $\Edges[h][\ov{L}]$ to an empty list.
\item Since $\AffectedDSCCs$ contains $f$ and $h$, we split $\DSCC(f)$ and $\DSCC(h)$ to their $\PDSCC$s. 
Then the algorithm inserts roots of new components in $\NewRoots$. 
So, we get $\NewRoots = \{ c , d , f, g, h\}$.
\item In the next step the algorithm gathers edges for fixpoint computations. 
For each root $r$ in $\NewRoots$, the algorithm iterate over all nodes in $\DSCC(r)$.
For each node $t$ in $\DSCC(r)$ and each edge type $\CloseParenthesisBeta \in \{\ov{L} , \ov{R}\}$ the algorithm checks if there exists node $s \in \InEdges[t][\CloseParenthesisBeta]$.
For all such nodes $s$, the algorithm inserts $t$ in $\Edges[\DSCCRep(s)][\CloseParenthesisBeta]$ 
and $(\DSCCRep(s),\CloseParenthesisBeta)$ in $\NewIncomingRoots$ (lines \cref{line:algo_makeprimary_iterateoverpdsccinedges} to \cref{line:populate_L}) .
In lines \cref{line:algo_makeprimary_edgelist_of_pdsccs} - \cref{line:algo_makeprimary_insert_y_in_edges_r} the algorithm checks if $\OutEdges[t][\CloseParenthesisBeta] \ge 1$. 
If yes, then inserts $x=$ first node of $\OutEdges[t][\CloseParenthesisBeta]$ in $\Edges[r][\CloseParenthesisBeta]$.

So after \cref{line:algo_makeprimary_insert_y_in_edges_r} the updated 
$\Edges$ and $\NewIncomingRoots$ will be:
\begin{compactenum}
    \item $\Edges: \{(c,\ov{R})\colon[g], \, (f,\ov{L})\colon[c],\,(h,\ov{L})\colon[f],\,(g,\ov{L})\colon[e],
    (d,\ov{R})\colon[h]$. 
    \item $\NewIncomingRoots: (d,\ov{R}), (f,\ov{L}), (c,\ov{R}) , (g,\ov{L}), 
    (h,\ov{L})$.
\end{compactenum}
Then the algorithm iterates over all $(x,\CloseParenthesisBeta)$ $ \in \NewIncomingRoots$,
and checks if $\Edges[x][\CloseParenthesisBeta] \ge 2$. 
If yes, then inserts $(x,\CloseParenthesisBeta)$ in $\Queue$.  
So, for the $\NewIncomingRoots$, computed in above step,
no element will be added to the $\Queue$.
\end{compactenum}

\item Since $\Queue$ is empty, the call to $\Fixpoint()$ will not merge any components.
\end{compactenum}

After the deletion, the data structures are in the following state.
\begin{compactenum}
\item $\Count : \{(c,g,\ov{R})\colon 1 ,\, (f,c,\ov{L})\colon 1 ,\, (f,e,\ov{L})\colon 1 ,\, (f,d,\ov{L})\colon 0 ,\, (g,e,\ov{L})\colon 1 ,\, (h,f,\ov{L})\colon 1 ,\, (d,h,\ov{R})\colon 1\}$, 
\item $\OutEdges: \{(f,\ov{L})\colon [e,c] ,\, (g,\ov{L})\colon [e] ,\, (h,\ov{L})\colon [f],\, (c,\ov{R})\colon [g] ,\, (d,\ov{R})\colon [h]\}$,  
\item $\InEdges: \{(c,\ov{L})\colon [f],\, (e,\ov{L})\colon [g],\, (f,\ov{L})\colon [h], \, (g,\ov{R})\colon [c] ,\, (h,\ov{R})\colon [d]\}$
\item $\Edges: \{ (c,\ov{R}):[g] ,\, (f,\ov{L}):[c],\, (h,\ov{L}):[f], \, (g,\ov{L}):[e],\, (d,\ov{R}):[h] \}$
\item $\DisjointSets: \{(c),e\} , \{(d)\}, \{(g)\} , \{(f)\} , \{(h)\}$
\item $\PrimCompDS-\PDSCC: \{(c),e\} , \{(d)\}, \{(g)\} , \{(f)\} , \{(h)\}$
\item  $\Queue : \emptyset$
\end{compactenum}

\subsection{Correctness and Complexity}\label{subsec:analysis}

Finally, we state the correctness and complexity of $\DynamicAlgo$.

\thmmaintheorem*

Next, we state the main invariants that $\DynamicAlgo$ maintains which support its correctness and complexity, as well as some intuition behind them.
For proofs, we refer to \cref{sec:app_proofs}.

\Paragraph{Correctness.}
The basis of the correctness of $\DynamicAlgo$ is a number of invariants that are maintained along edge insertions and deletions.
Observe that $\OutEdges[u][\CloseParenthesis]$ is, at all times, a linked list representation of the edge set $u\DTo{\CloseParenthesis}\cdot$.

Our first invariant concerns the correct maintenance of the PDSCCs of $G$ in the $\PrimCompDS$ data structure maintaining undirected connectivity.
To prove the invariant, we argue that $\DynamicAlgo$ inserts and removes sufficient and necessary undirected edges in $\PrimCompDS$ to (sparsely) represent the connected components of the corresponding primal graph.
This follows directly from the sparsification approach outlined in \cref{subsec:intuition} and \cref{fig:pdsccs}.
Indeed, the algorithm maintains in $\PrimCompDS$ a set of edges that connect neighboring nodes in each linked list $\OutEdges[u][\CloseParenthesis]$.
Thus, the omitted edges (i.e., between non-neighboring nodes in $\OutEdges[u][\CloseParenthesis]$) are anyway transitively connected in $\PrimCompDS$.
Formally, we have the following lemma.

\begin{restatable}{lemma}{lemcorrectnesspdsccs}\label{lem:correctness_pdsccs}
After every insert and delete operation,
the connected components in $\PrimCompDS$ are precisely the PDSCCs of $G$.
\end{restatable}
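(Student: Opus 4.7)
The plan is to proceed by induction on the length of the sequence of $\InsertEdge$ and $\DeleteEdge$ operations processed so far. The base case, before any operation, is trivial: all $\OutEdges[u][\CloseParenthesis]$ lists are empty, $\PrimCompDS$ has no edges, and the primal graph $H$ has no edges, so both partitions consist of singletons. For the inductive step, fixing an operation, it suffices to show that the symmetric difference between the edge set of $H$ before and after the operation induces exactly the same change in the connectivity of $\PrimCompDS$ as the algorithm effects through its calls to $\PrimCompDS.\InsertEdge$ and $\PrimCompDS.\DeleteEdge$.

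The key structural invariant I would maintain along the induction is the following: for every node $u$ and label $\CloseParenthesis$, if $\OutEdges[u][\CloseParenthesis]=(x_1,\dots,x_r)$, then $\PrimCompDS$ contains precisely the edges $(x_i,x_{i+1})$ for $i\in[r-1]$ that are contributed by this particular pair $(u,\CloseParenthesis)$. Combined with the induction hypothesis, this invariant immediately implies that all nodes of $\OutEdges[u][\CloseParenthesis]$ lie in the same connected component of $\PrimCompDS$, which matches the fact that they all belong to the same PDSCC in $H$ (since $H$ contains the full clique on $\{x_1,\dots,x_r\}$). Thus the sparsified representation captures the same transitive connectivity as the dense primal graph.

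For $\InsertEdge(u,v,\CloseParenthesis)$: the primal graph $H$ gains edges $(v,x_i)$ for every $x_i$ currently in $\OutEdges[u][\CloseParenthesis]$. \cref{algo:insert_edge} adds only the edge $(v,y)$ to $\PrimCompDS$, where $y$ is the current head $x_1$. Using the invariant above, $y$ is already connected in $\PrimCompDS$ to every other $x_i$, so the connectivity change in $\PrimCompDS$ matches the change in $H$. If $r=0$ beforehand, no primal edge is added to $H$ and none to $\PrimCompDS$, maintaining the invariant with $v$ as a new head.

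The delete case will be the main obstacle, since removing $v$ from $\OutEdges[u][\CloseParenthesis]$ can, in principle, disconnect nodes that were connected only through $v$. Here I would argue case by case on the position of $v$ in $\OutEdges[u][\CloseParenthesis]$. If $v$ is an endpoint of the list, then at most one edge incident to $v$ was stored in $\PrimCompDS$ for this pair $(u,\CloseParenthesis)$, and removing it preserves the invariant for the shortened list. If $v$ is an interior element with neighbors $a,b$, then $\PrimCompDS$ loses both $(a,v)$ and $(v,b)$ but gains $(a,b)$ on \cref{line:algo_delete_reinsert_pdscc}, which exactly restores the path along the shortened list and preserves the invariant. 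In parallel, the primal graph $H$ loses exactly the edges $(v,x)$ for every $x$ still in $\OutEdges[u][\CloseParenthesis]$, and no other edge, because any other pair $(u',\CloseParenthesis')$ that previously contributed an edge incident to $v$ is untouched by this operation. One must finally check that $v$ itself becomes disconnected from the component of the remaining $x_i$'s in $\PrimCompDS$ exactly when it becomes disconnected in $H$, i.e.\ when no other $(u',\CloseParenthesis')$ contributes such a connection; this follows because the algorithm only manipulates edges of $\PrimCompDS$ tagged to the specific pair $(u,\CloseParenthesis)$, and all other contributions are preserved unchanged. Combining the two directions completes the inductive step and establishes the lemma.
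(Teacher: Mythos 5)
Your proposal is correct and follows essentially the same route as the paper's proof: both rest on the invariant that $\PrimCompDS$ stores exactly the edges between consecutive nodes of each list $\OutEdges[u][\CloseParenthesis]$, so that each list's stored path has the same connectivity as the clique those nodes induce in the primal graph $H$. Your write-up is in fact somewhat more explicit than the paper's about the case analysis on the position of $v$ in the list during deletion and about why contributions from other pairs $(u',\CloseParenthesisBeta)$ are untouched.
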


The next invariant concerns the correct maintenance of the $\InEdges$ data structure, and is established in \cref{lem:soundness_inedges} and \cref{lem:completeness_inedges}.
In words, the invariant states that $x\in \InEdges[z][\CloseParenthesisBeta]$ iff we have an edge $x\DTo{\CloseParenthesisBeta}z$ and $z$ is the last node in $\OutEdges[x][\CloseParenthesisBeta]$.
Thus, when $\MakePrimary()$ constructs the PDSCCs and discovers their incoming edges, the edge $x\DTo{\CloseParenthesisBeta}\PDSCC(z)$ is correctly discovered by finding that $x\in \InEdges[z][\CloseParenthesisBeta]$ (\cref{line:algo_makeprimary_iterateoverpdsccinedges}).

\cref{lem:soundness_inedges} is concerned with the soundness, and is straightforward.
Any time the algorithm inserts $x\in\InEdges[z][\CloseParenthesisBeta]$, this is followed by inserting $z$ in $\OutEdges[x][\CloseParenthesisBeta]$ (in the case of edge insertions,  \cref{line:algo_insert_inedges} and \cref{line:algo_insert_outedges} in \cref{algo:insert_edge}). 
In the case of edge deletions, the insertion $x\in\InEdges[z][\CloseParenthesisBeta]$ is preceded by inserting $z$ to $\OutEdges[x][\CloseParenthesisBeta]$ in a previous update 
(\cref{line:penultimate} and \cref{line:algo_delete_inedges} in \cref{algo:delete_edge}). In this case,  $x\in\InEdges[z][\CloseParenthesisBeta]$ takes place when $z$ becomes the last node  in $\OutEdges[x][\CloseParenthesisBeta]$ after deletion of edges $x\DTo{\CloseParenthesisBeta} v$, where 
$v$ appeared later than $z$ in $\OutEdges[x][\CloseParenthesisBeta]$. 

\begin{restatable}{lemma}{lemsoundnessinedges}\label{lem:soundness_inedges}
After every insert and delete operation,
for every two nodes $x,z\in V$ and label $\CloseParenthesisBeta\in \SetClosedParenthesis$, if  $x\in\InEdges[z][\CloseParenthesisBeta]$  then $x\DTo{\CloseParenthesisBeta}z$ and $z$ is the last node in $\OutEdges[x][\CloseParenthesisBeta]$.
\end{restatable}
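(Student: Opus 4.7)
The plan is to prove the invariant by induction on the number of processed operations. The base case is the initial state (an empty graph), in which $\InEdges[z][\CloseParenthesisBeta] = \emptyset$ for every $z$ and $\CloseParenthesisBeta$, so the implication is vacuously true. For the inductive step, I assume the invariant holds after the first $i-1$ operations and verify that it continues to hold after the $i$-th. A useful preliminary observation is that the subroutines $\MakePrimary()$ and $\Fixpoint()$ never write to $\InEdges$ or $\OutEdges$; they only touch $\DisjointSets$, $\PrimCompDS$, $\Edges$, and $\Queue$. Consequently it suffices to analyze the top-level bodies of \cref{algo:insert_edge} and \cref{algo:delete_edge}.

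For an insert $\InsertEdge(u,v,\CloseParenthesis)$, I split on whether the multiplicity counter makes the operation a genuine graph change and, when it does, on the size of $\OutEdges[u][\CloseParenthesis]$ as tested in \cref{line:algo_insert_pdscc} of \cref{algo:insert_edge}. If the list is empty, the algorithm inserts $u$ into $\InEdges[v][\CloseParenthesis]$ (\cref{line:algo_insert_inedges}) and then $v$ as the only element of $\OutEdges[u][\CloseParenthesis]$ (\cref{line:algo_insert_outedges}), making $v$ both head and tail and validating the new entry. If the list is nonempty, no entry is added to any $\InEdges$ set, and $v$ is inserted at the \emph{head} of $\OutEdges[u][\CloseParenthesis]$, so the tail is unchanged and the inductive hypothesis transfers to every existing entry.

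For a delete $\DeleteEdge(u,v,\CloseParenthesis)$ that drops the multiplicity to zero, I split on whether $v$ is the current tail of $\OutEdges[u][\CloseParenthesis]$. When it is, the inductive hypothesis guarantees that the only entry of the form $u\in\InEdges[\cdot][\CloseParenthesis]$ is $u\in\InEdges[v][\CloseParenthesis]$. The algorithm removes this entry and, if a penultimate node $w$ is present, reinserts $u$ into $\InEdges[w][\CloseParenthesis]$ (\cref{line:penultimate,line:algo_delete_inedges}) \emph{before} physically detaching $v$ in \cref{line:algo_delete_remove_from_outedges}. Since the edge $u\DTo{\CloseParenthesis}w$ was already present and $w$ becomes the post-removal tail, the new entry satisfies the invariant. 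When $v$ is not the tail, the algorithm leaves $\InEdges$ untouched for the label $\CloseParenthesis$, and removing $v$ from the interior of $\OutEdges[u][\CloseParenthesis]$ does not disturb its tail, so the hypothesis is preserved verbatim. All other $\InEdges$ entries, unrelated to the label $\CloseParenthesis$ or the node $u$, are syntactically untouched by either algorithm and carry over by IH.

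The only real subtlety, and thus the main obstacle, is enforcing this order-dependence in \cref{algo:delete_edge}: the penultimate node must be read off $\OutEdges[u][\CloseParenthesis]$ while $v$ is still present, and the update to $\InEdges[w][\CloseParenthesis]$ must be scheduled before the physical removal of $v$ from the linked list. Once this sequencing is confirmed (which is exactly how the pseudocode is laid out), the case analysis above closes each branch and establishes the invariant after every operation.
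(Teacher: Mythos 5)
Your proof is correct and follows essentially the same route as the paper's: an induction over operations that checks, for each place where an $\InEdges$ entry is created (the first $\CloseParenthesis$-neighbor case on insertion, and the penultimate-node case on deletion of the tail), that the referenced node is indeed the tail of the corresponding $\OutEdges$ list. Your version is somewhat more explicit than the paper's (base case, the observation that $\Fixpoint()$ and $\MakePrimary()$ never write to $\InEdges$/$\OutEdges$, and the non-tail deletion case), but the substance is the same.
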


Similarly, for \cref{lem:completeness_inedges}, if $x\DTo{\CloseParenthesisBeta}\PDSCC(y)$,
then the last node $z$ in $\OutEdges[x][\CloseParenthesisBeta]$ is also in $\PDSCC(y)$, while the algorithm maintains that $x\in\InEdges[z][\CloseParenthesisBeta]$.
Hence, the fact that $x\DTo{\CloseParenthesisBeta}\PDSCC(y)$ is recoverable via discovering that $x\in\InEdges[z][\CloseParenthesisBeta]$.

\begin{restatable}{lemma}{lemcompletenessinedges}\label{lem:completeness_inedges}
After every insert and delete operation,
for every pair of nodes $x,y\in V$ and label $\CloseParenthesisBeta\in\SetClosedParenthesis$, if $x\DTo{\CloseParenthesisBeta}\PDSCC(y)$ then there exists a node $z\in \PDSCC(y)$ such that $x\in\InEdges[z][\CloseParenthesisBeta]$.
\end{restatable}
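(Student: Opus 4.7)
The plan is to prove this lemma by reducing it to two simpler facts about $\OutEdges$, and then combining them with \Cref{lem:correctness_pdsccs} and \Cref{lem:soundness_inedges}. Unpacking the hypothesis, $x\DTo{\CloseParenthesisBeta}\PDSCC(y)$ means there exists some $y'\in\PDSCC(y)$ with $x\DTo{\CloseParenthesisBeta}y'$, and hence $y'$ appears in the linked list $\OutEdges[x][\CloseParenthesisBeta]$. The natural candidate for the witness $z$ is the \emph{tail} of $\OutEdges[x][\CloseParenthesisBeta]$; I would argue that (a) this $z$ lies in $\PDSCC(y)$, and (b) $x\in \InEdges[z][\CloseParenthesisBeta]$.

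For (a), I would establish an auxiliary invariant: \emph{every pair of consecutive entries in $\OutEdges[u][\CloseParenthesisBeta]$ is connected in $\PrimCompDS$}. This follows by a short induction on the sequence of updates, inspecting exactly how the linked list is maintained. Specifically, when $\InsertEdge$ prepends a new head $v$ (\cref{line:algo_insert_outedges}), it first inserts the primal edge $(v,y)$ with the old head $y$ into $\PrimCompDS$ (\cref{line:algo_insert_insert_primal_edge}); and when $\DeleteEdge$ removes an interior node $v$ with neighbors $a,b$ (\cref{line:algo_delete_remove_from_outedges}), it first removes the primal edges $(a,v),(v,b)$ and then reinserts $(a,b)$ (\cref{line:algo_delete_reinsert_pdscc}). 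Transitively, all entries in $\OutEdges[x][\CloseParenthesisBeta]$ lie in a single connected component of $\PrimCompDS$, which by \Cref{lem:correctness_pdsccs} is exactly one PDSCC. Since $y'$ is in this list, the tail $z$ lies in $\PDSCC(y')=\PDSCC(y)$.

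For (b), I would establish the complementary invariant that \emph{whenever $\OutEdges[u][\CloseParenthesisBeta]$ is nonempty, $u\in\InEdges[z][\CloseParenthesisBeta]$ for the tail $z$}. Again this is a straightforward induction. On the insertion side, the only time a node is added to $\InEdges$ is in \cref{line:algo_insert_inedges}, which triggers precisely when $\OutEdges[u][\CloseParenthesisBeta]$ transitions from empty to the singleton list $[v]$ with tail $v$. Subsequent insertions prepend a new head but leave the tail (and hence the $\InEdges$ entry) untouched. On the deletion side, the only updates that can change the tail occur in the branch starting at \cref{line:algo_delete_pdscc}: when the deleted node $v$ is the current tail, $u$ is first removed from $\InEdges[v][\CloseParenthesisBeta]$ and then reinserted into $\InEdges[w][\CloseParenthesisBeta]$ where $w$ is the new tail (the penultimate element from \cref{line:penultimate}), thereby restoring the invariant. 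If $v$ is not the tail, the tail is unchanged and the invariant persists from the previous step by the inductive hypothesis.

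Combining (a) and (b) immediately yields the lemma. The main obstacle, and the part that warrants the most care, is ensuring in the inductive step of (b) that every branch of $\InsertEdge$ and $\DeleteEdge$ that touches $\OutEdges[u][\CloseParenthesisBeta]$ correctly re-establishes the tail-membership invariant; in particular one must handle the corner case where $v$ is simultaneously the head and tail (singleton list) of $\OutEdges[u][\CloseParenthesisBeta]$, in which case the list becomes empty after deletion and the invariant becomes vacuous. Everything else reduces to a direct line-by-line inspection of \Cref{algo:insert_edge,algo:delete_edge}.
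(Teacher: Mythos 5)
Your proposal is correct and follows essentially the same route as the paper's proof: an induction over the update sequence in which the witness $z$ is always the last node of $\OutEdges[x][\CloseParenthesisBeta]$, with the insertion and deletion branches of \cref{algo:insert_edge} and \cref{algo:delete_edge} checked to preserve this. Your explicit split into the two sub-invariants (all entries of $\OutEdges[x][\CloseParenthesisBeta]$ lie in one PDSCC via the sparsified edges in $\PrimCompDS$, and the tail carries the $\InEdges$ membership) is a slightly more careful packaging of what the paper states more tersely, but it is the same argument.
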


The next lemma captures the correctness of $\MakePrimary()$ and the invariants concerning the state that it passes on to function $\Fixpoint()$ for the final fixpoint computation after processing a $\DeleteEdge(u,v,\CloseParenthesis)$ update.
Recall that $\MakePrimary()$ identifies an overapproximation of the DSCCs that have to be split and rebuilt as a result of this edge deletion.
The lemma has two parts.
\cref{item:make_primary_lemma_dsccs} states that the components that $\MakePrimary()$ passes on to $\Fixpoint()$ (i.e., those found in $\DisjointSets$) are a refinement of the DSCC decomposition of $G$ after the deletion, i.e., it suffices to merge some of them in order to arrive at the correct DSCC-decomposition of the graph after the edge deletion.
The $\Fixpoint()$ function will perform this merging by processing the edges found in the $\Edges$ data structure.
\cref{item:make_primary_lemma_edges} states that $\MakePrimary()$ populates the $\Edges$ data structure with sufficiently many edges for $\Fixpoint()$ to process and arrive at the correct DSCC decomposition of $G$.

\begin{restatable}{lemma}{lemmakeprimarycorrectness}\label{lem:make_primary_correctness}
At the end of $\MakePrimary()$, the following assertions hold.
\begin{compactenum}
\item \label{item:make_primary_lemma_dsccs} Every component in $\DisjointSets$ is a (not necessarily maximal) DSCC of $G$.
\item\label{item:make_primary_lemma_edges} For every component in $\DisjointSets$ rooted in some node $x$, the following hold.
\begin{compactenum}[label=(\alph*)]
\item\label{item:make_primary_lemma_edges_sound} For every node $y$ and label $\CloseParenthesisBeta$, if $y\in\Edges[x][\CloseParenthesisBeta]$ then there is an edge $z\DTo{\CloseParenthesisBeta}w$, where $z$ is a node in the component of $x$, and $w$ is a node in the component of $y$ in $\DisjointSets$.
\item\label{item:make_primary_lemma_edges_complete} 
For every node $z$ and label $\CloseParenthesisBeta$ such that 
(i) ~$z$ is in the component rooted at node $x$ in $\DisjointSets$, and
~(ii) there is an edge $z\DTo{\CloseParenthesisBeta}w$,  
there exists a node $y$ in the component of $w$ in $\DisjointSets$ such that $y\in \Edges[x][\ov{\beta}]$.
\end{compactenum}
\end{compactenum}
\end{restatable}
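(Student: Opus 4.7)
The plan is to prove the two assertions by carefully tracing the effect of $\MakePrimary()$ on $\DisjointSets$ and $\Edges$, and by splitting components into two classes: those inherited from unaffected DSCCs (i.e., not in $\AffectedDSCCs$), and those produced by splitting an affected DSCC into PDSCCs in \cref{line:algo_makeprimary_split_to_pdsccs}.

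For \cref{item:make_primary_lemma_dsccs}, the key claim I would establish is that $\AffectedDSCCs$ is a sound overapproximation of the DSCCs that split as a result of $\DeleteEdge(u,v,\CloseParenthesis)$. More precisely, if $S$ is a DSCC of the pre-deletion graph and $S \notin \AffectedDSCCs$, then $S$ remains a (not necessarily maximal) DSCC after the deletion. I would argue this by an induction along the offline fixpoint that originally produced $S$: every merge witnessing $S$'s formation is caused by a pair of same-label edges out of some DSCC $T$ into $S$; if $T$ were affected, the forward search of \cref{line:algo_makeprimary_fwdsearchloop}--\cref{line:algo_makeprimary_firststepend} would have propagated from $T$ to $S$ via \cref{line:algo_makeprimary_iftwoedgestot}, contradicting $S \notin \AffectedDSCCs$. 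Hence every merge witness for $S$ is preserved post-deletion. For components obtained from splitting affected DSCCs, they are exactly the PDSCCs of the post-deletion graph by \cref{lem:correctness_pdsccs}; since any two nodes in the same PDSCC are connected by a Dyck path of length two, each such PDSCC is contained in some DSCC of $G$ post-deletion.

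For \cref{item:make_primary_lemma_edges_sound} (soundness of $\Edges$), I would proceed by inspecting every point where $\MakePrimary()$ writes to $\Edges$. Every insertion $t \in \Edges[x][\CloseParenthesisBeta]$ at \cref{line:algo_makeprimary_insert_t_in_edges_x} is preceded by $s \in \InEdges[t][\CloseParenthesisBeta]$ with $x = \DisjointSets.\Find(s)$, so \cref{lem:soundness_inedges} supplies the witnessing edge $s \DTo{\CloseParenthesisBeta} t$. Every insertion $y \in \Edges[r][\CloseParenthesisBeta]$ at \cref{line:algo_makeprimary_insert_y_in_edges_r} takes $y$ as the head of $\OutEdges[t][\CloseParenthesisBeta]$, which is itself a faithful list of edges outgoing $t$, supplying the edge $t \DTo{\CloseParenthesisBeta} y$. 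In each case $z$ and $w$ lie in the required components by construction.

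The main obstacle is \cref{item:make_primary_lemma_edges_complete} (completeness of $\Edges$), which I would prove by a case analysis on whether $\DSCC(z)$ and $\DSCC(w)$ (taken pre-deletion) lie in $\AffectedDSCCs$. If $\DSCC(w)$ is affected, then $w$ belongs to some newly formed PDSCC and the edge $z \DTo{\CloseParenthesisBeta} w$ is recovered by the block \cref{line:algo_makeprimary_iterateoverpdsccinedges}--\cref{line:populate_L}, using \cref{lem:completeness_inedges} to guarantee the existence of a representative $w' \in \PDSCC(w)$ with $z \in \InEdges[w'][\CloseParenthesisBeta]$. If $\DSCC(w)$ is unaffected but $\DSCC(z)$ is affected, then $z$ lies in a new PDSCC rooted at some $r$, and the edge is recorded by \cref{line:algo_makeprimary_insert_y_in_edges_r} via the first entry of $\OutEdges[z][\CloseParenthesisBeta]$, which must lie in the same component as $w$ (any two $\CloseParenthesisBeta$-out-neighbors of $z$ share a PDSCC, and that PDSCC is unaffected, so $\DisjointSets$ has not split it). The delicate remaining case is when both $\DSCC(z)$ and $\DSCC(w)$ are unaffected: here completeness hinges on an inductive invariant maintained across updates, stating that $\Edges[x][\CloseParenthesisBeta]$ correctly records the outgoing $\CloseParenthesisBeta$-edges of the component rooted at $x$. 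The possible source of staleness is the splitting of $w$'s component, and this is precisely intercepted by the reinitialization at \cref{line:algo_makeprimary_unaffected}, which wipes $\Edges[x][\CloseParenthesisBeta]$ whenever it could point into an affected DSCC and lets the subsequent repopulation via $\InEdges$ reinstate the correct entries. Verifying that no edge escapes this case split — in particular that the reinitialization is triggered exactly when needed and that the repopulation covers every such $(x,\CloseParenthesisBeta)$ — is the technically intricate core of the proof.
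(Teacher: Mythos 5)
Your proposal is correct and follows essentially the same route as the paper's proof: soundness of the $\AffectedDSCCs$ overapproximation via the forward search, PDSCCs-are-DSCCs for \cref{item:make_primary_lemma_dsccs}, inspection of the two write sites for \cref{item:make_primary_lemma_edges_sound}, and a case analysis plus \cref{lem:soundness_inedges}/\cref{lem:completeness_inedges} for \cref{item:make_primary_lemma_edges_complete}. Your case split for completeness (by affectedness of $\DSCC(z)$ and $\DSCC(w)$) is an equivalent reorganization of the paper's split by the type of the root $x$, and your observation that a wipe of $\Edges[x][\CloseParenthesisBeta]$ at \cref{line:algo_makeprimary_unaffected} can only occur when \emph{all} $\CloseParenthesisBeta$-targets of that component lie in the affected DSCC (so the repopulation via $\InEdges$ recovers everything) is exactly the point that closes the argument.
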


Given the above invariants, $\MakePrimary$ creates a correct state of the worklist $\Queue$ and the $\Edges$ linked lists for the $\Fixpoint()$ call of \cref{algo:delete_edge} (\cref{line:algo_delete_fixpoint}) to compute the correct DSCC decomposition.
This leads ot the correctness of $\DynamicAlgo$.

\begin{restatable}{lemma}{lemcorrectness}\label{lem:correctness}
After every $\InsertEdge(u,v,\CloseParenthesis)$ and $\DeleteEdge(u,v,\CloseParenthesis)$,
$\DisjointSets$ contains the DSCCs of $G$.
\end{restatable}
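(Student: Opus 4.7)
My plan is to prove \cref{lem:correctness} by induction on the number of update operations processed so far. The base case (empty graph, or whatever $\Initialization()$ leaves us in) is immediate since every node forms its own singleton DSCC. For the inductive step, I assume $\DisjointSets$ correctly stores the DSCCs of the graph $G$ before the current operation and argue separately for inserts and deletes, leaning on the invariants already established in \cref{lem:correctness_pdsccs,lem:soundness_inedges,lem:completeness_inedges,lem:make_primary_correctness} and on the fact that the $\Fixpoint()$ routine of \cref{algo:offlinealgo} is correct whenever its input state $(\DisjointSets, \Edges, \Queue)$ satisfies the offline-algorithm precondition, namely: (i) each component in $\DisjointSets$ is a (not necessarily maximal) DSCC of $G$, and (ii) $\Edges$ soundly and completely represents the inter-component edges of $G$, with $\Queue$ containing every $(x,\ov{\beta})$ such that $|\Edges[x][\ov{\beta}]|\geq 2$. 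This precondition is exactly what is used in the correctness proof of the offline algorithm of~\cite{Chatterjee18}; DSCCs only ever merge during $\Fixpoint$, never split, so starting $\Fixpoint$ from a refinement of the true DSCC partition together with sound-and-complete edge lists converges to the correct DSCCs.

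For the insertion case $\InsertEdge(u,v,\CloseParenthesis)$, by the inductive hypothesis the state just before the call captures the DSCCs of $G$ minus the new edge. The algorithm then inserts $v$ into $\OutEdges[u][\CloseParenthesis]$ and into $\Edges[x][\CloseParenthesis]$ for $x=\DisjointSets.\Find(u)$; this keeps $\Edges$ a sound and complete representation of inter-component edges in $G \cup \{u\DTo{\CloseParenthesis}v\}$, because adding the new edge changes nothing for any other component pair. If $|\Edges[x][\CloseParenthesis]|<2$ after the insert, no Dyck path of length two through $x$'s component has been created, so the DSCC partition of the new graph equals that of the old one, and we are done. Otherwise, $(x,\CloseParenthesis)$ is the unique pair whose $\Edges$ list can trigger a merge, inserting it into $\Queue$ restores the $\Fixpoint$ precondition, and the correctness of $\Fixpoint$ closes the case.

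For the deletion case $\DeleteEdge(u,v,\CloseParenthesis)$, this is where most of the work happens and where I expect the only real subtlety. After updating $\OutEdges$, $\InEdges$, and $\PrimCompDS$, the call to $\MakePrimary()$ delivers, by \cref{lem:make_primary_correctness}, precisely the $\Fixpoint$ precondition on the new graph $G \setminus \{u\DTo{\CloseParenthesis}v\}$: part~(1) gives condition~(i), and part~(2)(a),(2)(b) give condition~(ii) on exactly those pairs $(x,\ov{\beta})$ whose $\Edges$ lists were touched. The only remaining point is that the final loop of $\MakePrimary$ (\crefrange{line:algo_makeprimary_populateQstart}{line:algo_makeprimary_inserttoqueue}) enqueues every pair $(x,\ov{\beta})\in \NewIncomingRoots$ with $|\Edges[x][\ov{\beta}]|\geq 2$, and no other pair $(x',\ov{\beta}')$ can have $|\Edges[x'][\ov{\beta}']|\geq 2$ without being in $\NewIncomingRoots$ since components outside $\AffectedDSCCs$ inherit their edge lists unchanged from the pre-deletion state, in which they already satisfied the $\Fixpoint$ invariant (they had at most one outgoing edge per label, or else a previous call would not have terminated). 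Hence $\Queue$ carries all pending merge obligations, and invoking $\Fixpoint()$ on \cref{line:algo_delete_fixpoint} of \cref{algo:delete_edge} produces the DSCCs of the updated graph.

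The main obstacle, and the delicate bookkeeping step in a full write-up, is the claim that unaffected components retain the $\Fixpoint$-quiescence property (at most one representative outgoing edge per label), so that we do not need to revisit them. This follows by tracing through \cref{line:algo_makeprimary_unaffected} of \cref{algo:make_primary}, which re-initializes $\Edges[x][\ov{\beta}]$ precisely for those unaffected $x$ into which a fresh PDSCC feeds, and from the observation that any such $\Edges[x][\ov{\beta}]$ is then re-populated in \cref{line:algo_makeprimary_insert_t_in_edges_x} only with representatives of the new refined components, matching condition~(ii) of the precondition. Combining this with \cref{lem:make_primary_correctness} and the offline correctness of $\Fixpoint$ yields the inductive step for deletions, completing the proof.
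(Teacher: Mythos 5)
Your proposal is correct and follows essentially the same route as the paper's proof: induction over the update sequence, reduction to the correctness of $\Fixpoint()$ from~\cite{Chatterjee18} given a state where $\DisjointSets$ refines the true DSCC partition and $\Edges$/$\Queue$ soundly and completely encode the pending merge obligations, with \cref{lem:make_primary_correctness} supplying that precondition in the deletion case. You in fact spell out the quiescence of unaffected components more explicitly than the paper does, but the argument and its dependencies are the same.
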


\Paragraph{Complexity.}
We now turn our attention to the complexity bound of \cref{thm:main_theorem}.
We always start with a graph of $n$ nodes but without any edges, for which the initialization of all data structures takes $O(n)$ time.
In practice, the initial graph might already have some edges, which can be thought of being inserted one-by-one.

In high level, the time $\DynamicAlgo$ spends in each edge insertion and deletion is the sum of two parts:~(i) the time taken for maintaining the PDSCCs in the $\PrimCompDS$ data structure, and
(ii)~the time taken for all other computations.
Regarding (i), so far we have not specified the precise data structure for implementing $\PrimCompDS$, as $\DynamicAlgo$ treats $\PrimCompDS$ as a black box.
The theoretical guarantees of \cref{thm:main_theorem} can be obtained by using the data structure of~\cite{Eppstein1997} for undirected connectivity (see \cref{tab:connectivity_data_structures}).
Although this guarantees $O(\sqrt{n})$ and $O(1)$ time for edge insertions/deletions and queries, respectively, here we only use the fact that both bounds are less than $n$.
Regarding (ii), the algorithm spends $O(n\cdot \alpha(n))$ for each operation, which stems from the fact that the algorithm encounters each node of $G$ across all data structures a constant number of times (recall that we have $k=O(1)$ labels in $G$).
In particular, we have the following lemma.
\begin{restatable}{lemma}{lemcomplexity}\label{lem:complexity}
Every $\InsertEdge(u,v,\CloseParenthesis)$ and $\DeleteEdge(u,v,\CloseParenthesis)$ operation is processed in $O(n\cdot \alpha(n))$ time.
\end{restatable}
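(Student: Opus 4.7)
The time for each operation decomposes into the bookkeeping cost (updating $\Count$, $\OutEdges$, $\InEdges$, and a small number of entries in $\Edges$), the cost of calls to $\PrimCompDS$, and the cost of either the single $\Fixpoint$ invocation in $\InsertEdge$ or of $\MakePrimary$ followed by $\Fixpoint$ in $\DeleteEdge$. I would first isolate the bookkeeping as $O(1)$, and note that choosing the dynamic undirected-connectivity structure of \cite{Eppstein1997} for $\PrimCompDS$ gives $O(\sqrt{n}) = O(n)$ per primal edge insertion, deletion, or query. The proof then reduces to bounding $\MakePrimary$ by $O(n)$ and every $\Fixpoint$ call by $O(n \cdot \alpha(n))$.

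\textbf{Bounding $\MakePrimary$.} I would walk through its three phases. The forward search pushes each DSCC root into $\Stack$ at most once and, since $k = O(1)$, performs constant work per popped root; phase~1 is thus $O(|\AffectedDSCCs|) = O(n)$. Phase~2 iterates over all nodes of each affected DSCC, splitting them via $\PrimCompDS$ queries and resetting $\Edges$ entries both of the affected roots and of the unaffected roots reachable through $\InEdges$; the key point is that by \cref{lem:soundness_inedges}, each source node $s$ appears in $\InEdges[\cdot][\CloseParenthesisBeta]$ for at most one node per label, so the incoming traversal adds only $O(n)$ work. Phase~3 performs a symmetric traversal over the newly split PDSCCs and their $\InEdges$ sets, yielding another $O(n)$ by the same invariant.

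\textbf{Bounding $\Fixpoint$.} I would invoke Lemmas~3.4 and~3.5 of \cite{Chatterjee18}, which show that a call to $\Fixpoint$ runs in $O(\ell \cdot \alpha(n))$ time, where $\ell = \sum_{x, \CloseParenthesis} |\Edges[x][\CloseParenthesis]|$ at the moment the function is called. To turn this into $O(n \cdot \alpha(n))$, I would maintain as an inductive invariant along the operation sequence that $\ell = O(n)$ between operations; this holds vacuously on the initially empty graph. For $\InsertEdge$, only one new entry is added to $\Edges$ before $\Fixpoint$, so $\ell$ remains $O(n)$ at the call site. For $\DeleteEdge$, phase~3 of $\MakePrimary$ repopulates the affected portion of $\Edges$ with $O(n)$ entries (by the $\InEdges$ argument above), while the unaffected portion carries over the previous $O(n)$ bound, again giving $\ell = O(n)$.

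\textbf{Main obstacle.} The subtle point is establishing and preserving the inductive invariant $\ell = O(n)$: one must argue that neither $\Fixpoint$ nor $\MakePrimary$ ever inflates $\Edges$ beyond $O(n)$ cumulative size. For $\MakePrimary$ this follows from the sparsity of $\InEdges$ and from only rewriting $\Edges$ inside the affected region. For $\Fixpoint$ this follows from the fact that, in \cref{algo:offlinealgo}, each merge on line~\ref{line:append} consolidates linked lists via a pointer splice (no duplication) and only line~\ref{line:append2} adds a self-loop token, so $\ell$ grows by at most $O(1)$ per merge. Since each merge strictly decreases the number of DSCCs, the total growth over a single call is $O(n)$ and is absorbed by subsequent consolidations, so the invariant propagates through the whole operation sequence, yielding the claimed $O(n \cdot \alpha(n))$ bound per update.
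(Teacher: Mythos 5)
Your decomposition is essentially the paper's: bookkeeping is $O(1)$, $\PrimCompDS$ operations cost $O(\sqrt{n})$ via \cite{Eppstein1997}, $\MakePrimary$ is bounded by $O(n)$ using the uniqueness invariant from \cref{lem:soundness_inedges} (each $x$ lies in at most one $\InEdges[\cdot][\CloseParenthesisBeta]$ per label, since that set membership requires being the \emph{last} node of $\OutEdges[x][\CloseParenthesisBeta]$), and $\Fixpoint$ is charged $O(\ell\cdot\alpha(n))$ via Lemmas~3.4--3.5 of \cite{Chatterjee18}. Those parts match.

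The one step that does not close as written is your inductive invariant that $\ell=\sum_{x,\CloseParenthesis}|\Edges[x][\CloseParenthesis]|$ stays $O(n)$ \emph{between} operations. You argue that each merge inside $\Fixpoint$ adds at most $O(1)$ to $\ell$ and that this growth is ``absorbed by subsequent consolidations.'' But if each operation may add up to $\Theta(n)$ new entries (one per node touched by $\MakePrimary$, plus one self-loop token per merge) and nothing forces $\ell$ back down, then after $r$ operations your bound degrades to $O(r\cdot n)$ with a constant that grows with the length of the update sequence, so the invariant does not propagate with a uniform constant. The fact that rescues this — and the one the paper actually uses — is a \emph{post-condition} of $\Fixpoint$: upon termination, $|\Edges[x][\CloseParenthesis]|\leq 1$ for every $x$ and $\CloseParenthesis$ (each processed list is either spliced into the surviving root's list or collapsed to a single representative in \cref{line:new_edges}, and any list of length $\geq 2$ re-enters $\Queue$). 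This resets $\ell\leq kn=O(n)$ after \emph{every} operation regardless of how large $\ell$ was at the call site, which is what makes the per-operation bound uniform. You should state and use this post-condition explicitly rather than the ``absorption'' heuristic.

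A minor imprecision: in phase~1 of $\MakePrimary$ you claim constant work per popped root, but the test in \cref{line:algo_makeprimary_iftwoedgestot} (two distinct $\CloseParenthesisBeta$-labeled edges leaving $\DSCC(x)$) is checked by iterating over the nodes of $\DSCC(x)$, costing $O(|\DSCC(x)|)$ per root. The total is still $O(n)$ because each DSCC is examined once and the DSCCs partition $V$, so your conclusion survives, but the accounting should be per-node rather than per-root.
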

\begin{proof}
After every insert and delete operation,  $\Fixpoint()$ is invoked for merging the DSCCs (in $\DisjointSets$).
As shown in \cite[Lemma~3.4, Lemma~3.5]{Chatterjee18}, a call to $\Fixpoint()$
takes time $O(\ell \cdot \alpha(n))$, where $\ell=\sum_{u,\CloseParenthesis} |\Edges[x][\CloseParenthesis]|$ is the total number of nodes stored in the $\Edges$ linked lists, and $\alpha(n)$ is the inverse Ackermann function.
After $\Fixpoint()$ has completed, we have $|\Edges[x][\CloseParenthesis]|\leq 1$ for each $x$ and $\CloseParenthesis$.
Thus to prove \cref{lem:complexity}, it suffices to argue that every operation takes $O(n)$ time before calling $\Fixpoint()$.
This implies that $\ell=O(n)$ before the call, and thus $\Fixpoint()$ takes $O(n\cdot \alpha(n))$ time, yielding $O(n)+O(n\cdot \alpha(n))=O(n\cdot \alpha(n))$ total time.

\SubParagraph{Edge insertions.}
Consider the processing of an operation $\InsertEdge(u,v,\CloseParenthesis)$ by \cref{algo:insert_edge}.
Observe that the algorithm has no loops, thus the running time is dominated by the time taken to access the various data structures that the algorithm maintains.
In particular, $\OutEdges$ is a simple linked list, and checking its length (\cref{line:algo_insert_pdscc}), as well as  accessing and inserting in the head (\cref{line:algo_insert_gethead} and \cref{line:algo_insert_outedges}) takes constant time.
Similarly, the data structures $\InEdges$ and $\Edges$ are simple sets and linked lists with $O(1)$ accesses.
As we have already argued in \cref{subsec:dynamic_reachability}, performing a $\DisjointSets.\Find(v)$ operation (\cref{line:algo_insert_root_of_u}) takes $O(1)$ time.
Finally, inserting the edge $(v,y)$ in $\PrimCompDS$ takes $O(\sqrt{n})$ time~\cite{Eppstein1997}.

\SubParagraph{Edge deletions.}
Consider the processing of an operation $\DeleteEdge(u,v,\CloseParenthesis)$ by \cref{algo:delete_edge}.
The time spent in the body of \cref{algo:delete_edge} is $O(n)$, by an analysis very similar to \cref{algo:insert_edge} for edge deletions, and we will not repeat it here.
Instead, we focus on the time spent in the call to $\MakePrimary()$, which is the more complex part of processing the edge deletion (\cref{algo:make_primary}).
In the first step (\cref{line:algo_makeprimary_emptyqueue} to \cref{line:algo_makeprimary_firststepend}), for each iteration of the while loop of \cref{line:algo_makeprimary_fwdsearchloop}, the condition in \cref{line:algo_makeprimary_iftwoedgestot} can be checked in $O(|\DSCC(x)|)$ time, by iterating over all nodes of $\DSCC(x)$.
Observe that $\DSCC(x)$ is examined only once throughout this loop of \cref{line:algo_makeprimary_fwdsearchloop}, and since the DSCCs partition the node set, total time for running this loop is $O(n)$.

The time spent in the second step (\cref{line:algo_makeprimary_newroots} to \cref{line:algo_makeprimary_secondstepend}) is the sum of two parts.
The first part corresponds to 
the time spent in the nested loops, which  is proportional to the number of times the inner-most 
loop in \cref{line:algo_makeprimaryiterateoverpdsccinedgesofunaffected} is taken. That is, 
the number of nodes $x$ that exist in the lists $\InEdges[t][\CloseParenthesisBeta]$ of nodes $t$ in the components represented by their roots in $\AffectedDSCCs$.
It suffices to argue that there do not exist nodes $x,s_1,s_2$ and label $\CloseParenthesisBeta$ such that $x\in \InEdges[s_1][\CloseParenthesisBeta]$ and $x\in \InEdges[s_2][\CloseParenthesisBeta]$.
Indeed, by \cref{lem:soundness_inedges}, if $x\in \InEdges[s_1][\CloseParenthesisBeta]$ then $s_1$ is the last node in $\OutEdges[x][\CloseParenthesisBeta]$.
Clearly, as a linked list, $\OutEdges[x][\CloseParenthesisBeta]$ can have at most one last node, hence $x\not\in \InEdges[s_2][\CloseParenthesisBeta]$.
Thus, for every label $\CloseParenthesisBeta$, the loop in \cref{line:algo_makeprimaryiterateoverpdsccinedgesofunaffected} is executed at most once per node $x$ leading to $O(n)$ total iterations.
The second part corresponds to the total time spent in \cref{line:algo_makeprimary_split_to_pdsccs}. 
In \cref{line:algo_makeprimary_split_to_pdsccs}, the 
algorithm splits the potentially affected DSCCs (in $\DisjointSets$) into its PDSCCs.
This takes $O(n)$ time as the algorithm simply iterates over the nodes of the DSCCs stored in $\AffectedDSCCs$.
For each such node, the  algorithm performs a single membership query to $\PrimCompDS$, which takes $O(1)$ time~\cite{Eppstein1997}.

Finally, we consider the time spent by $\MakePrimary()$ in the third step (\cref{line:algo_makeprimary_newincoming} to \cref{line:algo_makeprimary_thirdstepend}).
Note that the time spent in these nested loops is proportional to the number of times the inner-most loop (\cref{line:algo_makeprimary_iterateoverpdsccinedges}) is executed.
Again, it suffices to argue that there do not exist nodes $x,s_1,s_2$ and label $\CloseParenthesisBeta$ such that $x\in \InEdges[s_1][\CloseParenthesisBeta]$ and $x\in \InEdges[s_2][\CloseParenthesisBeta]$.
Indeed, by \cref{lem:soundness_inedges}, if $x\in \InEdges[s_1][\CloseParenthesisBeta]$ then $s_1$ is the last node in $\OutEdges[x][\CloseParenthesisBeta]$,
hence $x\not\in \InEdges[s_2][\CloseParenthesisBeta]$.
Thus, for every label $\CloseParenthesisBeta$, the loop in \cref{line:algo_makeprimary_iterateoverpdsccinedges} is executed at most once per node $x$, leading to $O(n)$ total iterations.
In the end, the loop in \cref{line:algo_makeprimary_populateQstart} runs for $|\NewIncomingRoots|$ iterations, which is bounded by the iterations of the previous loop.

Since for every label $\CloseParenthesisBeta$ and node $x$ the loop in \cref{line:algo_makeprimary_iterateoverpdsccinedges} is executed at most once, it follows that  \cref{line:algo_makeprimary_insert_t_in_edges_x} will be executed $O(n)$ times. 
This leads to total $O(n)$ new entries to $\Edges$ at \cref{line:algo_makeprimary_insert_t_in_edges_x}.
Similarly, \cref{line:algo_makeprimary_insert_y_in_edges_r} is executed $O(n)$ times leading to $O(n)$ new entries to $\Edges$. 
Therefore, at the end of $\MakePrimary()$ we have $\sum_{x,\CloseParenthesis} |\Edges[x][\CloseParenthesis]| = O(n)$.
The loop in \cref{line:algo_makeprimary_populateQstart} runs for $|\NewIncomingRoots| = O(n)$ iterations. This leads to $O(n)$ entries in $\Queue$. Thus, at the end of $\MakePrimary()$, $|\Queue| = O(n)$.
Thus before calling $\Fixpoint()$ we have $\sum_{u,\CloseParenthesis} |\Edges[u][\CloseParenthesis]| = O(n)$ and $|\Queue| = O(n)$.
\end{proof}

\section{Experiments}\label{sec:experiments}
In this section we report on an implementation of $\DynamicAlgo$ algorithm behind \cref{thm:main_theorem}, and an evaluation of its performance on various datasets on real-world static analyses.
To some extent, our experimental setting follows~\cite{Li2022}.

\Paragraph{Compared algorithms.}
We compare three standard approaches to bidirected Dyck reachability\footnote{Since the algorithm of \cite{Li2022} has complexity and correctness issues, we have not included it in our evaluation.}.
\begin{compactenum}
\item $\OfflineAlgo$, as developed and implemented in~\cite{Chatterjee18}.
For each graph update (edge insertion/deletion), the algorithm is invoked from scratch to handle the updated graph.
\item Our $\DynamicAlgo$, which is implemented in C/C++, and closely follows the pseudocode presented in \cref{sec:data_structure}.
$\DynamicAlgo$ uses as a black box a data structure $\PrimCompDS$ for dynamic undirected connectivity (see \cref{tab:connectivity_data_structures}).
Although the one developed in~\cite{Eppstein1997} works best towards the complexity guarantees of \cref{thm:main_theorem}, in our implementation we use the one developed in~\cite{Holm2001} (and implemented in~\cite{DynamicUndirectedConnectivity_Tomtseng2020}), as it is conceptually simpler and well-performing in practice. 
\item A declarative approach in which the production rules of Dyck reachability are encoded as Datalog constraints and dispatched to a Datalog solver~\cite{Reps1995}.
Datalog-based static analyses have been popularized in the Flix programming language~\cite{Madsen2020} and the Doop framework~\cite{Bravenboer2009}.
Our bidirected setting allows us to optimize the Datalog program by explicitly focusing only on closing-parentheses edges, in similar style to $\OfflineAlgo$ and $\DynamicAlgo$.
To be fair in our comparison, we follow this approach here.
In particular, we use the following Datalog program.

\begin{lstlisting}[escapechar=\%]
Reaches(u,u)
Close(x,u,%$\CloseParenthesis$%):- Edge(x,u,%$\CloseParenthesis$%)
Close(x,u,%$\CloseParenthesis$%):- Edge(y,u,%$\CloseParenthesis$%), Reaches(x,y)
Reaches(u,v):- Close(x,u,%$\CloseParenthesis$%), Close(x,v,%$\CloseParenthesis$%)
Reaches(u,v):- Reaches(u,x), Reaches(x,v)
\end{lstlisting}
To handle our dynamic setting, we rely on an efficient, fully dynamic Datalog solver~\cite{DDLog},
as opposed to solving the Datalog program from scratch every time.
\end{compactenum}
The two algorithms ($\OfflineAlgo$ and dynamic Datalog) that support our experimental comparison serve as very fitting baselines.
$\OfflineAlgo$ is an algorithm dedicated to the  \emph{problem} we are solving (i.e., bidrected Dyck reachability) but not dedicated to the \emph{setting} we are solving it in (i.e., under dynamic updates).
On the other hand, the dynamic-Datalog approach is dedicated to the dynamic setting, but not dedicated to the bidirected Dyck-reachability problem.
As such, both algorithms are theoretically of worse complexity that our $\DynamicAlgo$, yet still the closest  that exist in the literature for this problem and setting.
Our experiments aim to highlight to what extent the theoretical superiority of $\DynamicAlgo$ is realized in practice.

\Paragraph{Benchmarks.}
We evaluate the above algorithms on two popular static analyses.
\begin{compactenum}
\item Context-sensitive data dependence analysis as formulated in~\cite{Tang15}, and evaluated on benchmark programs from~\cite{SPECjvm2008}.
In this case the parenthesis labels represent calling contexts, and a properly-balanced-parenthesis path represents an interprocedurally-valid dataflow via parameter-passing in function invocation and return.
\item Field-sensitive alias analysis for Java as formulated in~\cite{Yan11,Zhang13}, and evaluated on DaCapo benchmarks~\cite{DaCapo:paper}.
In this case the parenthesis labels represent field accesses on composite objects, as illustrated in \cref{fig:illustrative_example} of \cref{sec:intro}.
\end{compactenum}
Our graph models for the above analyses and benchmarks are obtained from~\cite{Li2022}.

\Paragraph{On-the-fly formulation and update sequences.}
To simulate the on-the-fly setting where the source code undergoes a sequence of changes,
for each benchmark graph $G$, we generate three sequences $\Sequence_{G}$ of updates (edge insertions/deletions), as follows.
\begin{compactenum}
\item \emph{Incremental setting:}~We randomly select a set $E^+$ of 90\% of the edges of $G$ and remove them from the graph.
We create a sequence of edge insertions $\Sequence^{\mathsf{inc}}_{G}$ as a random permutation of $E^+$.
This is a fully incremental setting, where code lines are only added but never removed in the program.
\item \emph{Decremental setting:}~We randomly select a set $E^-$ of 90\% of the edges of $G$.
We create a sequence of edge deletions $\Sequence^{\mathsf{dec}}_{G}$ as a random permutation of $E^-$.
This is a fully decremental setting, where code lines are only removed but never added in the program.
\item \emph{Mixed setting:}~We randomly split the edges of $G$ into two sets $E^+$ and $E^-$, with proportion 10\% to 90\%, and start with an initial graph containing the edges of $E^-$.
We create a sequence of mixed operations (both insertions and deletions) $\Sequence^{\mathsf{dec}}_{G}$ by repeated stochastic sampling:~in each step, we randomly choose the next operation as an edge insertion/edge deletion.
In the former case, we randomly select an edge from $E^+$, move it to $E^-$, and insert it in $G$.
In the latter case, we randomly select an edge from $E^-$, move it to $E^+$, and delete it from $G$.
The length of the sequence is equal to 90\% of the number of edges in $G$.
\end{compactenum}
In each case, for each benchmark we report the amortized time 
that each algorithm took to handle the whole sequence:~that is, the total running time over the whole sequence divided by the length of the sequence.
To gain more confidence in our results, we repeat the above process three times and report the average numbers.
We run our experiments on a conventional laptop with a 2.6GHz CPU and 16GBs of memory, which was always sufficient for the analysis.
As a sanity check, we have verified that all three algorithms give the same results on each benchmark and update sequence.

\begin{figure}
\captionsetup[subfigure]{aboveskip=-1pt,belowskip=-1pt}

\pgfplotsset{very tick label/.append style={font=\small}}
\def\RatioDataDependence{.58}
\def\PlotWidthDataDependence{9cm}
\def\PlotWidthAlias{6.5cm}
\def\RatioAlias{.38}

\def\subfigureTextWidthDataDependence{\NumBenchmarksDataDependence/\NumBenchmarksTotal}
\def\subfigureTextWidthAlias{\NumBenchmarksAlias/\NumBenchmarksTotal}

\def\plotheight{4cm}
\def\barwidth{2.5pt}
\def\bardistance{1pt}
\def\scaleboxvalue{0.9}

\def\offlinecolor{black!10}
\def\ddLogcolor{black!50}
\def\dynamiccolor{black!80}

\begin{subfigure}[T]{\RatioDataDependence\textwidth}
\scalebox{\scaleboxvalue}{%
\begin{tikzpicture}[tight background, inner sep=2pt,]
\begin{axis}[inner sep=2pt,
		title={\large \underline{Data Dependence Analysis}},
    ybar=\bardistance,
		enlarge y limits={abs=0.6cm,upper},
		enlarge x limits={abs=8pt},
		legend style={legend pos=north west, draw=black, legend columns=-1,/tikz/every even column/.append style={column sep=0.2cm}},
		ylabel near ticks,
    ylabel={microsecs},
		xtick pos=left,
    xtick={1,2,3,4,5,6,7,8,9,10,11,12,13,14,15},
    xticklabels={btree,check,compiler,compress,crypto,derby,helloworld,mpegaudio,mushroom,parser,sample,scimark,startup,sunflow,xml},
    ymajorgrids=true,
    grid style=dashed,
		width=\PlotWidthDataDependence,
		height=\plotheight,
    legend style={font=\footnotesize},
    xticklabel style={font=\footnotesize,rotate=45,anchor=east},
    yticklabel style={font=\footnotesize},
    ylabel style={font=\footnotesize},
    bar width=\barwidth,
    ymode=log,
    log origin=infty,
]

\addplot [fill=\offlinecolor] file {figures/suite2_90_10_runtime/expt_datadep_offline_incr.dat};
\addplot [fill=\ddLogcolor] file {figures/suite2_90_10_runtime/expt_datadep_dyn_incr_ddlog.dat};
\addplot[fill=\dynamiccolor] file {figures/suite2_90_10_runtime/expt_datadep_dyndyck_incr.dat};
\legend{Offline,DDlog,Dynamic}

\end{axis}
\end{tikzpicture}
}%
\caption{\label{subfig:datadependence_all_incr_suite2_90_10}
Incremental updates.
}
\end{subfigure}
\hspace*{\fill}
\begin{subfigure}[T]{\RatioAlias\textwidth}
\scalebox{\scaleboxvalue}{%
\begin{tikzpicture}[tight background]
\begin{axis}[inner sep=2pt,,
		title={\large \underline{Alias Analysis}},
    ybar=\bardistance,
		enlarge y limits={abs=0.6cm,upper},
		enlarge x limits={abs=8pt},
		legend style={legend pos=north west, draw=black, legend columns=-1,/tikz/every even column/.append style={column sep=0.2cm}},
		ylabel near ticks,
    ylabel={microsecs},
		xtick pos=left,
    xtick={1,2,3,4,5,6,7,8,9,10,11},
    xticklabels={antlr, bloat, chart, eclipse, fop, hsqldb, jython, luindex, lusearch, pmd, xalan},
    ymajorgrids=true,
    grid style=dashed,
		width=\PlotWidthAlias,
		height=\plotheight,
    legend style={font=\footnotesize},
    xticklabel style={font=\footnotesize,rotate=45,anchor=east},
    yticklabel style={font=\footnotesize},
    ylabel style={font=\footnotesize},
    bar width=\barwidth,
    ymode=log,
    log origin=infty,
]

\addplot [fill=\offlinecolor] file {figures/suite2_90_10_runtime/expt_alias_offline_incr.dat};
\addplot [fill=\ddLogcolor] file {figures/suite2_90_10_runtime/expt_alias_dyn_incr_ddlog.dat};
\addplot[fill=\dynamiccolor] file {figures/suite2_90_10_runtime/expt_alias_dyndyck_incr.dat};
\legend{Offline,DDlog,Dynamic}

\end{axis}
\end{tikzpicture}
}%
\vspace{7pt}
\caption{\label{subfig:alias_all_incr_suite2_90_10}
Incremental updates.
}
\end{subfigure}
\\[1em]
\begin{subfigure}[T]{\RatioDataDependence\textwidth}
\scalebox{\scaleboxvalue}{%
\begin{tikzpicture}[tight background, inner sep=2pt,]
\begin{axis}[inner sep=2pt,
    ybar=\bardistance,
		enlarge y limits={abs=0.6cm,upper},
		enlarge x limits={abs=8pt},
		legend style={legend pos=north west, draw=black, legend columns=-1,/tikz/every even column/.append style={column sep=0.2cm}},
		ylabel near ticks,
    ylabel={microsecs},
		xtick pos=left,
    xtick={1,2,3,4,5,6,7,8,9,10,11,12,13,14,15},
    xticklabels={btree,check,compiler,compress,crypto,derby,helloworld,mpegaudio,mushroom,parser,sample,scimark,startup,sunflow,xml},
    ymajorgrids=true,
    grid style=dashed,
		width=\PlotWidthDataDependence,
		height=\plotheight,
    legend style={font=\footnotesize},
    xticklabel style={font=\footnotesize,rotate=45,anchor=east},
    yticklabel style={font=\footnotesize},
    ylabel style={font=\footnotesize},
    bar width=\barwidth,
    ymode=log,
    log origin=infty,
]

\addplot [fill=\offlinecolor] file {figures/suite2_90_10_runtime/expt_datadep_offline_decr.dat};
\addplot [fill=\ddLogcolor] file {figures/suite2_90_10_runtime/expt_datadep_dyn_decr_ddlog.dat};
\addplot[fill=\dynamiccolor] file {figures/suite2_90_10_runtime/expt_datadep_dyndyck_decr.dat};
\legend{Offline,DDlog,Dynamic}

\end{axis}
\end{tikzpicture}
}%
\caption{\label{subfig:datadependence_all_decr_suite2_90_10}
Decremental updates.
}

\end{subfigure}
\hspace*{\fill}
\begin{subfigure}[T]{\RatioAlias\textwidth}
\scalebox{\scaleboxvalue}{%
\begin{tikzpicture}[tight background]
\begin{axis}[inner sep=2pt,,
    ybar=\bardistance,
		enlarge y limits={abs=0.6cm,upper},
		enlarge x limits={abs=8pt},
		legend style={legend pos=north west, draw=black, legend columns=-1,/tikz/every even column/.append style={column sep=0.2cm}},
		ylabel near ticks,
    ylabel={microsecs},
		xtick pos=left,
    xtick={1,2,3,4,5,6,7,8,9,10,11},
    xticklabels={antlr, bloat, chart, eclipse, fop, hsqldb, jython, luindex, lusearch, pmd, xalan},
    ymajorgrids=true,
    grid style=dashed,
		width=\PlotWidthAlias,
		height=\plotheight,
    legend style={font=\footnotesize},
    xticklabel style={font=\footnotesize,rotate=45,anchor=east},
    yticklabel style={font=\footnotesize},
    ylabel style={font=\footnotesize},
    bar width=\barwidth,
    ymode=log,
    log origin=infty,
]

\addplot [fill=\offlinecolor] file {figures/suite2_90_10_runtime/expt_alias_offline_decr.dat};
\addplot [fill=\ddLogcolor] file {figures/suite2_90_10_runtime/expt_alias_dyn_decr_ddlog.dat};
\addplot[fill=\dynamiccolor] file {figures/suite2_90_10_runtime/expt_alias_dyndyck_decr.dat};
\legend{Offline,DDlog,Dynamic}

\end{axis}
\end{tikzpicture}
}%
\vspace{7pt}
\caption{\label{subfig:alias_all_decr_suite2_90_10}
Decremental updates.
}
\end{subfigure}
\\[1em]
\begin{subfigure}[T]{\RatioDataDependence\textwidth}
\scalebox{\scaleboxvalue}{%
\begin{tikzpicture}[tight background, inner sep=2pt,]
\begin{axis}[inner sep=2pt,
    ybar=\bardistance,
		enlarge y limits={abs=0.6cm,upper},
		enlarge x limits={abs=8pt},
		legend style={legend pos=north west, draw=black, legend columns=-1,/tikz/every even column/.append style={column sep=0.2cm}},
		ylabel near ticks,
    ylabel={microsecs},
		xtick pos=left,
    xtick={1,2,3,4,5,6,7,8,9,10,11,12,13,14,15},
    xticklabels={btree,check,compiler,compress,crypto,derby,helloworld,mpegaudio,mushroom,parser,sample,scimark,startup,sunflow,xml},
    ymajorgrids=true,
    grid style=dashed,
		width=\PlotWidthDataDependence,
		height=\plotheight,
    legend style={font=\footnotesize},
    xticklabel style={font=\footnotesize,rotate=45,anchor=east},
    yticklabel style={font=\footnotesize},
    ylabel style={font=\footnotesize},
    bar width=\barwidth,
    ymode=log,
    log origin=infty,
]

\addplot [fill=\offlinecolor] file {figures/suite2_90_10_runtime/expt_datadep_offline_mixed.dat};
\addplot [fill=\ddLogcolor] file {figures/suite2_90_10_runtime/expt_datadep_dyn_mixed_ddlog.dat};
\addplot[fill=\dynamiccolor] file {figures/suite2_90_10_runtime/expt_datadep_dyndyck_mixed.dat};
\legend{Offline,DDlog,Dynamic}

\end{axis}
\end{tikzpicture}
}%
\caption{\label{subfig:datadependence_all_mixed_suite2_90_10}
Mixed updates.
}

\end{subfigure}
\hspace*{\fill}
\begin{subfigure}[T]{\RatioAlias\textwidth}
\scalebox{\scaleboxvalue}{%
\begin{tikzpicture}[tight background]
\begin{axis}[inner sep=2pt,,
    ybar=\bardistance,
		enlarge y limits={abs=0.6cm,upper},
		enlarge x limits={abs=8pt},
		legend style={legend pos=north west, draw=black, legend columns=-1,/tikz/every even column/.append style={column sep=0.2cm}},
		ylabel near ticks,
    ylabel={microsecs},
		xtick pos=left,
    xtick={1,2,3,4,5,6,7,8,9,10,11},
    xticklabels={antlr, bloat, chart, eclipse, fop, hsqldb, jython, luindex, lusearch, pmd, xalan},
    ymajorgrids=true,
    grid style=dashed,
		width=\PlotWidthAlias,
		height=\plotheight,
    legend style={font=\footnotesize},
    xticklabel style={font=\footnotesize,rotate=45,anchor=east},
    yticklabel style={font=\footnotesize},
    ylabel style={font=\footnotesize},
    bar width=\barwidth,
    ymode=log,
    log origin=infty,
]

\addplot [fill=\offlinecolor] file {figures/suite2_90_10_runtime/expt_alias_offline_mixed.dat};
\addplot [fill=\ddLogcolor] file {figures/suite2_90_10_runtime/expt_alias_dyn_mixed_ddlog.dat};
\addplot[fill=\dynamiccolor] file {figures/suite2_90_10_runtime/expt_alias_dyndyck_mixed.dat};
\legend{Offline,DDlog,Dynamic}

\end{axis}
\end{tikzpicture}
}%
\vspace{7pt}
\caption{\label{subfig:alias_all_mixed_suite2_90_10}
Mixed updates.
}
\end{subfigure}
\caption{\label{fig:experiments_suite2_90_10}
The average time to handle a single update on on-the-fly data-dependence analysis (\subref{subfig:datadependence_all_incr_suite2_90_10}, \subref{subfig:datadependence_all_decr_suite2_90_10}, \subref{subfig:datadependence_all_mixed_suite2_90_10}) and on-the-fly alias analysis (\subref{subfig:alias_all_incr_suite2_90_10}, \subref{subfig:alias_all_decr_suite2_90_10}, \subref{subfig:alias_all_mixed_suite2_90_10}) for sequence files generated with 90-10 split of original graph.
Note that all results are in log-scale.
}
\end{figure}

\Paragraph{Results on data-dependence analysis.}
Our experimental results on on-the-fly the data-dependence analysis are shown in \cref{fig:experiments_suite2_90_10} (left column).
We see that in all three settings (incremental/decremental/mixed), the dynamic-Datalog approach is measurably faster than $\OfflineAlgo$.
Although, in the worst case, the Datalog solver has worse complexity than $\OfflineAlgo$, 
the nature of the updates on the analyses graphs allows an efficient Datalog solver dedicated to dynamic updates to perform faster, as it never exhibits its worst-case performance.
Still, the time spent by the Datalog solver is typically quite large, given that we are looking at updates by a \emph{single edge} (i.e., a single source-code line).

On the other hand, our $\DynamicAlgo$ spends much less time than both $\OfflineAlgo$ and the dynamic Datalog solver, leading to typical speedups between two and three orders of magnitude.
The only difference is in \texttt{xml}, where $\DynamicAlgo$ appears to spend more time than in the rest of the benchmarks.
We have identified that this benchmark has a disproportionately large number of parenthesis symbols, which is the likely cause of this behavior.
Still, $\DynamicAlgo$ is by far the fastest approach also on this benchmark.
Naturally, decremental updates yield the least speedup on average.
This is expected given how much more complex our procedure for handling deletions is compared to that of handling insertions.
Indeed, when processing a $\InsertEdge(u,v,\CloseParenthesis)$ update, our algorithm only merges DSCCs, while
processing a  $\DeleteEdge(u,v,\CloseParenthesis)$ update both splits and merges DSCCs.
Nevertheless, the speedups are still in the range of two orders of magnitude, enough to render $\DynamicAlgo$ the clear best approach overall.

\Paragraph{Results on alias analysis.}
Our experimental results on on-the-fly the alias analysis are shown in \cref{fig:experiments_suite2_90_10} (right column).
In contrast to the data-dependence analysis, here the dynamic-Datalog approach is always a bit worse that $\OfflineAlgo$,
indicating that reachability patterns in these graphs are more challenging; enough so to push the dynamic-Datalog solver to worse performance than the from-scratch $\OfflineAlgo$.
Note, also, that the analysis times here are considerably larger than in the case of data-dependence analysis.

$\DynamicAlgo$ is again the best-performing approach by far, consistently by three orders of magnitude.
In several benchmarks, its running time is not visible despite the log-scale of the plots.
Finally, we again observe that decremental updates are overall more challenging that incremental updates.

\Paragraph{In summary.}
Our experiments clearly show that $\DynamicAlgo$ is the right approach to on-the-fly static analyses formulated as bidirected Dyck reachability, giving several orders of magnitude speedups over both
(i)~the offline algorithm that is dedicated (and optimal) for bidirected Dyck reachability, and
(ii)~the dynamic Datalog approach that is dedicated to dynamic updates (but agnostic to the setting of bidirected Dyck reachability).
Although the worst-case running time of $\DynamicAlgo$ is linear, we rarely observed this in our experiments.
Instead, the time cost of each update is barely (if at all) noticeable to the human eye, and thus the algorithm is suitable for continuous analysis, e.g., integrated inside an IDE.

\section{Related Work}\label{sec:related_work}

The  importance of Dyck reachability in static analyses has lead to a systematic study of its complexity in various settings.
The problem has a simple $O(n^3)$ upper bound~\cite{Yannakakis90}, which has resisted improvements beyond logarithmic~\cite{Chaudhuri2008}.
Due to this reason, the complexity of Dyck reachability has also been studied in terms of lower bounds.
Even for a single-pair query, the problem has been known to be 2NPDA-hard~\cite{Heintze97}, while its combinatorial cubic hardness persists even on constant-treewidth graphs~\cite{Chatterjee18}.
All-pairs Dyck reachability was recently shown to have a conditional $n^{2.5}$ lower bound based on popular complexity-theoretic hypotheses~\cite{Koutris2023}.
Despite the cubic hardness of the general problem, it is known to have sub-cubic certificates for both positive and negative instances~\cite{Chistikov2022}.
All-pairs reachability with $k=1$ parenthesis (aka one-counter systems) was recently shown to admit  an $O(n^{\omega}\cdot \log^2n)$ bound~\cite{Mathiasen2021}, where $\omega$ is the matrix multiplication exponent, which is also tight even for single-pair queries~\cite{Hansen2021}.
We refer to~\cite{Pavlogiannis2023} for a recent survey on this rich problem.

The technique developed in this work for on-the-fly bidirected Dyck reachability is motivated by the same setting on undirected connectivity, which has been studied extensively.
We leverage the data structure developed in~\cite{Eppstein1997} for maintaining the PDSCCs of a graph as connected components of the underlying primary graph (which is indeed undirected), as well as a sparsification technique that is specific to our setting (even though the concept of sparsification was developed in~\cite{Eppstein1997} to handle undirected connectivity).
It would be interesting to investigate whether techniques from undirected connectivity can be used further in our setting so as to reduce the complexity to sublinear (as is the status quo in undirected connectivity).
However, as a single update can create or merge $\Theta(n)$ DSCCs, sublinear complexity can only arise in the amortized sense, or by not requiring the explicit maintenance of DSCCs throughout updates.
Though we can easily obtain an $O(\alpha(n))$ amortized insertion cost for our technique (by amortizing over the linear cost of deletes), tighter bounds appear non-trivial and are open to interesting future work.

Bidirected Dyck reachability is very similar to unification closure, though the later problem is typically phrased with labels on the nodes as opposed to the edges of the input graph~\cite{Kanellakis1989}.
Unification closure has found widespread applications in programming languages, such as in 
efficient binding-time analysis~\cite{Henglein1991},
simple first-order type-inference~\cite{spa},
efficient dynamic type inference for LISP~\cite{Henglein1992}, as well as Steensgaard's famous pointer analysis~\cite{Steensgaard96}.
The insights developed here for the on-the-fly setting are likely extendable to these applications, though this merits further investigation.

In this work we have focused on online analyses where the analyzed source code changes frequently.
Another ``on-the-fly'' style of analysis is that of \emph{on-demand} analysis.
Here the analyzed program remains unchanged, but the task is to answer a sequence of analysis queries that is not known in advance.
This setting has been studied a lot in the context of pointer analysis~\cite{Heintze01b,Sridharan2005,Zheng2008,Yan11} and data flow analysis~\cite{Horwitz95,Naeem2010,Lerch2015}; the latter have also been efficiently parameterized by treewidth~\cite{Chatterjee2015b,Chatterjee2016,Chatterjee2020} and treedepth~\cite{Goharshady2023}.
\section{Conclusion}\label{sec:conclusion}
On-the-fly analysis is a very appealing feature to static analyzers, in order to run in real-time during code development and incorporate the constant changes in the source code.
However, on-the-fly analysis algorithms with provable complexity benefits have been missing, due to the intricacies of typical static analyses.
In this work we have considered a wide class of static analyses phrased as bidirected Dyck reachability.
We have developed a dynamic algorithm for handling the addition and removal of source code lines, with a provable guarantee that each such modification  takes only (nearly) linear time in the worst case. 
Our experiments show that our dynamic algorithm is extremely performant in practice, with a clear advantage over the (optimal) offline static analysis algorithm, as well as dynamic Datalog approaches, with typical speedups of three orders of magnitude.
From a practical standpoint, our results indicate that our algorithm can directly support lightweight analyses inside IDEs at a time cost that is barely (if at all) noticeable to the developers.

\clearpage

\begin{acks}
Andreas Pavlogiannis was partially supported by a research grant (VIL42117) from VILLUM FONDEN.
S. Krishna was partially supported  by the SERB MATRICS grant MTR/2019/000095. 
\end{acks}


\bibliography{bibliography}

\clearpage
\appendix
\section{Proofs}\label{sec:app_proofs}

In this section we provide proofs for the correctness invariants of \cref{subsec:analysis}.

\lemcorrectnesspdsccs*
\begin{proof}
We argue that the statement holds after each $\InsertEdge(u,v,\CloseParenthesis)$ and $\DeleteEdge(u,v, \CloseParenthesis)$ operations.
First, observe that at all times, we have an edge $(x,y)$ in the $\PrimCompDS$ data structure iff there is a node $z$ and a label $\CloseParenthesisBeta$ such that $x$ is a neighbor of $y$ in the linked list $\OutEdges[z][\CloseParenthesisBeta]$.
As we have stated above, $\OutEdges[u][\CloseParenthesisBeta]$ is, at all times, a linked list representation of the edge set $u\DTo{\CloseParenthesis}\cdot$.

\SubParagraph{Edge insertions.}
Consider the processing of $\InsertEdge(u,v,\CloseParenthesis)$.
This operation can only merge $\PDSCC(v)$ with $\PDSCC(y)$, where $y$ is another node for which there already exists an edge $u\DTo{\CloseParenthesis}y$.
Hence $|\OutEdges[u][\CloseParenthesis]|\geq 1$, and the if-else-block in \cref{line:algo_insert_pdscc}-\cref{line:algo_insert_insert_primal_edge} of \cref{algo:insert_edge} inserts an edge $(v,y)$ in $\PrimCompDS$ (the else part). 
Once the insertion of $(v,y)$ has been completed, the components in $\PrimCompDS$ are again precisely the PDSCCs of $G$.

\SubParagraph{Edge deletions.}
Consider the processing of an operation $\DeleteEdge(u,v,\CloseParenthesis)$.
\cref{line:algo_delete_delete_primal_edge} to \cref{line:algo_delete_remove_from_outedges} of \cref{algo:delete_edge} deletes from $\PrimCompDS$ the edges $(x,v)$ and $(y,v)$, where $x$ and $y$ are the neighbors of $v$ in the linked list $\OutEdges[u][\CloseParenthesis]$.
Thus, once the deletion of $(x,v)$ and $(y,v)$ has been completed, and $x,y$ are connected back (\cref{line:algo_delete_reinsert_pdscc}),
the components in $\PrimCompDS$ are again precisely the PDSCCs of $G$.
\end{proof}

\lemsoundnessinedges*
\begin{proof}
We argue that the statement holds after each $\InsertEdge(u,v,\CloseParenthesis)$ and $\DeleteEdge(u,v, \CloseParenthesis)$ operation.

\SubParagraph{Edge insertions.}
Consider the processing of an operation $\InsertEdge(u,v,\CloseParenthesis)$.
\cref{line:algo_insert_outedges} of \cref{algo:insert_edge} inserts $u$ in $\InEdges[v][\CloseParenthesis]$ iff $|\OutEdges[u][\CloseParenthesis]|=0$, i.e., $v$ is the only (first and last) node $y$ in $\OutEdges[u][\CloseParenthesis]$ such that  $u\DTo{\CloseParenthesis}y$.
Hence, the statement holds after each edge insertion.

\SubParagraph{Edge deletions.}
Consider the processing of an operation $\DeleteEdge(u,v,\CloseParenthesis)$.
\cref{line:algo_delete_inedges} of \cref{algo:delete_edge} inserts $u$ in $\InEdges[w][\CloseParenthesis]$ iff $w$ is the penultimate node in $\OutEdges[u][\CloseParenthesis]$ and $v$ is the last node in $\OutEdges[u][\CloseParenthesis]$.
Since $w$ is in $\OutEdges[u][\CloseParenthesis]$, we have an edge $u\DTo{\CloseParenthesis}w$, while, after removing $v$ from $\OutEdges[u][\CloseParenthesis]$ (\cref{line:algo_delete_remove_from_outedges}), $w$ is the last node in $\OutEdges[u][\CloseParenthesis]$.
\end{proof}

\lemcompletenessinedges*
\begin{proof}
We argue that the statement holds after each $\InsertEdge(u,v,\CloseParenthesisBeta)$ and $\DeleteEdge(u,v, \CloseParenthesisBeta)$ operation.
In particular, $z$ will always be the last node in $\OutEdges[u][\CloseParenthesis]$.

\SubParagraph{Edge insertions.}
Consider the processing of an operation $\InsertEdge(u,v,\CloseParenthesisBeta)$. It suffices to prove the statement for $x=u$ and $y=v$.
If there exists another node $w\neq v$ and an edge $u\DTo{\CloseParenthesisBeta}w$, observe that $\PDSCC(w)=\PDSCC(v)$, as the primal graph now has an edge $(v,w)$.
By the induction hypothesis, there exists a node $z\in\PDSCC(w)$ such that $u\in\InEdges[z][\CloseParenthesisBeta]$, and we are done.
Otherwise $v$ is the first $\CloseParenthesisBeta$-neighbor of $u$, and \cref{line:algo_insert_inedges} of \cref{algo:insert_edge} will set $u\in\InEdges[v][\CloseParenthesisBeta]$, as desired.

\SubParagraph{Edge deletions.}
Consider the processing of an operation $\DeleteEdge(u,v,\CloseParenthesisBeta)$, and it suffices to prove the statement for $x=u$ and any node $y$ such that $u\DTo{\CloseParenthesisBeta}y$.
If $v$ is not the last node in $\OutEdges[u][\CloseParenthesisBeta]$, the statement holds by the induction hypothesis.
Otherwise $v$ is the last node in $\OutEdges[u][\CloseParenthesisBeta]$, and there exists a penultimate 
node $w$ in $\OutEdges[u][\CloseParenthesisBeta]$. 
Then \cref{line:algo_delete_inedges} of \cref{algo:delete_edge} will set $u\in\InEdges[w][\CloseParenthesisBeta]$, thereby restoring the invariant.
\end{proof}

\lemmakeprimarycorrectness*
\begin{proof}
First, observe that the first step of $\MakePrimary()$ (\cref{algo:make_primary}) correctly stores in $\AffectedDSCCs$ a sound overapproximation of the DSCCs that have to be split after removing the edge $u\DTo{\CloseParenthesisBeta}v$ (in particular, $\AffectedDSCCs$ stores the representative nodes of the corresponding $\DSCC$s).
Indeed, the algorithm first marks $\DSCC(v)$ as affected (\cref{line:algo_makeprimary_mark_dscc_v_affected}).
In turn, any other DSCC $S'$ that has to be split must contain two nodes $s \neq t$ with incoming edges $\DSCC(S)\DTo{\CloseParenthesisBeta}s$ and $\DSCC(S)\DTo{\CloseParenthesisBeta}t$, where $S$ is a DSCC that has to be split.
Thus, after the end of the first step (\cref{line:algo_makeprimary_emptyqueue} to \cref{line:algo_makeprimary_firststepend}), $\AffectedDSCCs$ soundly overapproximates the DSCCs that have to be split.
After executing step~2 (\cref{line:algo_makeprimary_newroots} to \cref{line:algo_makeprimary_secondstepend}), every component in $\DisjointSets$ is either a previously computed DSCC, or a PDSCC (computed from $\PrimCompDS$) of a previously computed DSCC. Since PDSCCs are also DSCCs, \cref{item:make_primary_lemma_dsccs} follows (\cref{lem:correctness_pdsccs} guarantees that the components of $\PrimCompDS$ are the PDSCCs). 

We now turn our attention to \cref{item:make_primary_lemma_edges}.
Note that the algorithm modifies the $\Edges$ lists of nodes of two types:~
(i)~roots of the potentially affected DSCCs (those stored in $\AffectedDSCCs$), in \cref{line:algo_makeprimary_insert_y_in_edges_r} and
(ii)~  roots of the unaffected DSCCs that have $\CloseParenthesisBeta$-edges to nodes of affected DSCCs, in \cref{line:algo_makeprimary_insert_t_in_edges_x,line:algo_makeprimary_insert_t_in_edges_x}.

First, consider \cref{item:make_primary_lemma_edges_sound}.
\begin{enumerate}

\item If $x$ is the root node of a potentially affected DSCC (case (i)), $y$ must have been added to $\Edges[x][\CloseParenthesisBeta]$ in \cref{line:algo_makeprimary_insert_y_in_edges_r} (where $x=r$ in the pseudocode), as $\Edges[x][\CloseParenthesisBeta]$ was reinitialized to an empty list earlier in \cref{line:makeprimary_reinitialize_edges_of_affected_nodes}.
But then $y$ appears in $\OutEdges[z][\CloseParenthesisBeta]$ for some node $z \in \DSCC(x)$ ($z=t$ in the pseudocode, \cref{line:first}) 
and thus we have an edge $z\DTo{\CloseParenthesisBeta}y$ as desired (here we obtain $y=w$). 

\item On the other hand, if $x$ is the root of an unaffected DSCC that has $\CloseParenthesisBeta$-edges to nodes of affected DSCCs (case (ii)), $y$ must have been added to $\Edges[x][\CloseParenthesisBeta]$ in \cref{line:algo_makeprimary_insert_t_in_edges_x} (where $y=t$ in the pseudocode), as $\Edges[x][\CloseParenthesisBeta]$ was reinitialized to an empty list earlier in \cref{line:algo_makeprimary_unaffected}.
But then there exists a node $z \in \DSCC(x)$ (\cref{line:algo_makeprimary_rootoft}, $z=s$ in the pseudocode) such that $z \in \InEdges[y][\CloseParenthesisBeta]$ (\cref{line:algo_makeprimary_iterateoverpdsccinedges}).  
Hence, by \cref{lem:soundness_inedges}, we have an edge $z\DTo{\CloseParenthesisBeta}y$ as desired (here we obtain $y=w$).

 \item Finally, if $x$ is neither in one of the two types (i) or (ii) above, then it is part of an unaffected $\DSCC$ which  does not have an edge 
entering an affected $\DSCC$. Since these $\DSCC$s are untouched by the algorithm,  
the statement holds by the induction hypothesis. In this case, we have the guarantees from the $\Fixpoint$ computation. 
In this case, we can have $y \neq w$, which comes from \cref{line:append2} in $\Fixpoint$.

\end{enumerate}
Second, consider \cref{item:make_primary_lemma_edges_complete}.
\begin{enumerate}
\item If $x$ is a node of a potentially affected DSCC (case (i)), when \cref{line:algo_makeprimary_edgelist_of_pdsccs} is executed for some node $z$ in the component of $x$ ($z=t,x=r$ in the pseudocode), since we have an edge $z\DTo{\CloseParenthesisBeta}w$, the if condition will succeed at least once, inserting a suitable node $y$ in $\Edges[x][\CloseParenthesisBeta]$.
\item On the other hand, if $x$ is the root of an unaffected DSCC that has $\CloseParenthesisBeta$-edges to nodes of affected DSCCs (case (ii)). 
By definition, all these nodes having the incoming 
$\CloseParenthesisBeta$ edge  will belong to the same PDSCC. 
By \cref{lem:completeness_inedges}, one such node $t$ from the PDSCC will have $s \in\InEdges[t][\CloseParenthesisBeta]$ where 
$s \in \DSCC(x)$. Then \cref{line:algo_makeprimary_insert_t_in_edges_x} will insert $y$ to $\Edges[x][\CloseParenthesisBeta$] for $y$ being one of these nodes $t$.
\item If $x$ is neither in one of the two types (i) or (ii) above, then as seen above, 
the statement holds by the induction hypothesis.
\end{enumerate}

\end{proof}

\lemcorrectness*
\begin{proof}
We argue that the statement holds after each $\InsertEdge(u,v,\CloseParenthesis)$ and $\DeleteEdge(u,v, \CloseParenthesis)$ operation.

\SubParagraph{Edge insertions.}
Consider the processing of an operation $\InsertEdge(u,v,\CloseParenthesis)$.
The $\Fixpoint()$ function (\cref{algo:offlinealgo}) guarantees that, upon termination, $\DisjointSets$ represents the DSCCs of the input graph,
while for every node $x$ that is the representative node of some set in $\DisjointSets$,
$\Edges[x][\CloseParenthesisBeta]$ either is empty, if there are no edges $\DSCC(x)\DTo{\CloseParenthesisBeta}\cdot$, or has a exactly one node $y$ such that $\DSCC(x)\DTo{\CloseParenthesisBeta}y$
(note that all other nodes $z$ for which we also have $\DSCC(x)\DTo{\CloseParenthesisBeta}z$ belong to $\DSCC(z)$, and thus in the same set of $\DisjointSets$).

Now, when processing $\InsertEdge(u,v,\CloseParenthesis)$, \cref{algo:insert_edge} inserts $v$ in $\Edges[x][\CloseParenthesis]$,
where $x=\DisjointSets.\Find(u)$ is the representative of $\DSCC(u)$ \cref{line:algo_insert_insert_edges}.
The new edge $u\DTo{\CloseParenthesis}v$ leads to the (potential) merge of $\DSCC(v)$ and $\DSCC(z)$, where $z$ is another node with $\DSCC(z)\neq \DSCC(v)$ and $\DSCC(u)\DTo{\CloseParenthesis}z$.
The condition $|\Edges[x][\CloseParenthesis]|\geq 2$ will further insert $(x,\CloseParenthesis)$ in $\Queue$ and trigger a new fixpoint computation,
and the lemma then follows from the correctness of $\Fixpoint()$ (as established in~\cite{Chatterjee18}).

\SubParagraph{Edge deletions.}
Consider the processing of an operation $\DeleteEdge(u,v,\CloseParenthesis)$.
\cref{lem:make_primary_correctness} guarantees at that at the end of $\MakePrimary()$, every component in $\DisjointSets$ is a DSCC, while edges between components represented in the $\Edges$ data structure capture in a sound and complete way edges between nodes in the corresponding components.
Finally, \cref{line:algo_makeprimary_inserttoqueue} of $\MakePrimary()$ inserts in $\Queue$  all the pairs $(x,\CloseParenthesisBeta)$ that can trigger new fixpoint steps, hence after the call to $\Fixpoint()$ in \cref{line:algo_delete_fixpoint} of \cref{algo:delete_edge} has been completed, $\DisjointSets$ correctly represents the DSCCs of the updated graph.
\end{proof}

\section{A Note on Earlier Approaches}\label{sec:mistakes}

The setting of dynamic bidirected Dyck reachability was studied recently in~\cite{Li2022}.
Unfortunately, that approach suffers correctness and complexity issues, which we illustrate here.

\Paragraph{Complexity counterexamples.}
The approach developed in~\cite{Li2022} claims a running time of $O(n\cdot \alpha(n))$ for a graph of $n$ nodes.
As we show here, that statement is wrong:~the proposed algorithm can take $\Omega(n^2)$ time for a single edge deletion.
At close inspection, there are two independent parts of the deletion algorithm that can exhibit this quadratic bound, and the complexity analysis fails to account for both.
Here we illustrate these counterexamples (\cref{fig:mistake_complexity_graphs}) and the runtime behavior that the tool accompanying~\cite{Li2022} has on them (\cref{fig:mistake_complexity_plots}).

\begin{figure}
\scalebox{0.9}{%
\begin{subfigure}[b]{0.35\textwidth}
\centering
\begin{tikzpicture}[thick, >=latex, node distance=0.3cm and 1cm,
pre/.style={<-,shorten >= 1pt, shorten <=1pt,},
post/.style={->,shorten >= 1pt, shorten <=1pt,},
und/.style={very thick, draw=gray},
node/.style={circle, minimum size=4.5mm, draw=black!100, fill=white!100, thick, inner sep=0},
virt/.style={circle,draw=black!50,fill=black!20, opacity=0}]

\newcommand{\xdisposition}{0}
\newcommand{\ydisposition}{0}
\newcommand{\xstep}{1.4}
\newcommand{\ystep}{1.2}
\def\bend{20}

\node	[node]		(u)	at(-1*\xstep,-1.5*\ystep) {$u$};

\node	[node]		(a1)	at(0*\xstep,0*\ystep) {$a_{1}$};
\node	[]		(adots)	at(1*\xstep,0*\ystep) {$\dots$};
\node	[node]		(an)	at(2*\xstep,0*\ystep) {$a_{n}$};

\node	[node]		(b1)	at(0*\xstep,-1*\ystep) {$b_{1}$};
\node	[]		(bdots)	at(1*\xstep,-1*\ystep) {$\dots$};
\node	[node]		(bn)	at(2*\xstep,-1*\ystep) {$b_{n}$};

\node	[node]		(c1)	at(0*\xstep,-2*\ystep) {$c_{1}$};
\node	[]		(cdots)	at(1*\xstep,-2*\ystep) {$\dots$};
\node	[node]		(cn)	at(2*\xstep,-2*\ystep) {$c_{n}$};

\node	[node]		(d1)	at(0*\xstep,-3*\ystep) {$d_{1}$};
\node	[]		(ddots)	at(1*\xstep,-3*\ystep) {$\dots$};
\node	[node]		(dn)	at(2*\xstep,-3*\ystep) {$d_{n}$};

\draw[post] (a1) to node[left] {$\CloseParenthesis$} (b1);
\draw[post] (a1) to node[left] {$\CloseParenthesis$} (bdots);
\draw[post] (a1) to node[left] {$\CloseParenthesis$} (bn);
\draw[post] (an) to node[left] {$\CloseParenthesis$} (b1);
\draw[post] (an) to node[left] {$\CloseParenthesis$} (bdots);
\draw[post] (an) to node[left] {$\CloseParenthesis$} (bn);

\draw[post] (d1) to node[left] {$\CloseParenthesis$} (c1);
\draw[post] (d1) to node[left] {$\CloseParenthesis$} (cdots);
\draw[post] (d1) to node[left] {$\CloseParenthesis$} (cn);
\draw[post] (dn) to node[left] {$\CloseParenthesis$} (c1);
\draw[post] (dn) to node[left] {$\CloseParenthesis$} (cdots);
\draw[post] (dn) to node[left] {$\CloseParenthesis$} (cn);

%

\draw[post] (u) to node[above] {$\CloseParenthesis$} (b1);
\draw[post, dashed] (u) to node[below] {$\CloseParenthesis$} (c1);

\begin{pgfonlayer}{bg}
\node[box, very thick, rounded corners, draw=mybluecolor, dotted, fill=mybluecolor!10, fit=(b1)(bdots)(bn)(c1)(cdots)(cn)] (C1) {};
\end{pgfonlayer}

\end{tikzpicture}
\caption{
$\DeleteEdge(u,c_1,\CloseParenthesis)$
\label{subfig:mistake_complexity_graphs_dense}
}
\end{subfigure}
}
\hfill
\scalebox{0.9}{%
\begin{subfigure}[b]{0.6\textwidth}
\centering
\begin{tikzpicture}[thick, >=latex, node distance=0.4cm and 2cm,tight background,
pre/.style={<-,shorten >= 1pt, shorten <=1pt, thick},
post/.style={->,shorten >= 1pt, shorten <=1pt,  thick},
und/.style={very thick, draw=gray},
node/.style={circle, minimum size=4.5mm, draw=black!100, fill=white!100, thick, inner sep=0},
virt/.style={circle,draw=black!50,fill=black!20, opacity=0}]

\newcommand{\xdisposition}{0}
\newcommand{\ydisposition}{0}
\newcommand{\xstep}{1.5}
\newcommand{\ystep}{1}
\def\bend{20}

\node[] at (-0.5*\xstep, 3.7*\ystep) {};
\node[] at (5.3*\xstep, -1.2*\ystep) {};

\begin{scope}[shift={(0*\xstep,0*\ystep])}]

\node	[node]		(u)	at	(0*\xstep,0*\ystep) {$u$};
\node	[node]		(a1) at	(1*\xstep,0.5*\ystep)	 {$a_1$};
\node	[node]		(b1) at	(1*\xstep,-0.5*\ystep)	 {$b_1$};
\node	[node]		(a2) at	(2*\xstep,0.5*\ystep)	 {$a_2$};
\node	[node]		(b2) at	(2*\xstep,-0.5*\ystep)	 {$b_2$};
\node	[]		(adots) at	(3*\xstep,0.5*\ystep)	 {$\dots$};
\node	[]		(bdots) at	(3*\xstep,-0.5*\ystep)	 {$\dots$};
\node	[node]		(an) at	(4*\xstep,0.5*\ystep)	 {$a_n$};
\node	[node]		(bn) at	(4*\xstep,-0.5*\ystep)	 {$b_n$};

\node[node] (v) at (2.5*\xstep, -3*\ystep) {$v$};
\node[node] (c1) at (1*\xstep, -2*\ystep) {$c_1$};
\node[node] (c2) at (2*\xstep, -2*\ystep) {$c_2$};
\node[] (cdots) at (3*\xstep, -2*\ystep) {$\dots$};
\node[node] (cn) at (4*\xstep, -2*\ystep) {$c_n$};

\draw[post, \ColorAlpha] (v) to node[left] {$\CloseParenthesis$} (c1);
\draw[post, \ColorAlpha] (v) to node[left] {$\CloseParenthesis$} (c2);
\draw[post, \ColorAlpha] (v) to node[left] {$\CloseParenthesis$} (cdots);
\draw[post, \ColorAlpha] (v) to node[left] {$\CloseParenthesis$} (cn);

\draw[post, \ColorAlpha] (u) to node[above] {$\CloseParenthesis$} (a1);
\draw[post, \ColorAlpha, dashed] (u) to node[below] {$\CloseParenthesis$} (b1);

\draw[post, \ColorAlpha] (a1) to node[above] {$\CloseParenthesis$} (a2);
\draw[post, \ColorAlpha] (b1) to node[above] {$\CloseParenthesis$} (b2);
\draw[post, \ColorAlpha] (a2) to node[above] {$\CloseParenthesis$} (adots);
\draw[post, \ColorAlpha] (b2) to node[above] {$\CloseParenthesis$} (bdots);
\draw[post, \ColorAlpha] (adots) to node[above] {$\CloseParenthesis$} (an);
\draw[post, \ColorAlpha] (bdots) to node[above] {$\CloseParenthesis$} (bn);

\draw[post, \ColorBeta, bend left=\bend] (a1) to node[right] {$\CloseParenthesisBeta$} (c1);
\draw[post, \ColorBeta, bend right=0] (b1) to node[left] {$\CloseParenthesisBeta$} (c1);

\draw[post, \ColorBeta, bend left=\bend] (a2) to node[right] {$\CloseParenthesisBeta$} (c2);
\draw[post, \ColorBeta, bend right=0] (b2) to node[left] {$\CloseParenthesisBeta$} (c2);

\draw[post, \ColorBeta, bend left=\bend] (an) to node[right] {$\CloseParenthesisBeta$} (cn);
\draw[post, \ColorBeta, bend right=0] (bn) to node[left] {$\CloseParenthesisBeta$} (cn);

\end{scope}

\begin{pgfonlayer}{bg}
\node[box, very thick, rounded corners, draw=mybluecolor, dotted, fill=mybluecolor!10, fit=(a1)(b1)] (S1) {};
\node[box, very thick, rounded corners, draw=mybluecolor, dotted, fill=mybluecolor!10, fit=(a2)(b2)] (S2) {};
\node[box, very thick, rounded corners, draw=mybluecolor, dotted, fill=mybluecolor!10, fit=(an)(bn)] (Sn) {};
\node[box, very thick, rounded corners, draw=mybluecolor, dotted, fill=mybluecolor!10, fit=(c1)(c2)(cdots)(cn)] (SC) {};
\end{pgfonlayer}

\end{tikzpicture}
\caption{
$\DeleteEdge(u,b_1,\CloseParenthesis)$
\label{subfig:mistake_complexity_graphs_sparse}
}
\end{subfigure}
}
\caption{
A family of dense graphs (\protect\subref{subfig:mistake_complexity_graphs_dense}) and a family of sparse graphs (\protect\subref{subfig:mistake_complexity_graphs_sparse}) on which the edge deletion method of~\cite{Li2022} takes quadratic time.
\label{fig:mistake_complexity_graphs}
}
\end{figure}
\begin{figure}
\pgfplotsset{every tick label/.append style={font=\small}}
\def\plotwidth{7cm}
\def\offlinecolor{black!20}
\def\dynamiccolor{black!60}
\scalebox{0.9}{%
\begin{subfigure}[b]{0.475\textwidth}
\begin{tikzpicture}
\begin{axis}[
    xlabel={$n$},
    ylabel={seconds},
    xmin=0, xmax=1700,
    ymin=0, ymax=01,
    ytick={0,0.2,0.4,0.6,0.8,1},
    legend pos=north west,
    ymajorgrids=true,
    grid style=dashed,
    width=\plotwidth,
    legend style={font=\small},
    ylabel near ticks,
    xlabel near ticks,
    legend cell align={left},
]

\addplot [color=blue,mark=*] file {figures/dense_popl22Offline.dat};
\addplot[color=red,mark=square*] file {figures/dense_popl22Dyn.dat};
\legend{\cite[Offline]{Li2022},\cite[Dynamic]{Li2022}}

\end{axis}
\end{tikzpicture}
\caption{
Dense inputs.
\label{subfig:mistake_complexity_plots_dense}
}
\end{subfigure}
}
\hfill
\scalebox{0.9}{%
\begin{subfigure}[b]{0.475\textwidth}
\begin{tikzpicture}
\begin{axis}[
    xlabel={$n$},
    ylabel={seconds},
    xmin=0, xmax=1300,
    ymin=0, ymax=0.09,
    ytick={0,0.02,0.04,0.06,0.08},
    yticklabel style={/pgf/number format/fixed},
    legend pos=north west,
    ymajorgrids=true,
    grid style=dashed,
    width=\plotwidth,
    legend style={font=\small},
    ylabel near ticks,
    xlabel near ticks,    
    legend cell align={left},
]

\addplot[color=blue,mark=*] file {figures/sparse_popl22Offline.dat};
\addplot[color=red,mark=square*] file {figures/sparse_popl22Dyn.dat};

\legend{\cite[Offline]{Li2022},\cite[Dynamic]{Li2022}}
\end{axis}
\end{tikzpicture}
\caption{
Sparse inputs.
\label{subfig:mistake_complexity_plots_sparse}
}
\end{subfigure}
}
\caption{
Running time of the offline and dynamic algorithms of~\cite{Li2022} on the dense graphs of \cref{subfig:mistake_complexity_graphs_dense} (\protect\subref{subfig:mistake_complexity_plots_dense}) and sparse graphs of \cref{subfig:mistake_complexity_graphs_sparse} (\protect\subref{subfig:mistake_complexity_plots_sparse}).
\label{fig:mistake_complexity_plots}
}
\end{figure}

\SubParagraph{Dense inputs.}
\cref{subfig:mistake_complexity_graphs_dense} showcases a family of dense graphs parameterized by $n$ and having $O(n)$ nodes and $\Theta(n^2)$ edges.
Before deleting the edge $u\DTo{\ov{\alpha}}c_1$, $G$ has a DSCC $S=\{b_1,\dots, b_n, c_1,\dots c_n\}$.
After deleting the edge $u\DTo{\ov{\alpha}}c_1$, we obtain two disjoint DSCCs $S_b=\{b_1,\dots, b_n\}$ and $S_c=\{c_1,\dots, c_n\}$.
Running the edge deletion procedure of~\cite{Li2022} takes $\Omega(n^2)$ time.
Intuitively, this quadratic behavior stems from the fact that the algorithm iterates over one of the two sets of edges $\{a_i\DTo{\ov{\alpha}}b_j\}_{i,j\in [n]}$ and $\{d_i\DTo{\ov{\alpha}}c_j\}_{i,j\in [n]}$.
As these are quadratically many, the edge-deletion procedure takes quadratic time, as shown in the plot of \cref{subfig:mistake_complexity_plots_dense}.

This behavior occurs in \cite[Procedure 4, Lines~12-18]{Li2022}, while~\cite[Lemma~4.16]{Li2022} does not make a thorough complexity analysis of these lines.

\SubParagraph{Sparse inputs.}
\cref{subfig:mistake_complexity_graphs_sparse} showcases a family of sparse graphs parameterized by $n$ and having $O(n)$ nodes and $\Theta(n)$ edges.
Before deleting the edge $u\DTo{\CloseParenthesis}b_1$, we have a sequence of DSCCs $S_i=\{a_i, b_i\}_{i\in[n]}$.
Intuitively, we form  $S_1$ due to the edges $u\DTo{\CloseParenthesis}a_1$ and $u\DTo{\CloseParenthesis}b_1$, and each $S_i$, for $i>1$, connects $a_i$ and $b_i$ via a path that goes through $S_{i-1}$.
Moreover, we have an independent DSCC $S=\{c_1,\dots, c_n\}$, established via multiple two-edged paths going through $v$.
Upon deleting $u\DTo{\CloseParenthesis}b_1$, the algorithm of~\cite{Li2022} splits $S_1$, which in turn splits $S_2$ and so on, up to $S_n$.
Due to the edges $a_i\DTo{\ov{\beta}}c_i$ and $b_i\DTo{\ov{\beta}}c_i$, after splitting each $S_i$, the algorithm also processes the $\DSCC$ $S=\{c_1,\dots, c_n\}$, and checks whether it has to be split.
Naturally, it discovers that $S$ should not be split, due to the presence of the edges $v\DTo{\CloseParenthesis}c_i$.
However, each attempt to split $S$ requires time that is proportional to its size, i.e., $\Omega(n)$ time.
Since this process repeats after splitting each of the $n$ DSCCS $S_i$, the algorithm takes quadratic time on this graph as well, as shown in the plot of \cref{subfig:mistake_complexity_plots_sparse}.

Note that here the vanilla offline algorithm takes $O(n\cdot \alpha(n))$ time for handling this edge deletion.
Hence, for sparse graphs, the dynamic algorithm of~\cite{Li2022} can even become $n$ times slower than the vanilla offline algorithm, which is evident in \cref{subfig:mistake_complexity_plots_sparse}.

This behavior occurs in \cite[Procedure 4, Lines~1-6 and Line~27]{Li2022}, while~\cite[Lemma~4.16]{Li2022} does not make a thorough complexity analysis of these lines.

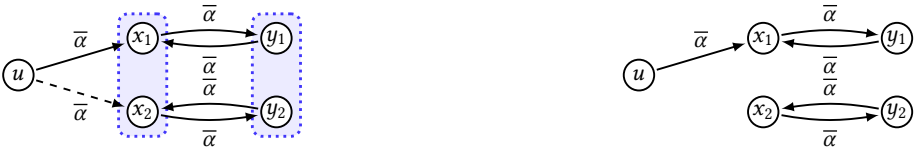
\begin{figure}
\scalebox{0.9}{%
\begin{subfigure}[b]{0.45\textwidth}
\centering
\begin{tikzpicture}[thick, >=latex, node distance=0.2cm and 1.5cm,
pre/.style={<-,shorten >= 1pt, shorten <=1pt, thick},
post/.style={->,shorten >= 1pt, shorten <=1pt,  thick},
und/.style={very thick, draw=gray},
node/.style={circle, minimum size=4.5mm, draw=black!100, fill=white!100, thick, inner sep=0},
virt/.style={circle,draw=black!50,fill=black!20, opacity=0}]

\newcommand{\xdisposition}{0}
\newcommand{\ydisposition}{0}
\newcommand{\xstep}{1.5}
\newcommand{\ystep}{1}
\def\bend{20}

\node	[node]		(u)	at(0,0) {$u$};
\node	[node, above right=of u]		(x1)	 {$x_1$};
\node	[node, below right=of u]		(x2)	 {$x_2$};

\node	[node, right=of x1]		(y1)	 {$y_1$};
\node	[node, right=of x2]		(y2)	 {$y_2$};

\draw[post] (u) to node[above] {$\CloseParenthesis$} (x1);§
\draw[post, dashed] (u) to node[below] {$\CloseParenthesis$} (x2);

\draw[post, bend left=10] (x1) to node[above] {$\CloseParenthesis$} (y1);
\draw[post, bend right=10] (x2) to node[below] {$\CloseParenthesis$} (y2);

\draw[post, bend left=10] (y1) to node[below] {$\CloseParenthesis$} (x1);
\draw[post, bend right=10] (y2) to node[above] {$\CloseParenthesis$} (x2);

\begin{pgfonlayer}{bg}
\node[box, very thick, rounded corners, draw=mybluecolor, dotted, fill=mybluecolor!10, fit=(x1)(x2)] {};
\node[box, very thick, rounded corners, draw=mybluecolor, dotted, fill=mybluecolor!10, fit=(y1)(y2)] {};
\end{pgfonlayer}

\end{tikzpicture}
\end{subfigure}
}
\hfill
\scalebox{0.9}{%
\begin{subfigure}[b]{0.45\textwidth}
\centering
\begin{tikzpicture}[thick, >=latex, node distance=0.2cm and 1.5cm,
pre/.style={<-,shorten >= 1pt, shorten <=1pt, },
post/.style={->,shorten >= 1pt, shorten <=1pt,  },
und/.style={very thick, draw=gray},
node/.style={circle, minimum size=4.5mm, draw=black!100, fill=white!100, thick, inner sep=0},
virt/.style={circle,draw=black!50,fill=black!20, opacity=0}]

\newcommand{\xdisposition}{0}
\newcommand{\ydisposition}{0}
\newcommand{\xstep}{1.5}
\newcommand{\ystep}{1}
\def\bend{20}

\node	[node]		(u)	at(0,0) {$u$};
\node	[node, above right=of u]		(x1)	 {$x_1$};
\node	[node, below right=of u]		(x2)	 {$x_2$};

\node	[node, right=of x1]		(y1)	 {$y_1$};
\node	[node, right=of x2]		(y2)	 {$y_2$};

\draw[post] (u) to node[above] {$\CloseParenthesis$} (x1);

\draw[post, bend left=10] (x1) to node[above] {$\CloseParenthesis$} (y1);
\draw[post, bend right=10] (x2) to node[below] {$\CloseParenthesis$} (y2);

\draw[post, draw=black, bend left=10] (y1) to node[below] {$\CloseParenthesis$} (x1);
\draw[post, draw=black, bend right=10] (y2) to node[above] {$\CloseParenthesis$} (x2);


\end{tikzpicture}
\end{subfigure}
}
\caption{
A graph $G$ where $\{x_1, x_2\}$ and $\{y_1, y_2\}$ are inter-reachable (left).
After deleting the edge $u\DTo{\CloseParenthesis}x_2$, every pair of nodes are not inter-reachable (right).
\label{fig:mistake_correctness}
}
\end{figure}

\Paragraph{Correctness counterexample.}
\cref{fig:mistake_correctness} showcases a simple example in which~\cite{Li2022} gives an incorrect answer.
Initially (left), $G$ contains the two DSCCs $S_x=\{x_1, x_2\}$ and $S_y=\{y_1, y_2\}$, where $S_y$ is formed after $S_x$ has been formed.
Deleting the edge $u\DTo{\CloseParenthesis}x_2$ from $G$ (right) breaks $S_x$ into singleton DSCCs $\{x_1\}$ and $\{x_2\}$, which, in turn, breaks $S_y$ into singleton DSCCs $\{y_1\}$ and $\{y_2\}$.
Instead,~\cite{Li2022} incorrectly returns that $S_x$ and $S_y$ are present even after deleting $u\DTo{\CloseParenthesis}x_2$.
Intuitively, this occurs because the algorithm fails to break $S_x$, as it incorrectly assumes the existence of a path $x_1\DTo{\alpha}y_1\DPath{}y_2\DTo{\ov{\alpha}} x_2$.
This, however, is not true, as the inner sub-path $y_1\DPath{}y_2$ is dependent on $x_1$ and $x_2$ being already connected.
This behavior occurs in~\cite[Lines~1-8]{Li2022}, while \cite[Lemma~4.8]{Li2022} incorrectly argues that the early termination (happening in Line 8) implies the existence of an alternative path (keeping $x_1$ and $x_2$ connected).
The behavior is also reproducible by the accompanying tool.
\section{Experimental Details}\label{sec:app_experiments}
\cref{tab:data_dependence_analysis_statistics} and \cref{tab:alias_analysis_statistics} shows the statistics of the graphs extracted from the benchmarks, which includes the number of nodes, number of edges and number of parenthesis types in the full graph. 
Parentheses correspond to calling contexts in \cref{tab:data_dependence_analysis_statistics} and to fields in \cref{tab:alias_analysis_statistics}.
We also tabulate the sequence length of 90\%-10\% split presented in \cref{sec:experiments} in the mentioned tables.
\begin{table}[t]
\def\RatioDataDependence{.475}
\def\RatioAlias{.475}
\setlength\tabcolsep{4.5pt}
\footnotesize
\centering
\begin{minipage}{\RatioDataDependence\textwidth}
\caption{   \label{tab:data_dependence_analysis_statistics}
 Data Dependence Analysis}
\centering
\begin{tabular}{@{}c c c c c@{}}
\hline\hline
 Name   &   $\#$Nodes   &  $\#$Edges   &  $\#$Labels  &    Seq. Length\\
\hline
btree & 1811 & 1757 & 801 & 1581 \\
check & 5240 & 5267 & 2167 & 4740 \\
compiler & 4189 & 4101 & 1646 & 3690 \\
compress & 4375 & 4238 & 1721 & 3814 \\
crypto & 6202 & 6300 & 2540 & 5670 \\
derby & 6116 & 5948 & 2358 & 5353 \\
helloworld & 4074 & 3969 & 1596 & 3572 \\
mpegaudio & 9650 & 9391 & 3564 & 8451 \\
mushroom & 899 & 809 & 376 & 728 \\
parser & 1686 & 1561 & 690 & 1404 \\
sample & 931 & 834 & 389 & 750 \\
scimark & 4583 & 4429 & 1782 & 3986 \\
startup & 5493 & 5367 & 2165 & 4830 \\
sunflow & 3891 & 3792 & 1520 & 3412 \\
xml & 23922 & 24391 & 9128 & 21951 \\
\hline
\end{tabular}
\end{minipage}
\hfill
\begin{minipage}{\RatioAlias\textwidth}
  \caption{  \label{tab:alias_analysis_statistics}
 Alias Analysis}
\centering
\begin{tabular}{@{}c c c c c@{}}
\hline\hline
Name   &   $\#$Nodes   &  $\#$Edges   &  $\#$Labels  &    Seq. Length\\
\hline
antlr & 23031 & 21353 & 1246 & 19217 \\
bloat & 26656 & 23598 & 1360 & 21238 \\
chart & 51356 & 44501 & 3132 & 40050 \\
eclipse & 24004 & 21943 & 1346 & 19748 \\
fop & 46253 & 39125 & 2857 & 35212 \\
hsqldb & 21646 & 20271 & 1160 & 18243 \\
jython & 28033 & 24889 & 1398 & 22400 \\
luindex & 22631 & 20915 & 1228 & 18823 \\
lusearch & 23344 & 21569 & 1275 & 19412 \\
pmd & 24586 & 22522 & 1322 & 20269 \\
xalan & 21574 & 20186 & 1152 & 18167 \\
\hline
\end{tabular}

\end{minipage}

\end{table}

\cref{fig:experiments_suite1_80_20} and \cref{fig:experiments_suite3_100_0} show experimental results on update sequences constructed similarly to those in \cref{sec:experiments}, but with a 80\%-20\% split (\cref{fig:experiments_suite1_80_20}) and 100\%-0\% split (\cref{fig:experiments_suite3_100_0}), as opposed to the 90\%-10\% split presented in \cref{sec:experiments}.
We observe that all the conclusions made in \cref{sec:experiments} also hold in these alternative settings.

\begin{figure}
\captionsetup[subfigure]{aboveskip=-1pt,belowskip=-1pt}

\pgfplotsset{very tick label/.append style={font=\small}}
\def\RatioDataDependence{.58}
\def\PlotWidthDataDependence{9cm}
\def\PlotWidthAlias{6.5cm}
\def\RatioAlias{.38}

\def\subfigureTextWidthDataDependence{\NumBenchmarksDataDependence/\NumBenchmarksTotal}
\def\subfigureTextWidthAlias{\NumBenchmarksAlias/\NumBenchmarksTotal}

\def\plotheight{4cm}
\def\barwidth{2.5pt}
\def\bardistance{1pt}
\def\scaleboxvalue{0.9}

\def\offlinecolor{black!10}
\def\ddLogcolor{black!50}
\def\dynamiccolor{black!80}

\begin{subfigure}[T]{\RatioDataDependence\textwidth}
\scalebox{\scaleboxvalue}{%
\begin{tikzpicture}[tight background, inner sep=2pt,]
\begin{axis}[inner sep=2pt,
		title={\large \underline{Data Dependence Analysis}},
    ybar=\bardistance,
		enlarge y limits={abs=0.6cm,upper},
		enlarge x limits={abs=8pt},
		legend style={legend pos=north west, draw=black, legend columns=-1,/tikz/every even column/.append style={column sep=0.2cm}},
		ylabel near ticks,
    ylabel={microsecs},
		xtick pos=left,
    xtick={1,2,3,4,5,6,7,8,9,10,11,12,13,14,15},
    xticklabels={btree,check,compiler,compress,crypto,derby,helloworld,mpegaudio,mushroom,parser,sample,scimark,startup,sunflow,xml},
    ymajorgrids=true,
    grid style=dashed,
		width=\PlotWidthDataDependence,
		height=\plotheight,
    legend style={font=\footnotesize},
    xticklabel style={font=\footnotesize,rotate=45,anchor=east},
    yticklabel style={font=\footnotesize},
    ylabel style={font=\footnotesize},
    bar width=\barwidth,
    ymode=log,
    log origin=infty,
]

\addplot [fill=\offlinecolor] file {figures/suite1_80_20_runtime/expt_datadep_offline_incr.dat};
\addplot [fill=\ddLogcolor] file {figures/suite1_80_20_runtime/expt_datadep_dyn_incr_ddlog.dat};
\addplot[fill=\dynamiccolor] file {figures/suite1_80_20_runtime/expt_datadep_dyndyck_incr.dat};
\legend{Offline,DDlog,Dynamic}

\end{axis}
\end{tikzpicture}
}%
\caption{\label{subfig:datadependence_all_incr_suite1_80_20}
Incremental updates.
}
\end{subfigure}
\hspace*{\fill}
\begin{subfigure}[T]{\RatioAlias\textwidth}
\scalebox{\scaleboxvalue}{%
\begin{tikzpicture}[tight background]
\begin{axis}[inner sep=2pt,,
		title={\large \underline{Alias Analysis}},
    ybar=\bardistance,
		enlarge y limits={abs=0.6cm,upper},
		enlarge x limits={abs=8pt},
		legend style={legend pos=north west, draw=black, legend columns=-1,/tikz/every even column/.append style={column sep=0.2cm}},
		ylabel near ticks,
    ylabel={microsecs},
		xtick pos=left,
    xtick={1,2,3,4,5,6,7,8,9,10,11},
    xticklabels={antlr, bloat, chart, eclipse, fop, hsqldb, jython, luindex, lusearch, pmd, xalan},
    ymajorgrids=true,
    grid style=dashed,
		width=\PlotWidthAlias,
		height=\plotheight,
    legend style={font=\footnotesize},
    xticklabel style={font=\footnotesize,rotate=45,anchor=east},
    yticklabel style={font=\footnotesize},
    ylabel style={font=\footnotesize},
    bar width=\barwidth,
    ymode=log,
    log origin=infty,
]

\addplot [fill=\offlinecolor] file {figures/suite1_80_20_runtime/expt_alias_offline_incr.dat};
\addplot [fill=\ddLogcolor] file {figures/suite1_80_20_runtime/expt_alias_dyn_incr_ddlog.dat};
\addplot[fill=\dynamiccolor] file {figures/suite1_80_20_runtime/expt_alias_dyndyck_incr.dat};
\legend{Offline,DDlog,Dynamic}

\end{axis}
\end{tikzpicture}
}%
\vspace{7pt}
\caption{\label{subfig:alias_all_incr_suite1_80_20}
Incremental updates.
}
\end{subfigure}
\\[1em]
\begin{subfigure}[T]{\RatioDataDependence\textwidth}
\scalebox{\scaleboxvalue}{%
\begin{tikzpicture}[tight background, inner sep=2pt,]
\begin{axis}[inner sep=2pt,
    ybar=\bardistance,
		enlarge y limits={abs=0.6cm,upper},
		enlarge x limits={abs=8pt},
		legend style={legend pos=north west, draw=black, legend columns=-1,/tikz/every even column/.append style={column sep=0.2cm}},
		ylabel near ticks,
    ylabel={microsecs},
		xtick pos=left,
    xtick={1,2,3,4,5,6,7,8,9,10,11,12,13,14,15},
    xticklabels={btree,check,compiler,compress,crypto,derby,helloworld,mpegaudio,mushroom,parser,sample,scimark,startup,sunflow,xml},
    ymajorgrids=true,
    grid style=dashed,
		width=\PlotWidthDataDependence,
		height=\plotheight,
    legend style={font=\footnotesize},
    xticklabel style={font=\footnotesize,rotate=45,anchor=east},
    yticklabel style={font=\footnotesize},
    ylabel style={font=\footnotesize},
    bar width=\barwidth,
    ymode=log,
    log origin=infty,
]

\addplot [fill=\offlinecolor] file {figures/suite1_80_20_runtime/expt_datadep_offline_decr.dat};
\addplot [fill=\ddLogcolor] file {figures/suite1_80_20_runtime/expt_datadep_dyn_decr_ddlog.dat};
\addplot[fill=\dynamiccolor] file {figures/suite1_80_20_runtime/expt_datadep_dyndyck_decr.dat};
\legend{Offline,DDlog,Dynamic}

\end{axis}
\end{tikzpicture}
}%
\caption{\label{subfig:datadependence_all_decr_suite1_80_20}
Decremental updates.
}

\end{subfigure}
\hspace*{\fill}
\begin{subfigure}[T]{\RatioAlias\textwidth}
\scalebox{\scaleboxvalue}{%
\begin{tikzpicture}[tight background]
\begin{axis}[inner sep=2pt,,
    ybar=\bardistance,
		enlarge y limits={abs=0.6cm,upper},
		enlarge x limits={abs=8pt},
		legend style={legend pos=north west, draw=black, legend columns=-1,/tikz/every even column/.append style={column sep=0.2cm}},
		ylabel near ticks,
    ylabel={microsecs},
		xtick pos=left,
    xtick={1,2,3,4,5,6,7,8,9,10,11},
    xticklabels={antlr, bloat, chart, eclipse, fop, hsqldb, jython, luindex, lusearch, pmd, xalan},
    ymajorgrids=true,
    grid style=dashed,
		width=\PlotWidthAlias,
		height=\plotheight,
    legend style={font=\footnotesize},
    xticklabel style={font=\footnotesize,rotate=45,anchor=east},
    yticklabel style={font=\footnotesize},
    ylabel style={font=\footnotesize},
    bar width=\barwidth,
    ymode=log,
    log origin=infty,
]

\addplot [fill=\offlinecolor] file {figures/suite1_80_20_runtime/expt_alias_offline_decr.dat};
\addplot [fill=\ddLogcolor] file {figures/suite1_80_20_runtime/expt_alias_dyn_decr_ddlog.dat};
\addplot[fill=\dynamiccolor] file {figures/suite1_80_20_runtime/expt_alias_dyndyck_decr.dat};
\legend{Offline,DDlog,Dynamic}

\end{axis}
\end{tikzpicture}
}%
\vspace{7pt}
\caption{\label{subfig:alias_all_decr_suite1_80_20}
Decremental updates.
}
\end{subfigure}
\\[1em]
\begin{subfigure}[T]{\RatioDataDependence\textwidth}
\scalebox{\scaleboxvalue}{%
\begin{tikzpicture}[tight background, inner sep=2pt,]
\begin{axis}[inner sep=2pt,
    ybar=\bardistance,
		enlarge y limits={abs=0.6cm,upper},
		enlarge x limits={abs=8pt},
		legend style={legend pos=north west, draw=black, legend columns=-1,/tikz/every even column/.append style={column sep=0.2cm}},
		ylabel near ticks,
    ylabel={microsecs},
		xtick pos=left,
    xtick={1,2,3,4,5,6,7,8,9,10,11,12,13,14,15},
    xticklabels={btree,check,compiler,compress,crypto,derby,helloworld,mpegaudio,mushroom,parser,sample,scimark,startup,sunflow,xml},
    ymajorgrids=true,
    grid style=dashed,
		width=\PlotWidthDataDependence,
		height=\plotheight,
    legend style={font=\footnotesize},
    xticklabel style={font=\footnotesize,rotate=45,anchor=east},
    yticklabel style={font=\footnotesize},
    ylabel style={font=\footnotesize},
    bar width=\barwidth,
    ymode=log,
    log origin=infty,
]

\addplot [fill=\offlinecolor] file {figures/suite1_80_20_runtime/expt_datadep_offline_mixed.dat};
\addplot [fill=\ddLogcolor] file {figures/suite1_80_20_runtime/expt_datadep_dyn_mixed_ddlog.dat};
\addplot[fill=\dynamiccolor] file {figures/suite1_80_20_runtime/expt_datadep_dyndyck_mixed.dat};
\legend{Offline,DDlog,Dynamic}

\end{axis}
\end{tikzpicture}
}%
\caption{\label{subfig:datadependence_all_mixed_suite1_80_20}
Mixed updates.
}

\end{subfigure}
\hspace*{\fill}
\begin{subfigure}[T]{\RatioAlias\textwidth}
\scalebox{\scaleboxvalue}{%
\begin{tikzpicture}[tight background]
\begin{axis}[inner sep=2pt,,
    ybar=\bardistance,
		enlarge y limits={abs=0.6cm,upper},
		enlarge x limits={abs=8pt},
		legend style={legend pos=north west, draw=black, legend columns=-1,/tikz/every even column/.append style={column sep=0.2cm}},
		ylabel near ticks,
    ylabel={microsecs},
		xtick pos=left,
    xtick={1,2,3,4,5,6,7,8,9,10,11},
    xticklabels={antlr, bloat, chart, eclipse, fop, hsqldb, jython, luindex, lusearch, pmd, xalan},
    ymajorgrids=true,
    grid style=dashed,
		width=\PlotWidthAlias,
		height=\plotheight,
    legend style={font=\footnotesize},
    xticklabel style={font=\footnotesize,rotate=45,anchor=east},
    yticklabel style={font=\footnotesize},
    ylabel style={font=\footnotesize},
    bar width=\barwidth,
    ymode=log,
    log origin=infty,
]

\addplot [fill=\offlinecolor] file {figures/suite1_80_20_runtime/expt_alias_offline_mixed.dat};
\addplot [fill=\ddLogcolor] file {figures/suite1_80_20_runtime/expt_alias_dyn_mixed_ddlog.dat};
\addplot[fill=\dynamiccolor] file {figures/suite1_80_20_runtime/expt_alias_dyndyck_mixed.dat};
\legend{Offline,DDlog,Dynamic}

\end{axis}
\end{tikzpicture}
}%
\vspace{7pt}
\caption{\label{subfig:alias_all_mixed_suite1_80_20}
Mixed updates.
}
\end{subfigure}
\caption{\label{fig:experiments_suite1_80_20}
The average time to handle a single update on on-the-fly data-dependence analysis (\subref{subfig:datadependence_all_incr_suite1_80_20}, \subref{subfig:datadependence_all_decr_suite1_80_20}, \subref{subfig:datadependence_all_mixed_suite1_80_20}) and on-the-fly alias analysis (\subref{subfig:alias_all_incr_suite1_80_20}, \subref{subfig:alias_all_decr_suite1_80_20}, \subref{subfig:alias_all_mixed_suite1_80_20}) for sequence files generated with 80-20 split of original graph.
Note that all results are in log-scale.
}
\end{figure}
\begin{figure}
\captionsetup[subfigure]{aboveskip=-1pt,belowskip=-1pt}

\pgfplotsset{very tick label/.append style={font=\small}}
\def\RatioDataDependence{.58}
\def\PlotWidthDataDependence{9cm}
\def\PlotWidthAlias{6.5cm}
\def\RatioAlias{.38}

\def\subfigureTextWidthDataDependence{\NumBenchmarksDataDependence/\NumBenchmarksTotal}
\def\subfigureTextWidthAlias{\NumBenchmarksAlias/\NumBenchmarksTotal}

\def\plotheight{4cm}
\def\barwidth{2.5pt}
\def\bardistance{1pt}
\def\scaleboxvalue{0.9}

\def\offlinecolor{black!10}
\def\ddLogcolor{black!50}
\def\dynamiccolor{black!80}

\begin{subfigure}[T]{\RatioDataDependence\textwidth}
\scalebox{\scaleboxvalue}{%
\begin{tikzpicture}[tight background, inner sep=2pt,]
\begin{axis}[inner sep=2pt,
		title={\large \underline{Data Dependence Analysis}},
    ybar=\bardistance,
		enlarge y limits={abs=0.6cm,upper},
		enlarge x limits={abs=8pt},
		legend style={legend pos=north west, draw=black, legend columns=-1,/tikz/every even column/.append style={column sep=0.2cm}},
		ylabel near ticks,
    ylabel={microsecs},
		xtick pos=left,
    xtick={1,2,3,4,5,6,7,8,9,10,11,12,13,14,15},
    xticklabels={btree,check,compiler,compress,crypto,derby,helloworld,mpegaudio,mushroom,parser,sample,scimark,startup,sunflow,xml},
    ymajorgrids=true,
    grid style=dashed,
		width=\PlotWidthDataDependence,
		height=\plotheight,
    legend style={font=\footnotesize},
    xticklabel style={font=\footnotesize,rotate=45,anchor=east},
    yticklabel style={font=\footnotesize},
    ylabel style={font=\footnotesize},
    bar width=\barwidth,
    ymode=log,
    log origin=infty,
]

\addplot [fill=\offlinecolor] file {figures/suite3_100_0_runtime/expt_datadep_offline_incr.dat};
\addplot [fill=\ddLogcolor] file {figures/suite3_100_0_runtime/expt_datadep_dyn_incr_ddlog.dat};
\addplot[fill=\dynamiccolor] file {figures/suite3_100_0_runtime/expt_datadep_dyndyck_incr.dat};
\legend{Offline,DDlog,Dynamic}

\end{axis}
\end{tikzpicture}
}%
\caption{\label{subfig:datadependence_all_incr_suite3_100_0}
Incremental updates.
}
\end{subfigure}
\hspace*{\fill}
\begin{subfigure}[T]{\RatioAlias\textwidth}
\scalebox{\scaleboxvalue}{%
\begin{tikzpicture}[tight background]
\begin{axis}[inner sep=2pt,,
		title={\large \underline{Alias Analysis}},
    ybar=\bardistance,
		enlarge y limits={abs=0.6cm,upper},
		enlarge x limits={abs=8pt},
		legend style={legend pos=north west, draw=black, legend columns=-1,/tikz/every even column/.append style={column sep=0.2cm}},
		ylabel near ticks,
    ylabel={microsecs},
		xtick pos=left,
    xtick={1,2,3,4,5,6,7,8,9,10,11},
    xticklabels={antlr, bloat, chart, eclipse, fop, hsqldb, jython, luindex, lusearch, pmd, xalan},
    ymajorgrids=true,
    grid style=dashed,
		width=\PlotWidthAlias,
		height=\plotheight,
    legend style={font=\footnotesize},
    xticklabel style={font=\footnotesize,rotate=45,anchor=east},
    yticklabel style={font=\footnotesize},
    ylabel style={font=\footnotesize},
    bar width=\barwidth,
    ymode=log,
    log origin=infty,
]

\addplot [fill=\offlinecolor] file {figures/suite3_100_0_runtime/expt_alias_offline_incr.dat};
\addplot [fill=\ddLogcolor] file {figures/suite3_100_0_runtime/expt_alias_dyn_incr_ddlog.dat};
\addplot[fill=\dynamiccolor] file {figures/suite3_100_0_runtime/expt_alias_dyndyck_incr.dat};
\legend{Offline,DDlog,Dynamic}

\end{axis}
\end{tikzpicture}
}%
\vspace{7pt}
\caption{\label{subfig:alias_all_incr_suite3_100_0}
Incremental updates.
}
\end{subfigure}
\\[1em]
\begin{subfigure}[T]{\RatioDataDependence\textwidth}
\scalebox{\scaleboxvalue}{%
\begin{tikzpicture}[tight background, inner sep=2pt,]
\begin{axis}[inner sep=2pt,
    ybar=\bardistance,
		enlarge y limits={abs=0.6cm,upper},
		enlarge x limits={abs=8pt},
		legend style={legend pos=north west, draw=black, legend columns=-1,/tikz/every even column/.append style={column sep=0.2cm}},
		ylabel near ticks,
    ylabel={microsecs},
		xtick pos=left,
    xtick={1,2,3,4,5,6,7,8,9,10,11,12,13,14,15},
    xticklabels={btree,check,compiler,compress,crypto,derby,helloworld,mpegaudio,mushroom,parser,sample,scimark,startup,sunflow,xml},
    ymajorgrids=true,
    grid style=dashed,
		width=\PlotWidthDataDependence,
		height=\plotheight,
    legend style={font=\footnotesize},
    xticklabel style={font=\footnotesize,rotate=45,anchor=east},
    yticklabel style={font=\footnotesize},
    ylabel style={font=\footnotesize},
    bar width=\barwidth,
    ymode=log,
    log origin=infty,
]

\addplot [fill=\offlinecolor] file {figures/suite3_100_0_runtime/expt_datadep_offline_decr.dat};
\addplot [fill=\ddLogcolor] file {figures/suite3_100_0_runtime/expt_datadep_dyn_decr_ddlog.dat};
\addplot[fill=\dynamiccolor] file {figures/suite3_100_0_runtime/expt_datadep_dyndyck_decr.dat};
\legend{Offline,DDlog,Dynamic}

\end{axis}
\end{tikzpicture}
}%
\caption{\label{subfig:datadependence_all_decr_suite3_100_0}
Decremental updates.
}

\end{subfigure}
\hspace*{\fill}
\begin{subfigure}[T]{\RatioAlias\textwidth}
\scalebox{\scaleboxvalue}{%
\begin{tikzpicture}[tight background]
\begin{axis}[inner sep=2pt,,
    ybar=\bardistance,
		enlarge y limits={abs=0.6cm,upper},
		enlarge x limits={abs=8pt},
		legend style={legend pos=north west, draw=black, legend columns=-1,/tikz/every even column/.append style={column sep=0.2cm}},
		ylabel near ticks,
    ylabel={microsecs},
		xtick pos=left,
    xtick={1,2,3,4,5,6,7,8,9,10,11},
    xticklabels={antlr, bloat, chart, eclipse, fop, hsqldb, jython, luindex, lusearch, pmd, xalan},
    ymajorgrids=true,
    grid style=dashed,
		width=\PlotWidthAlias,
		height=\plotheight,
    legend style={font=\footnotesize},
    xticklabel style={font=\footnotesize,rotate=45,anchor=east},
    yticklabel style={font=\footnotesize},
    ylabel style={font=\footnotesize},
    bar width=\barwidth,
    ymode=log,
    log origin=infty,
]

\addplot [fill=\offlinecolor] file {figures/suite3_100_0_runtime/expt_alias_offline_decr.dat};
\addplot [fill=\ddLogcolor] file {figures/suite3_100_0_runtime/expt_alias_dyn_decr_ddlog.dat};
\addplot[fill=\dynamiccolor] file {figures/suite3_100_0_runtime/expt_alias_dyndyck_decr.dat};
\legend{Offline,DDlog,Dynamic}

\end{axis}
\end{tikzpicture}
}%
\vspace{7pt}
\caption{\label{subfig:alias_all_decr_suite3_100_0}
Decremental updates.
}
\end{subfigure}
\\[1em]
\begin{subfigure}[T]{\RatioDataDependence\textwidth}
\scalebox{\scaleboxvalue}{%
\begin{tikzpicture}[tight background, inner sep=2pt,]
\begin{axis}[inner sep=2pt,
    ybar=\bardistance,
		enlarge y limits={abs=0.6cm,upper},
		enlarge x limits={abs=8pt},
		legend style={legend pos=north west, draw=black, legend columns=-1,/tikz/every even column/.append style={column sep=0.2cm}},
		ylabel near ticks,
    ylabel={microsecs},
		xtick pos=left,
    xtick={1,2,3,4,5,6,7,8,9,10,11,12,13,14,15},
    xticklabels={btree,check,compiler,compress,crypto,derby,helloworld,mpegaudio,mushroom,parser,sample,scimark,startup,sunflow,xml},
    ymajorgrids=true,
    grid style=dashed,
		width=\PlotWidthDataDependence,
		height=\plotheight,
    legend style={font=\footnotesize},
    xticklabel style={font=\footnotesize,rotate=45,anchor=east},
    yticklabel style={font=\footnotesize},
    ylabel style={font=\footnotesize},
    bar width=\barwidth,
    ymode=log,
    log origin=infty,
]

\addplot [fill=\offlinecolor] file {figures/suite3_100_0_runtime/expt_datadep_offline_mixed.dat};
\addplot [fill=\ddLogcolor] file {figures/suite3_100_0_runtime/expt_datadep_dyn_mixed_ddlog.dat};
\addplot[fill=\dynamiccolor] file {figures/suite3_100_0_runtime/expt_datadep_dyndyck_mixed.dat};
\legend{Offline,DDlog,Dynamic}

\end{axis}
\end{tikzpicture}
}%
\caption{\label{subfig:datadependence_all_mixed_suite3_100_0}
Mixed updates.
}

\end{subfigure}
\hspace*{\fill}
\begin{subfigure}[T]{\RatioAlias\textwidth}
\scalebox{\scaleboxvalue}{%
\begin{tikzpicture}[tight background]
\begin{axis}[inner sep=2pt,,
    ybar=\bardistance,
		enlarge y limits={abs=0.6cm,upper},
		enlarge x limits={abs=8pt},
		legend style={legend pos=north west, draw=black, legend columns=-1,/tikz/every even column/.append style={column sep=0.2cm}},
		ylabel near ticks,
    ylabel={microsecs},
		xtick pos=left,
    xtick={1,2,3,4,5,6,7,8,9,10,11},
    xticklabels={antlr, bloat, chart, eclipse, fop, hsqldb, jython, luindex, lusearch, pmd, xalan},
    ymajorgrids=true,
    grid style=dashed,
		width=\PlotWidthAlias,
		height=\plotheight,
    legend style={font=\footnotesize},
    xticklabel style={font=\footnotesize,rotate=45,anchor=east},
    yticklabel style={font=\footnotesize},
    ylabel style={font=\footnotesize},
    bar width=\barwidth,
    ymode=log,
    log origin=infty,
]

\addplot [fill=\offlinecolor] file {figures/suite3_100_0_runtime/expt_alias_offline_mixed.dat};
\addplot [fill=\ddLogcolor] file {figures/suite3_100_0_runtime/expt_alias_dyn_mixed_ddlog.dat};
\addplot[fill=\dynamiccolor] file {figures/suite3_100_0_runtime/expt_alias_dyndyck_mixed.dat};
\legend{Offline,DDlog,Dynamic}

\end{axis}
\end{tikzpicture}
}%
\vspace{7pt}
\caption{\label{subfig:alias_all_mixed_suite3_100_0}
Mixed updates.
}
\end{subfigure}
\caption{\label{fig:experiments_suite3_100_0}
The average time to handle a single update on on-the-fly data-dependence analysis (\subref{subfig:datadependence_all_incr_suite3_100_0}, \subref{subfig:datadependence_all_decr_suite3_100_0}, \subref{subfig:datadependence_all_mixed_suite3_100_0}) and on-the-fly alias analysis (\subref{subfig:alias_all_incr_suite3_100_0}, \subref{subfig:alias_all_decr_suite3_100_0}, \subref{subfig:alias_all_mixed_suite3_100_0}) for sequence files generated with 100-0 split of original graph.
Note that all results are in log-scale.
}
\end{figure}

\end{document}